\newtheorem{theorem}{Theorem}
\newtheorem{lemma}[theorem]{Lemma}
\newtheorem{observation}[theorem]{Observation}
\newtheorem{definition}[theorem]{Definition}
\def\curveP{P}
\def\curveQ{Q}
\def\eps{\varepsilon}
\def\RR{{\mathbb R}}
\def\NN{{\mathbb N}}
\def\XX{{\mathbb X}}
\def\RSpace{{\pazocal{R}}}
\def\ASpace{{\widetilde{\pazocal{R}}}}
\def\Querytime{{T_{Q}}}
\def\Preproctime{{T_{P}}}
\def\Oraclespace{{S_{O}}}
\def\Curvesimpltime{{T_{S}}}
\def\approxcurvesimpl{{\sigma}}
\def\subcurveP{{\tilde{P}}}
\DeclareMathAlphabet{\pazocal}{OMS}{zplm}{m}{n}
\renewcommand{\emph}[1]{\textit{\textbf{#1}}}
\title{Subtrajectory Clustering:\\ Finding Set Covers for Set Systems of Subcurves}
\author{Hugo A. Akitaya\thanks{Department of Computer Science, University of Massachusetts Lowell, USA.} \and Frederik Br\"uning\thanks{Department of Computer Science, University of Bonn, Germany} \and Erin Chambers\thanks{Department of Computer Science, Saint Louis University, USA.} \and Anne Driemel\thanks{Hausdorff Center for Mathematics, University of Bonn, Germany.}}
\begin{document}
\maketitle
\pagestyle{plain}

\begin{abstract}
    We study subtrajectory clustering under the Fréchet distance. Given one or more trajectories, the task is to split the trajectories into several parts, such that the parts have a good clustering structure. We approach this problem via a new set cover formulation, which we think provides a natural formalization of the problem as it is studied in many applications. 
    Given a polygonal curve $P$ with $n$ vertices in fixed dimension,  integers  $k$, $\ell \geq 1$, and a real value $\Delta > 0$, the goal is to find $k$ center curves of complexity at most $\ell$ such that every point on $P$ is covered by a subtrajectory that has small Fréchet distance to one of the $k$ center curves ($\leq \Delta$). 
    In many application scenarios, one is interested in finding clusters of small complexity, which is controlled by the parameter $\ell$.
    Our main result is a bicriterial approximation algorithm: if there exists a solution for given parameters $k$, $\ell$, and $\Delta$, then our algorithm finds a set of $k'$ center curves of complexity at most $\ell$ with covering radius $\Delta'$ with $k' \in O( k \ell^2 \log (k \ell))$, and $\Delta'\leq 19 \Delta$. Moreover, within these approximation bounds, we can minimize $k$ while keeping the other parameters fixed.      
    If $\ell$ is a constant independent of $n$, then, the approximation factor for the number of clusters $k$ is $O(\log k)$ and the approximation factor for the radius $\Delta$ is constant. 
    In this case, the algorithm has expected running time in $ \tilde{O}\left( k m^2 + mn\right)$ and uses space in $O(n+m)$, where $m=\lceil\frac{L}{\Delta}\rceil$ and $L$ is the total arclength of the curve $P$.
\end{abstract}

\thispagestyle{empty}
\clearpage
\setcounter{page}{1}
\section{Introduction}

Trajectories appear in many different applications in the form of recorded sequences of positions of moving objects. A trajectory is usually modelled as a piecewise linear curve by interpolating between two consecutive location measurements. Standard examples include trajectories of migrating animals, sports players on the field, and vehicles in traffic~\cite{acmsurvey20, su2020survey}. Other examples include time series data from sensor measurements tracking the movement of a hand for gesture analysis~\cite{Qiao2017RealtimeHG}, or the focal point of attention during eye tracking \cite{ holmqvist2011eye, duchowski2002breadth}.
One particular question in trajectory analysis which has gotten much attention relates to clustering this type of data; typically, one wishes to extract patterns 
that summarize the data well. 
This necessitates a notion of similarity (or dissimilarity) to compare and evaluate simplified representations of curves.  The Fr\'echet distance is one such measure, which in addition to geometric closeness also takes the flow of the curve into account; see Section~\ref{sec:prelims} for the precise definition.

In this paper, we consider the problem of subtrajectory clustering. The main difference to standard trajectory clustering is that the input curves may be broken into subcurves by the clustering algorithm. Indeed, this approach is well motivated, as trajectory data is often collected over longer periods of time, and the start and ending of the trajectories often do not carry any particular meaning. In a sense, then, any particular trajectory might naturally break down into subtrajectories, for example when a car's route involves several independent stops as opposed to a single continuous trip.  A goal of subtrajectory clustering is  to find patterns within the trajectory data and to let an algorithm find the starting and ending points of these patterns by means of solving an optimization problem. 
There is much work on different variants of this subtrajectory clustering problem and many heuristics have been proposed, 
see also the surveys \cite{yuan2017review, BuchinW20, wang2021survey} and references therein. 
However, there does not seem to be a rigorous and commonly agreed upon definition of the underlying optimization problem. 

The purpose of this paper is to propose a class of problems that capture the nature of the subtrajectory clustering problem and provide algorithmic solutions with provable guarantees. Our work draws from ideas and techniques developed in works on the $(k,\ell)$-clustering variant for trajectories \cite{DriemelKS16,Buchin2019soda,nath2020k,Buchin2021soda}, where the complexity of centers is restricted by a parameter. We develop algorithmic techniques that build upon fundamental work on computing hitting sets of set systems for low VC-dimension.  In particular, we use the set cover framework algorithm by Br\"onniman and Goodrich~\cite{bronnimann1995almost}. This framework algorithm is related to the multiplicative weights update method~\cite{v008a006} and has been used in numerous applications. In computational geometry, it has been used for projective clustering~\cite{AGARWAL2003115} and the art gallery problem~\cite{gonzalesbanos2001}. It is related to Clarkson's algorithm for linear programming~\cite{clarkson1995vegas}, which predates it, see also the survey by Agarwal and Sharir~\cite{agarwal1998efficient}.
However, to the best of our knowledge, the framework has not been applied to subtrajectory clustering before. 
We remark that the algorithm by Brönniman and Goodrich~\cite{bronnimann1995almost} has been revisited and improved several times~\cite{agarwal2020near,ChanH20}, but our methods do not seem to profit from these improvements.

\subsection{Related work}\label{sec:relatedwork}

One of the earlier works on clustering subtrajectories is by Lee, Han and Whang~\cite{LeeHW07}. They were interested in computing a small set of line segments that describe the geometry of the input trajectories well. Their algorithm works in two phases: (i) a partition phase where they employ the minimum-description-length (MDL) principle and (ii) a grouping phase where they use a density based clustering algorithm similar to DBSCAN~\cite{dbscan96}. 

In general, it is not obvious how to combine the two phases---partitioning and grouping---into one optimization problem. Buchin et al.~\cite{buchin2011detecting} focus on the problem of finding one single cluster of subtrajectories that are similar to each other. More specifically, they define a subtrajectory cluster with parameters $s$, $\Delta$, and $\ell$, as a set of $s$ pairwise disjoint subtrajectories with pairwise Fr\'echet distance at most $\Delta$ and such that at least one of the subtrajectories has complexity at least $\ell$. They define three optimization problems that each optimize one of the three parameters while keeping the other two fixed. 
The decision problem where all three parameters are fixed is shown NP-complete via reduction from the MaxClique problem. 
They give $2$-approximation algorithms: \begin{inparaenum}[(i)]
   \item for finding the longest subtrajectory cluster (max $\ell$) and 
   \item for finding the subtrajectory cluster with the maximum number of subtrajectories (max $s$).
\end{inparaenum} 
In subsequent work, these algorithms have been used as building blocks in several heuristic algorithms for map construction~\cite{BuchinMapConstruction17,BuchinMapConstruction20}, where the task is to infer an underlying road map from a set of trajectories.

A natural way to define a global optimization criterion for subtrajectory clustering is by using the set cover problem: given a set of elements $X$ and a set of subsets $\RSpace \subseteq 2^X$, select a minimum number of sets from $\RSpace$, such that their union covers all of $X$.

Indeed, set cover formulations are used implicitly and explicitly in many algorithms for subtrajectory clustering. 
Buchin, Kilgus and K\"olzsch~\cite{buchinGroup20} study migration patterns of animals. They want to derive an augmented geometric graph (a so-called group diagram) that captures the common movement of a group of migrating animals. An input trajectory is represented in the group diagram if there exists a path in the graph that is similar to it under some predefined similarity measure, such as the Fr\'echet distance. Their algorithm constructs a set cover instance by extracting a linear number of subtrajectory clusters using the algorithm of~\cite{buchin2011detecting} (see the discussion above).  
Overall, the algorithm takes time in $O(k^3 N^3)$ given $k$ trajectories, each of at most $N$ vertices. However, they use a preprocessing phase that introduces additional vertices which may increase $N$ quadratically in the worst case leading to an overall running time of $O(k^3 N^6)$. The approximation factor for the number of clusters selected is $O(\log kN)$.

Agarwal et al.~\cite{agarwal2018} proposed a problem formulation based on facility location for subtrajectory clustering under the discrete Fr\'echet distance . They also consider a set cover problem as an intermediate step of their algorithm, but their formulation leads to a set system of exponential size. They present $O(\log n)$-approximation algorithms, where $n$ is the total number of vertices of the input curves. Their algorithm runs in $O(|B|n^3)$ if $B$ is a set of candidate center curves given with the input. They show how to generate a suitable set $B$ of size $O(n^2)$, and how to reduce the size to $O(n)$ at the expense of an additional $O(\log n)$-factor in the approximation quality.

\subsection{Organization}

In the remainder of this section, we give some preliminary definitions in Section~\ref{sec:prelims}, we define the problem statement in Sections~\ref{sec:def} and a modified problem statement in Section~\ref{sec:setup:setsystem}. We give an overview of our main results in Section~\ref{sec:overview} and discuss other problem variants in Section~\ref{sec:variants}. We then discuss our main techniques in Section~\ref{sec:setup}. In Sections~\ref{sec:main} and~\ref{sec:main:general} we discuss solutions to the modified problem. In Sections~\ref{sec:cont} we discuss our solution to the main problem stated in Section~\ref{sec:def}. In Sections~\ref{sec:vcdim} and~\ref{sec:nphard1} we discuss additional results.

\subsection{Preliminaries}\label{sec:prelims}

A sequence of $n$ points  $p_1,\dots,p_n \in \RR^d$ defines a \emph{polygonal curve} $\curveP$ by linearly interpolating consecutive points, that is, for each $i$, we obtain the \emph{edge} $\{ t p_i + (1-t) p_{i+1} | t \in [0,1]\}$. We may
think of $\curveP$ as resulting from the concatenation of the edges in the given order as a parametrized curve, that is, a function $\curveP: [0,1] \mapsto \RR^d$. Note that for any such parametrized curve there exist real values $s_1 \leq \dots \leq s_n$, such that $\curveP(s_i)=p_i$.  We call the ordered set of the $p_i$ the \emph{vertices} of $\curveP$ and we denote it with $V(\curveP)$. We call the number of vertices $n$ the \emph{complexity} of the curve.
For any two $[a,b] \subseteq [0,1]$ we denote with $\curveP[a,b]$ the \emph{subcurve} of $\curveP$ that starts at $\curveP(a)$ and ends at $\curveP(b)$. Let $\XX^d_{\ell} = (\RR^d)^{\ell}$, and think of
the elements of this set as the
set of all  polygonal curves of $\ell$ vertices in $\RR^d$.
For two parametrized curves $\curveP$ and $\curveQ$, we define their \emph{Fréchet distance} as
 \[ d_F(\curveP,\curveQ) = \inf_{\gamma:[0,1] \mapsto [0,1]} \sup_{t \in [0,1]}
\| \curveP(\gamma(t)) - \curveQ(t) \|, \] 

\noindent where $\gamma$ ranges over all strictly monotone increasing functions.
A curve $\curveQ \in \XX^d_{\ell}$ is called an \emph{$\ell$-simplification} of a curve $\curveP$ if its Fréchet distance is minimum among all curves in $\XX^d_{\ell}$. We denote with $\Curvesimpltime(n,\ell)$ the time needed to compute such an $\ell$-simplification for a polygonal curve of $n$ vertices.

Let $X$ be a set. We call a set $\RSpace$ where any $r \in \RSpace$ is of the form $r \subseteq X$ a \emph{set system} with \emph{ground set} $X$. 
Let $\RSpace$ be a set system with ground set $X$. A  \emph{set cover} of $\RSpace$ is a subset $S \subset \RSpace$ such that the ground set is equal to the union of the sets in $S$. The \emph{set cover problem} asks to find a set cover for a given $\RSpace$ using a minimum number of sets.
In addition to the set cover problem, we define the  \emph{hitting set problem}.
Let $\RSpace$ be a set system with ground set $X$. A \emph{hitting set} of $\RSpace$ is a subset $S \subseteq X$ such that every set of $\RSpace$ contains at least one element of $S$. The hitting set problem is to find a hitting set for a given $\RSpace$ of minimum size.
We denote with $\RSpace^{*}$ the set system \emph{dual} to $\RSpace$. The set system $\RSpace^{*}$ has ground set $\RSpace$ and  each set $r_x \in \RSpace^{*}$ is defined by an element $x \in X$ as  $r_x = \{ r \in \RSpace | x \in r \}$. The dual set system of $\RSpace^{*}$ is again $\RSpace$.
The hitting set problem for $\RSpace$ is equivalent to the set cover problem for the dual set system $\RSpace^{*}$.
We say a subset $A \subseteq X$ is \emph{shattered} by $\RSpace$ if for any $A' \subseteq A$ there exists an $r \in \RSpace$ such that $A'=r \cap A$. The \emph{VC-dimension} of $\RSpace$ is the maximal size of a set $A$ that is shattered by $\RSpace$. When stating asymptotic bounds, we may use the $\widetilde{O}(\cdot)$ notation hiding polylogarithmic factors to simplify the exposition.

\subsection{Problem definition}
\label{sec:def}

Let $\curveP:[0,1] \rightarrow \RR^d$ be a parametrized curve\footnote{We chose the setting of one input curve to keep the presentation of our algorithmic solutions as simple as possible. All of our algorithms can be easily extended to the setting of multiple input curves.} 
and let $\ell \in \mathbb{N}$ and $\Delta \in \RR$ be fixed parameters. 
Define the $\Delta$-\emph{coverage} of a set of center curves $C \subset \XX^d_{\ell}$ as follows:
\[ \Psi_{\Delta}(P,C) = \bigcup_{q \in C} ~ \bigcup_{0 \leq t \leq t' \leq 1} \{ s \in [t,t'] \mid  d_F(\curveP[t,t'], q)\leq \Delta   \}. \]

Note that this corresponds to the part of the curve $P$ that is covered by the set of all subtrajectories that are within Fréchet distance $\Delta$ to some  curve in $C$.
The problem we study in this paper is to find a set $C \subset \XX^d_{\ell}$ of minimum size such that the $\Delta$-coverage of $C$ covers all of $P$.
We define the \emph{radius} of the clustering induced by $C$ as the smallest real value $\Delta$ such that $\Psi_{\Delta}(P,C)= [0,1] $, and we denote the radius with $\psi(P,C)$.

\subsection{Set system formulation}\label{sec:setup:setsystem}

Our approach to the subtrajectory clustering problem (see Section~\ref{sec:def}) works via  set covers of suitable set systems. To this end, we will first define a \emph{discrete variant} of the problem. Assume that the curve $P$ is endowed with a set of $m$ real values $0 = t_1 < t_2 < \dots < t_m = 1$ which define a set of subcurves of the form $S_{ij} = P[t_i,t_j]$. We denote the set of values $t_i$ with $\pazocal{T}$ and we refer to the respective points on the curve $\curveP(t_i)$ for $1\leq i \leq m$ as \emph{breakpoints}. For a given curve $P$ with breakpoints we define the $\Delta$-coverage of a set of center curves $C \subset \XX^d_{\ell}$ with respect to these breakpoints as follows
\[ \Phi_{\Delta}(P,C) = \bigcup_{q \in C} ~ \bigcup_{1 \leq i \leq j \leq m} \{ s \in [t_i,t_j] \mid  d_F(\curveP[t_i,t_j], q)\leq \Delta   \} \]
Analogous to the problem definition in Section~\ref{sec:def} we define the radius of the clustering in the discrete case  as the smallest real value $\Delta$ such that $\Phi_{\Delta}(P,C)= [0,1] $ and we denote this radius with $\phi(P,C)$. 
Consider the set system $\RSpace$ with ground set $X = \{1,\dots,m-1\}$ where each set $r_{Q} \in
\RSpace$ is defined by a polygonal curve $Q \in \XX^d_{\ell}$ as follows
\begin{eqnarray} \label{eq:setsystem:exact}
r_{Q} = \left\{ z \in X \mid \exists  i \leq z < j  \text{ with }  d_F(Q,\curveP[t_i,t_j]) \leq \Delta \right\}
\end{eqnarray}

In the discrete case, the problem of finding a minimum-size set of center curves that cover $P$ now reduces to finding a minimum-size set cover for the set system $\RSpace$.
Stating the problem in terms of set systems allows us to draw from a rich background of algorithmic techniques for computing set covers (see Section~\ref{sec:setup}).

We first discuss how solutions to the discrete problem help solving the initial problem defined in Section~\ref{sec:def}. We can choose breakpoints for the input curve $P$, such that the distance between two consecutive breakpoints is at most $\epsilon \Delta$ for any fixed $\epsilon>0$. This is always possible with $m=\lceil \frac{L}{\epsilon \Delta}\rceil$ breakpoints, where $L$ is the arclength of $P$. 
The resulting instance of the discrete problem variant approximates the continuous version of the problem in the following way.

\begin{lemma}\label{lem:main:approx}
Assume there exists a set $C^{*}\subset \XX^d_{\ell}$ of size $k$, such that $\psi(P,C^{*})\leq \Delta$. Then we have $\phi(P,C^{*})\leq (1+\epsilon)\Delta$. Additionally for each $C\subset \XX^d_{\ell}$ with $\phi(P,C)\leq (1+\epsilon)\Delta$ we have $\psi(P,C^{*})\leq (1+\epsilon)\Delta$.
\end{lemma}
\begin{proof}
We show that for any set of center curves  $C\subset \XX^d_{\ell}$  we have $\psi(P,C)\leq \phi(P,C)\leq (1+\epsilon)\psi(P,C)$.
Indeed, if $C$ covers the curve $P$ in the discrete setting, then it also covers the curve $P$ in the continuous setting. Therefore, $\psi(P,C)\leq \phi(P,C)$.
For showing the other inequality we observe that the distance between two consecutive breakpoints is at most $\epsilon\Delta$. Therefore, for any interval $[s,t] \subset [0,1]$ we can choose breakpoints $t_i\leq s$ and $t_j \geq t$ such that $d_F(P[s,t],P[t_i,t_j])\leq \eps \Delta$.
The claim now follows from the triangle inequality.
\end{proof}

\subsection{Main results}\label{sec:overview}

We study the problem of subtrajectory clustering in the concrete form as defined in Section~\ref{sec:def}. We think that this problem formulation provides a natural formalization of the problem as it is studied in many applications (see also the discussion in Section~\ref{sec:variants}). We develop bicriterial approximation algorithms for this problem, where the approximation is with respect to the following two criteria
\begin{inparaenum}[(i)]
    \item the number of clusters $k$, and
    \item the radius of the clustering $\Delta$.
\end{inparaenum}

In Sections~\ref{sec:main} and \ref{sec:main:general} we describe our approach for the discrete variant of the subtrajectory clustering problem defined in Section~\ref{sec:setup:setsystem}, before we turn to the main problem in Section~\ref{sec:cont}. 
We first discuss the special case where cluster centers are restricted to be directed line segments (the case $\ell=2$, Theorem~\ref{thm:main:lines}). The main idea is to define a suitable set system that preserves optimal solutions up to approximation and at the same time allows for efficient set system oracles. A set system oracle is a data structure that answers queries with a set $r$ and an element of the ground set $x$ and returns whether $x \in r$. 
We solve this by defining a linear number of ``proxy'' curves which are simplifications of subcurves that are locally maximal. The proxy curves allow to solve a set system query by computing a partial Fr\'echet distance with some additional conditions.  
In the more general case, where cluster centers can be curves of complexity $\ell > 2$, we use the bi-criterial simplification algorithm of Agarwal et al.~\cite{Agar05} to define suitable proxy curves. This is described in Section~\ref{sec:main:general} and the result is stated in Theorem~\ref{thm:main}.

Finally, in Section~\ref{sec:cont}, we present our solution to the main problem of subtrajectory clustering, where subtrajectories can start and end at any two points along the curve (see Section~\ref{sec:def}). We use the techniques developed in Section~\ref{sec:main:general}, but we obtain better approximation factors and running times, compared to a naive application of Lemma~\ref{lem:main:approx}. The improved running time results from the fact that we do not need to keep track of breakpoints explicitly in the set system oracle. 
Crucial to obtaining better approximation factors is the analysis of the VC-dimension of the dual set system. 
We obtain the following theorem. 

\begin{restatable}[Main Theorem]{theorem}{mainTheoremcont}\label{thm:main:cot}
Let $\curveP: [0,1] \rightarrow \RR^d$ be a polygonal curve of complexity $n$, let $\ell \in \NN$ and $\Delta,\eps > 0$ be parameters. Assume there exists a set $C^{*}\subset \XX^d_{\ell}$ of size $k$, such that $\psi(P,C^{*})\leq \Delta$. Let $m=\lceil\frac{L}{\epsilon\Delta}\rceil$ and $\delta= O(d^2\ell^2\log(d\ell)))$, there exists an algorithm that computes a set $C \subset \XX^d_{\ell}$ of size $O( k\delta \log (\delta k))$, such that $\psi(P,C)\leq (18+\epsilon)\Delta$. The algorithm has expected running time in
$ \widetilde{O}\left(
k m^2 + mn\right
)$
and uses space in $O(n+m)$, where we assume that $\ell$ and $d$ are constants independent of $n$.
\end{restatable}

In particular, in the above theorem, when the complexity of center curves $\ell$ and the ambient dimension $d$ are constants, the VC-dimension $\delta$ is constant, and the approximation factor for the size of the set cover is $O(\log k)$.
For a comparison, using Theorem~\ref{thm:main} and Lemma~\ref{lem:main:approx} directly would result in an approximation factor of $O(\log \frac{L}{\Delta}\log^2 \frac{L}{\Delta})$ which could be large even if $\ell$ and $d$ are small.

The improved approximation factors that are obtained in the continuous case in Theorem~\ref{thm:main:cot} raise the question if the approximation factor could be improved in the discrete case. Unfortunately, this does not seem to be the case. In Section~\ref{sec:vcdim} we study lower bounds to the VC-dimension for two natural problem variants. We study the dual set system (i) in the discrete case and (ii) the set system directly corresponding to our main clustering problem. For (i) we show a lower bound of $\Omega(\log m)$ directly corresponding to the upper bound, see Theorem~\ref{thm:vcdim:discrete}. For (ii) we show that---surprisingly---it inherently depends on the number of vertices of the input curve $n$, even when cluster centers are restricted to be line segments, see Theorem~\ref{thm:vcdim:cont} for the exact statement. Thus, ultimately, our modified set system with proxy curves not only makes the algorithm faster, but also has the benefit of a significantly lower VC-dimension, compared to the exact set system inherent to the problem.

Finally, we also investigate the question of hardness for the discrete problem defined in Section~\ref{sec:setup:setsystem}. If the complexity of center curves $\ell$ can be large, then NP-hardness follows from the hardness of the shortest common superstring problem, see also the result by Buchin et al.~\cite{Buchin2019soda} on $(k,\ell)$-center clustering under the Fr\'echet distance. In particular, in this case the problem is also hard to approximate. In Section~\ref{sec:nphard1} we show that even if we require cluster centers to be points by setting $\ell=1$, the problem remains NP-hard, via a reduction from \textsc{Planar-Monotone-3SAT}.

\subsection{Other problem variants and future directions}\label{sec:variants}

We think that our definition of subtrajectory clustering given in  Section~\ref{sec:def} captures a fundamental problem that arises in many applications. However, one may argue that there are many other variants of subtrajectory clustering that arise from application-specific considerations, see also the discussion of related work in Section~\ref{sec:relatedwork}. We expect that our general approach and our problem definition can be applied to many of these variants. 
For example, all of our algorithms can be easily extended to the setting of multiple input curves.
We mention some other variants that we find interesting.

\begin{enumerate}[(1)]
    \item Outputting a graph: The output of our algorithm is a set of center curves. In some applications, such as map construction, we may prefer the output to be a geometric graph. This can be easily obtained by connecting the center curves to form a geometric graph using additional edges where the input trajectory moves from one cluster to the next. How to do this optimally would a subject for future research.
    \item Covering with gaps: One might be interested in a problem variant where not the entire curve needs to be covered, but only a certain fraction of the curve. It would be interesting to analyze our techniques in this setting.
    \item Input curves: In this paper, we assume that our input curves are given in the form of polygonal curves. However, it is conceivable that our general approach to the discrete problem still works if the input is given in the form of piecewise polynomial curves with breakpoints; again, we leave this to future work.
    \item Other distance measures: Similarly, we think that the general approach to the discrete problem, where breakpoints are given with the input, is still applicable, if the Fr\'echet distance is replaced by some other distance measure that satisfies the triangle inequality.  
\end{enumerate}

It is tempting to relax the restriction on the complexity of the center curves in our problem definition. However, without any other regularization of the optimization problem, this would lead to the trivial solution of the curve $P$ being an optimal center curve. 
In any case, we think that some form of controlled regularization is necessary in the problem definition.

\section{Setup of techniques}\label{sec:setup}

In this section 
we introduce the main ideas and concepts that we use in our algorithms.

We start in Section~\ref{sec:setup:approx} with a simple algorithm that illustrates our general approach in a nutshell: we derive an auxiliary set system that has a simpler structure and smaller size compared to the set system of Section~\ref{sec:setup:setsystem}, while preserving optimal solutions up to approximation. A preliminary result that follows by applying the greedy set cover algorithm is stated in Theorem~\ref{thm:result2}. 
Then, in Section~\ref{sec:setup:algorithm} we recapitulate the algorithmic framework by Brönniman and Goodrich~\cite{bronnimann1995almost} which we use in our main algorithm. In order to obtain efficient algorithms from this framework, we need to adapt the framework to our specific needs. 
The approximation quality of the resulting algorithm strongly depends on the VC-dimension of the dual set system. Therefore, we aim for auxiliary set systems with constant VC-dimension. Alas, this is not always possible when breakpoints are given with the input. We discuss this in Section~\ref{sec:setup:vcdim}.

\subsection{A set system for approximation}\label{sec:simplification}\label{sec:setup:approx}

In this section we discuss a simple algorithmic solution to the discrete variant of the problem we study. We emphasize that the approach works for any choice of breakpoints and is thus interesting in its own right. The algorithm yields a bicriteria approximation in the radius $\Delta$ and the number of clusters $k$. Although this algorithm is suboptimal, we include it here as an illustration of our general approach to the subtrajectory clustering problem: modify the set system in a way that preserves the initial structure up to approximation but allows for more efficient algorithms for the clustering problem. 

Let $\pazocal{S} = \{(i,j) \in \NN^2 \mid 1 \leq i < j \leq m \}$. For any $(i,j) \in \pazocal S$  let  $\mu_{\ell}(P[t_i,t_j])$ denote the $\ell$-simplification of the corresponding subcurve of $P$.
Consider a set system $\ASpace_0$ defined on the ground set $X=\{1,\dots,m-1\}$, where each set $r_{i,j} \in \ASpace_0$ is defined by a tuple $(i,j) \in \pazocal{S}$ and is of the form
\[ r_{i,j} = \{ z \in X \mid \exists  i' \leq z < j'  \text{ with } d_F(P[t_{i'},t_{j'}],\mu_{\ell}(P[t_i,t_j])) \leq 3\Delta \}\]

We will see (Lemma~\ref{lem:hitting:simplified}, below), that $\ASpace_0$ approximates the structure of $\RSpace$ as defined in (\ref{eq:setsystem:exact}) to the extent that a set cover for $\ASpace_0$ corresponds to an approximate solution for our clustering problem. 
The well-known greedy set cover algorithm, which incrementally builds a set cover by taking the set with the largest number of still uncovered elements in each step, yields an $O(\log m)$ approximation for a ground set of size $m$ \cite{chvatal979greedy}.  Applying this algorithm to the set system $\ASpace_0$, we obtain a set $C$ consisting of $\ell$-simplifications $\mu_{\ell}(P[t_i,t_j]))$ for each $r_{i,j}$, such that  $\phi(P,C)\leq 3\Delta$.

\paragraph{Building the incidence matrix} To this end, we compute the binary incidence matrix $M$ of the set system $\ASpace_0$ explicitly in $O(m^3(n\ell+m) + m^2 \Curvesimpltime(n,\ell))$ time, as follows.
Initially we set all entries of the matrix to $0$. 
In the first step we compute the $O(m^2)$ simplifications $\mu_{\ell}(P[t_i,t_j])$ of all subcurves between two breakpoints. For each simplification $\mu$, we compute the $\Delta$-free space with the curve $\curveP$, which is defined as the level set 
\[ FD_{\delta}(\curveP, \mu) = \left\{ (x,y) \in [0,1]^2 \mid {  \|{\curveP(x)} - {\mu(y)}\| \leq \delta} \right\}.  \] 
Computing the associated diagram can be done in $O(n \ell)$ time and space~\cite{AltG95}.
Note that the simplification $\mu$ corresponds to the vertical axis of the $\Delta$-free space diagram and $\curveP$ corresponds to the horizontal axis. 
Now, for each breakpoint $t_{i'}$ we compute the maximal breakpoint $t_{j'}$ that is reachable by a monotone path from the bottom of the diagram at $(t_{i'},0)$ to the top of the diagram at $(t_{j'},1)$. This can be done in $O(n\ell)$ time using standard techniques~\cite{AltG95}. For all $i' \leq q < j'$, we set the entry corresponding to $q$ and $\mu$ to $1$. This takes $O(m)$ time. We do this for all simplifications. After that, each entry of $M$ is $1$ if the corresponding element is contained in the corresponding set and $0$ otherwise.

\paragraph{Applying greedy set cover} We initially scan the incidence matrix to  compute the number of uncovered elements $n_{i,j}$ for every range $r_{i,j} \in \ASpace_0$. After this, we can compute the set with the highest number of uncovered elements in $O(m^2)$ time. Then, we can update all $n_{i,j}$ on the fly every time we select a new set for the set cover. To do so, we scan for each newly covered element all the $m^2$ entries of the incidence matrix corresponding to this element and reduce $n_{i,j}$ by $1$ if the entry corresponding to $r_{i,j}$ is equal to $1$. Since each of the $m$ elements gets covered for the first time only once, this can be done in a total time of $O(m^3)$. 

\begin{lemma}\label{lem:hitting:simplified}
For any $r_Q \in \RSpace$, there is a $r_{i,j} \in \ASpace_0$ such that $r_Q \subseteq r_{i,j}$.
\end{lemma}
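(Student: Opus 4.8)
The plan is to exhibit, for a given curve $Q$, a \emph{single} pair $(i,j) \in \mathcal{S}$ that works for all elements of $r_Q$ simultaneously, and then verify the containment $r_Q \subseteq r_{i,j}$ element by element using only the triangle inequality for $d_F$ together with the defining minimality of the $\ell$-simplification. The conceptual content is just that the radius-$3\Delta$ ball is robust under replacing the center $Q$ by a nearby $\ell$-simplification, with the inflation from $\Delta$ to $3\Delta$ exactly covering the detour.

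First I would dispose of the degenerate case $r_Q = \emptyset$: then any $r_{i,j} \in \ASpace_0$ trivially contains $r_Q$, and $\ASpace_0 \neq \emptyset$ since $m \geq 2$ (as $0 = t_1 < \dots < t_m = 1$). So assume $r_Q \neq \emptyset$ and fix some $z_0 \in r_Q$ together with witnessing indices $i \leq z_0 < j$, i.e.\ $d_F(Q, P[t_i,t_j]) \leq \Delta$. I claim this pair $(i,j)$ already satisfies $r_Q \subseteq r_{i,j}$. The reason for choosing $(i,j)$ this way is that $Q \in \XX^d_{\ell}$, so by the minimality in the definition of $\mu_\ell$ we get
\[ d_F\bigl(P[t_i,t_j],\, \mu_\ell(P[t_i,t_j])\bigr) \;\le\; d_F\bigl(P[t_i,t_j],\, Q\bigr) \;\le\; \Delta . \]

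Now take an arbitrary $w \in r_Q$ and let $i' \leq w < j'$ be indices with $d_F(Q, P[t_{i'},t_{j'}]) \leq \Delta$. Applying the triangle inequality for $d_F$ twice along the chain $P[t_{i'},t_{j'}] \to Q \to P[t_i,t_j] \to \mu_\ell(P[t_i,t_j])$ gives
\[ d_F\bigl(P[t_{i'},t_{j'}],\, \mu_\ell(P[t_i,t_j])\bigr) \;\le\; d_F\bigl(P[t_{i'},t_{j'}],Q\bigr) + d_F\bigl(Q,P[t_i,t_j]\bigr) + d_F\bigl(P[t_i,t_j],\mu_\ell(P[t_i,t_j])\bigr) \;\le\; 3\Delta . \]
Since $i' \leq w < j'$, this is precisely the membership condition for $w \in r_{i,j}$; as $w$ was arbitrary, $r_Q \subseteq r_{i,j}$, which proves the lemma.

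There is no genuine obstacle here. The only point requiring a little care is that the same pair $(i,j)$ must serve \emph{every} $w \in r_Q$: I handle this by reusing, for each $w$, its own witnessing subinterval $[t_{i'},t_{j'}]$ rather than trying to relate different witnessing intervals to one another, and by noting that the single inflation to $3\Delta$ simultaneously absorbs the triangle-inequality detour through $Q$ and the simplification error $d_F(P[t_i,t_j],\mu_\ell(P[t_i,t_j]))$. Implicit throughout is that the infimum defining $\mu_\ell$ is attained (standard for $\XX^d_\ell$ under the Fréchet distance) and that $d_F$ obeys the triangle inequality (true for the reparametrization-based definition used here).
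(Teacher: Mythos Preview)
Your proof is correct and follows essentially the same argument as the paper: pick any witnessing pair $(i,j)$ from $r_Q$, then use the triangle inequality through $Q$ together with the bound $d_F(P[t_i,t_j],\mu_\ell(P[t_i,t_j])) \le \Delta$ (which holds since $Q \in \XX^d_\ell$ already achieves $\le \Delta$) to show every other witnessing subcurve is within $3\Delta$ of $\mu_\ell(P[t_i,t_j])$. Your version is in fact slightly more explicit than the paper's, which leaves the justification of the simplification bound and the empty case implicit.
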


\begin{proof}
We can rewrite the definition of $r_Q$ as follows.
Let $Y$ be the set of tuples $(i,j) \in \NN^2$ 
with $1 \leq i < j \leq m$ and  $d_F(Q,\curveP[t_i,t_j]) \leq \Delta$. We have that
$r_{Q} =  \bigcup_{(i,j) \in Y} [i,j) \cap \NN $. Let $(i,j),(i',j') \in Y$. Using the triangle inequality, we can upper bound $d_F(P[t_{i'},t_{j'}],\mu_{\ell}(P[t_i,t_j]))$ by
\[ d_F(P[t_{i'},t_{j'}],Q) + d_F(Q,P[t_{i},t_{j}]) +
d_F(P[t_{i},t_{j}],\mu_{\ell}(P[t_i,t_j]))\leq 3 \Delta. \]

By the definition of $r_{i,j}$, we have $[i',j') \cap \NN \subseteq r_{i,j}$ and therefore $r_Q \subseteq r_{i,j}$. In other words, we can choose any maximal set of covered intervals within $r_Q$ and use the simplification of the corresponding subcurve of $\curveP$ to cover all parts of $P$ that are covered by $Q$. 
\end{proof}

\begin{theorem}\label{thm:result2}
Given a polygonal curve $\curveP: [0,1] \rightarrow \RR^d$ with breakpoints $0 \leq t_1,\dots,t_m\leq 1$. Assume there exists a set of curves $C^*  \subset \XX^d_{\ell}$ of size $k$, such that $\phi(P,C^*)\leq \Delta$.   There exists an algorithm that computes a set $C \subset \XX^d_{\ell}$ of size $O(k \log m)$ and has running time in 
$O(m^3 n \ell +m^4 + m^2 \Curvesimpltime(n,\ell)))$ 
such that $\phi(P,C)\leq 3\Delta$,
where $\Curvesimpltime(n,\ell)$ denotes the the running time for computing an $\ell$-simplification of a polygonal curve of $n$ vertices.
\end{theorem}

\begin{proof}
The algorithm builds the incidence matrix of the set system and applies greedy set cover, as described above. The bound of the running time is immediate. It remains to argue correctness. 
The existence of a set of curves $C^*$ of size $k$ with $\phi(P,C^*)\leq \Delta$ implies that there exists a set cover of $\RSpace$ of size $k$. Lemma~\ref{lem:hitting:simplified} implies that for any set cover of $\RSpace$, there exists a set cover of $\ASpace_0$ of the same size.
Thus, the $O(\log m)$-approximate set cover $S$ computed by the algorithm for $\ASpace_0$ has size at most $O(k \log m)$. Let
\[C = \{\mu_{\ell}(P[t_i,t_j]) \mid r_{i,j} \in S \}. \]

Since $S$ is a set cover for $\ASpace_0$, and by the definition of $r_{i,j}$, we have $\phi(P,C)\leq 3\Delta$.
\end{proof}

\subsection{The framework for the set cover algorithm}\label{sec:setup:algorithm}

For obtaining our main results we use the set cover framework algorithm by Br\"onnimann and Goodrich described in \cite{bronnimann1995almost} for set systems of low VC-dimension.  The idea of the framework algorithm is best explained by taking the point of view that it computes a hitting set of the dual set system. We need the following definition of an $\eps$-net.

\begin{definition}
Let $\RSpace$ be a set system with finite ground set $X$ and with an additive weight function $w$ on $X$. An \emph{$\eps$-net} is a subset $S \subset X$, such that every set of $\RSpace$ of weight at least $\eps\cdot w(X)$ contains at least one element of $S$. 
\end{definition}

Note that, if $w(x)=1$ for each $x\in X$, then an $\eps$-net is a hitting set for the ``heavy'' sets of $\RSpace$ that contain at least an $\eps$-fraction of the ground set.

\begin{definition}[\cite{bronnimann1995almost}] The framework algorithm needs the following subroutines. 
A \emph{net finder} of size $s$ for a set system $(X,\RSpace)$ is an algorithm $A$ that, given $\rho \in \mathbb{R}$ and a weight function $w$ on $X$, returns an $(1/\rho)$-net of size $s(\rho)$ for $(X,\RSpace)$ with weight $w$. Also, a \emph{verifier} is an algorithm $B$ that, given a subset $H \subseteq X$, either states (correctly) that $H$ is a hitting set, or returns a nonempty set $r$ of $\RSpace$  such that $r \cap H = \emptyset$. 
\end{definition}

\paragraph{Framework}  Given these two subroutines and a finite set system $(X,\RSpace)$, the algorithm proceeds as follows. In each iteration, the algorithm calls the net finder to compute an $\eps$-net $S$ of $\RSpace$ (for a specific value of $\eps$). Then, the algorithm calls the verifier to test if $S$ is also a hitting set for $\RSpace$. If yes, we return $S$. If no, then the verifier returns a witness set $r$ that does not contain any element of $S$. We increase the weight of each element of $r$ by a factor of $2$. Then, we repeat until we find a hitting set. 

We describe an easy adaptation of this framework algorithm that suits our needs. Our adaptation uses the following definition of a set system oracle. The resulting theorem is stated below. 

\begin{definition}[Set system oracle]
For a given set system $\RSpace$ with ground set $X$ a \emph{set system oracle} is a data structure $\pazocal{D}$ that can be queried with any $r \in \RSpace$ and $z \in X$ and answers whether $z \in r$. We denote with $\Preproctime(\pazocal{D})$ the preprocessing time to build the data structure $\pazocal{D}$ for the oracle and with $\Querytime(\pazocal{D})$ the time needed to answer the query. We denote with $\Oraclespace(\pazocal{D})$  the space required by the data structure.
\end{definition}

\begin{restatable}[Folklore]{theorem}{frameworkhittingset}
\label{thm:frameworkhittingset} \label{thm:framework}
For a given finite set system $(X,\RSpace)$ with finite VC-dimension $\delta$, assume there exists a hitting set of size $k$. Then, there exists an algorithm that computes a hitting set of size $k' \in O(\delta k \log \delta k)$ with expected running time in 
$ O\big( \left( k'|\RSpace| +|X| \right) k \log\left(|X| \right)  \Querytime(\pazocal{D})   +\Preproctime(\pazocal{D})\big)$
 and using space in $O\big(|X|+\Oraclespace(\pazocal{D})\big)$. 
\end{restatable}

\label{sec:apendix:framework}
\label{sec:simpleimpl}
\label{sec:genfram} 

In the remainder of this section we show how to prove Theorem~\ref{thm:frameworkhittingset} using an argument by Brönniman and Goodrich~\cite{bronnimann1995almost}. 

Let $\RSpace$ be a set system with finite ground set $X$ and finite VC-dimension $\delta$. An effective way to implement the net-finder is via a random sample from the ground set, as guaranteed by the $\eps$-net theorem~\cite{haussler1987eps} by Haussler and Welzl. 

\begin{theorem}[\cite{haussler1987eps}]
For any $(X,\RSpace)$ of finite VC-dimension $\delta$, finite $A \subseteq X$ and $0<\epsilon$, $\alpha<1$, if $N$ is a subset of $A$ obtained by at least
\[   \max\left(\frac{4}{3} \log\left(\frac{2}{\alpha}\right),\frac{8\delta}{\epsilon}\log\left(\frac{8\delta}{\epsilon}\right)\right) \]
random independent draws, then $N$ is an $\epsilon$-net of $A$ for $\RSpace$ with probability at least $1-\alpha$.
\end{theorem}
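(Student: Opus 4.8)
The plan is to run the standard double-sampling (symmetrization) argument of Vapnik--Chervonenkis, combined with the Sauer--Shelah bound on the shatter function and a union bound. Throughout, write $m$ for the lower bound on the number of draws stated in the theorem, and call a range $r\in\RSpace$ \emph{heavy} if $|r\cap A|\ge\epsilon|A|$. Let $E$ be the bad event that the sample $N$ misses some heavy range, i.e.\ that there is a heavy $r$ with $r\cap N=\emptyset$; we must show $\Pr[E]\le\alpha$. If $\RSpace$ has no heavy range then every subset of $A$ is an $\epsilon$-net and the claim is trivial, so assume otherwise.

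First I would introduce a second, independent sample $M$ of the same size $m$, and the auxiliary event $E'$ that there is a heavy $r$ with $r\cap N=\emptyset$ and $|r\cap M|\ge \epsilon m/2$. The first key step is the symmetrization inequality $\Pr[E']\ge\tfrac12\Pr[E]$. To see it, condition on $N$; whenever $E$ occurs, pick a canonical witness $r_0$, that is, a fixed choice function of $N$ (which exists since $\RSpace$ is finite in our application). Since $M$ is drawn independently of $N$, the count $|r_0\cap M|$ is a sum of $m$ independent Bernoulli variables of success probability $|r_0\cap A|/|A|\ge\epsilon$, so Chebyshev's inequality gives $\Pr[\,|r_0\cap M|\ge\epsilon m/2\mid N\,]\ge 1/2$ as soon as $\epsilon m$ exceeds a small absolute constant; this holds for our $m$, since $\epsilon<1$ and $\delta\ge1$ force $\epsilon m\ge 8$. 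Integrating over $N$ then yields $\Pr[E']\ge\tfrac12\Pr[E]$.

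Next I would bound $\Pr[E']$ by a purely combinatorial argument on the combined sample. Because all $2m$ draws are i.i.d., the pair $(N,M)$ has the same distribution as the following process: first draw the multiset $Z$ of $2m$ points from $A$, then split $Z$ uniformly at random into two halves of size $m$, one playing the role of $N$ and the other of $M$; the event $E'$ depends only on $Z$ and this split. Fix $Z$. A range $r$ can witness $E'$ only if all of its points of $Z$ fall into the $M$-half and there are at least $\epsilon m/2$ of them; for a fixed $r$ with $k:=|r\cap Z|\ge\epsilon m/2$ this has probability $\binom{2m-k}{m}/\binom{2m}{m}=\prod_{i=0}^{k-1}\frac{m-i}{2m-i}\le 2^{-k}\le 2^{-\epsilon m/2}$ over the random split (and probability $0$ if $k>m$). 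By the Sauer--Shelah lemma, the number of distinct sets $r\cap Z$ with $r\in\RSpace$ is at most the shatter-function value $\Pi_{\RSpace}(2m)\le\sum_{i=0}^{\delta}\binom{2m}{i}\le(2em/\delta)^{\delta}$, so a union bound over these finitely many relevant ranges gives $\Pr[E'\mid Z]\le\Pi_{\RSpace}(2m)\,2^{-\epsilon m/2}$ for every $Z$, hence $\Pr[E]\le 2\,\Pi_{\RSpace}(2m)\,2^{-\epsilon m/2}$.

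It then remains to verify that plugging the stated lower bound on $m$ into $2\,\Pi_{\RSpace}(2m)\,2^{-\epsilon m/2}$, together with the Sauer--Shelah estimate $\Pi_{\RSpace}(2m)\le (2em/\delta)^{\delta}$, yields a quantity at most $\alpha$; after taking logarithms this is a routine calculation in which the term $\tfrac{8\delta}{\epsilon}\log\tfrac{8\delta}{\epsilon}$ controls the shatter-function contribution $\delta\log(2em/\delta)$ and the term $\tfrac43\log\tfrac2\alpha$ controls the confidence contribution $\log(2/\alpha)$. I expect the symmetrization step --- establishing $\Pr[E']\ge\tfrac12\Pr[E]$, including the measurable (canonical) choice of the witness range $r_0$ and the Chebyshev estimate on the fresh sample $M$ --- to be the main conceptual obstacle; the shatter-function union bound and the final constant-chasing are standard and well known.
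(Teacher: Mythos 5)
The paper does not prove this theorem; it is imported verbatim (with a citation) from Haussler and Welzl, so there is no ``paper's own proof'' to compare against. Your argument is the standard one --- double sampling (symmetrization), the Sauer--Shelah bound on the shatter function, and a union bound over traces on the combined sample --- which is indeed the route taken in the cited source and in essentially every modern exposition of the $\epsilon$-net theorem. The individual steps you sketch are sound: the symmetrization inequality $\Pr[E']\ge\tfrac12\Pr[E]$ via Chebyshev (and you correctly note that $\epsilon m\ge 8$ is guaranteed by the second branch of the max once $\delta\ge 1$, $\epsilon<1$); the bound $\binom{2m-k}{m}/\binom{2m}{m}\le 2^{-k}$ for a uniformly random split; and the union bound over at most $\sum_{i\le\delta}\binom{2m}{i}$ distinct traces on $Z$. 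One small technical nit: the measurable choice of the witness $r_0$ should be justified by the finiteness of the \emph{trace family} $\{r\cap A : r\in\RSpace\}$ (which follows from finiteness of $A$), not of $\RSpace$ itself, since the theorem only assumes finite VC-dimension.

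The genuine gap is the last step, which you label ``routine constant-chasing'' and do not carry out. If you do carry it out, you need $\epsilon m/2 \ge 1 + \delta\log(2em/\delta) + \log(1/\alpha)$, and the term $\log(1/\alpha)$ is supposed to be absorbed by the first branch of the max. But the first branch as stated is $\tfrac43\log\tfrac2\alpha$, with \emph{no} $\epsilon$-dependence, which gives $\epsilon m/2 \ge \tfrac{2\epsilon}{3}\log\tfrac2\alpha$; for small $\epsilon$ this is far below $\log(1/\alpha)$ and the inequality fails. Indeed a lower bound of order $\tfrac1\epsilon\log\tfrac1\alpha$ draws is unavoidable already for a single range of measure exactly $\epsilon$, so no proof of the statement with the constant $\tfrac43$ can exist. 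The original Haussler--Welzl bound reads $\max\bigl(\tfrac{4}{\epsilon}\log\tfrac{2}{\alpha},\;\tfrac{8\delta}{\epsilon}\log\tfrac{8\delta}{\epsilon}\bigr)$; the paper's $\tfrac43$ is evidently a transcription error for $\tfrac4\epsilon$. Your proof strategy is correct for the corrected statement, but by deferring the arithmetic you missed both that the stated constants do not close the argument and that this points to a typo in the theorem as quoted.
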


Thus, the net-finder can be implemented to run in $O(|X|)$ time and $O(|X|)$ space, by taking a sample from $X$ where the weights correspond to probabilities. We call this the \emph{probabilistic net-finder}.
While verifying that a set is an $\eps$-net could be costly in our setting, we can observe that this is actually not necessary. Indeed, we can modify the behaviour of the verifier as follows. 

\begin{definition}[Extended verifier]
Given a set $S \subseteq X$, the extended verifier returns one of the following:
\begin{compactenum}[(i)]
\item $S$ is a hitting set. 
\item A witness set $r$ with $r \cap S = \emptyset$, and $w(r) \leq \eps$. 
\item A witness set $r$ with $r \cap S = \emptyset$, and $w(r) > \eps$. 
\end{compactenum}
\end{definition}

To implement the extended verifier  we assume that we have a set system oracle $\pazocal{D}$ for $(X,\RSpace)$.
After reprocessing the oracle, the extended verifier can be implemented to run in \[ O(|S|\cdot |\RSpace| \cdot \Querytime(\pazocal{D}) + |X|\cdot \Querytime(\pazocal{D}))\] time by using $|\RSpace|$ linear scans over $S$, one for each set in $\RSpace$. We determine for every set $r \in \RSpace$ whether it is hit by an element of $S$, by calling the set system oracle on $r$ and the corresponding set and elements in $S$. If we find a set that is not hit by any of the elements in $S$, we compute its weight explicitly by using $|X|$ calls to $\Querytime(\pazocal{D})$ and return the appropriate answer (ii) or (iii). In case (ii), we return the witness set that we have just computed explicitly, that is, we return all elements of this set, in order for the reweighting to be applied. If we do not find such a set, then $S$ is a hitting set and we return (i).

\paragraph{Algorithm.} Using the above implementation of the probabilistic net-finder and the extended verifier, the algorithm for computing a hitting set now proceed as follows. In each iteration we use the probabilistic net-finder to sample a candidate set $S\subseteq X$. The sample size is chosen large enough that $S$ is an $\eps$-net with probability greater $\frac{1}{2}$ (for a specific value of $\eps$). Given $S$, we apply extended verifier. If the verifier returns that $S$ is a hitting set (case (i)) then the algorithm terminates with $S$ as a result. If the verifier returns a witness set $r$ with $r \cap S = \emptyset$, and $w(r) \leq \eps$ (case (ii)) then we increase the weight of each element of $r$. The algorithm keeps track of the weight of each element and the weight of the whole ground set . If the verifier returns a witness set $r$ with $r \cap S = \emptyset$, and $w(r) > \eps$ (case (ii)) then $S$ is not an $\eps$-net and we do not change anything. We repeat these steps until we find a hitting set.


\begin{proof}[Proof of Theorem~\ref{thm:frameworkhittingset}]
 We first build a data structure $\pazocal{D}$ for the oracle in $\Preproctime(\pazocal{D})$ time. Then we use the algorithm described above
  with $\epsilon=\frac{1}{2k}$.
In each iteration of the algorithm the computed random sample of size $O(k \delta \log(\delta k))$  is an $\epsilon$-net with probability greater $\frac{1}{2}$. Therefore the expected number of iterations until we find an $\epsilon$-net is at most $2$. 

If we find an $\epsilon$-net in some iteration of the algorithm then we are in case (i) or (ii). As soon as we are in case (i) the algorithm terminates and outputs a hitting set of size $O(\delta k \log (\delta k))$. 

Let $H$ be a hitting set of $\RSpace$ with $|H|=k$. The number of times we can be in case (ii) before being in case (i) is bounded by $4k \log(\frac{|X|}{k})$. Indeed, after this number of reweighting steps, the weight of $H$ would be bigger than the weight of the ground set. The calculation of this bound has already been done in  \cite{bronnimann1995almost}. We include it here for the sake of completeness.

Let $r$ be the set returned by the verifier in one iteration of being in case (ii). Since $H$ is a hitting set, we have $H\cap r  \neq \emptyset$. Let $w$ be our weight function and let $z_h$ be the number of times the weight of $H$ has been doubled after $i$ iterations in case (ii). Then we have after $i$ iterations in case (ii) that
\[ w(H) = \sum_{h \in H} 2^{z_h}, \text{ where } \sum_{h \in H} z_h \geq i.\]

By the convexity of the exponential function, we get $w(H)\geq k 2^{\frac{i}{k} }$. Since $\epsilon=\frac{1}{2k}$, we also have for the ground set $Z$ that
\[ w(X)\leq |X|\left(1+\frac{1}{2k}\right)^i \leq |X|e^{\frac{i}{2k}}.\]

Because $H$ is a subset of $X$ and therefore $w(H)\leq w(X)$, we get in total 
\[ k 2^{\frac{i}{k}} \leq |X|e^{\frac{i}{2k}} \leq |X|2^{\frac{3i}{4k} }.\]

It directly follows that $i\leq 4k \log(\frac{|X|}{k})$.
 Combining this result with the expected number of iterations until we find an $\epsilon$-net, we conclude that the expected number of iterations before the algorithm terminates is smaller than $8 k \log(\frac{|X|}{k})$.

In each iteration, the algorithm computes a random sample in $O(|X|)$ time and applies the extended verifier in $O(|\RSpace| k \delta \log(k)\Querytime(\pazocal{D})+|X|\Querytime(\pazocal{D}))$ time. If a reweighting needs to be applied (case (ii)) this can be done in $O(|X|)$ time. So each iteration of the algorithm has a running time of $O(|\RSpace| \delta k \log(\delta k)\Querytime(\pazocal{D})+|X|\Querytime(\pazocal{D}))$.
In total we get an expected running time of 
\[ O(k \log(\frac{|X|}{k})\Querytime(\pazocal{D})(\delta k\log(\delta k)|\RSpace| +|X|) +\Preproctime(\pazocal{D})).\]
The theorem follows by the observation that both the net-finder and the verifier need $O(|X|)$ space.
\end{proof}

\subsection{Bounding the VC-dimension}\label{sec:setup:vcdim}

In order to use Theorem~\ref{thm:framework}  of Section~\ref{sec:setup:algorithm}, we need to bound the VC-dimension of the dual set system. In our case, this will be a set system that has similar structure as a set system of metric balls under the Fr\'echet distance  studied by Driemel et al.~\cite{driemel2019vc}. In a nutshell, they showed a bound of $O(d^2 s^2 \log(ds))$ for polygonal curves in $\RR^d$ of complexity at most $s$. Using this result directly would not gain us any useful bounds, as the subcurves $P[t_i,t_j]$ in the definition of the set system may have linear complexity in $n$---even for the simpler variant of Section~\ref{sec:setup:approx}. In fact, it turns out that the VC-dimension of the dual set system for the main problem defined in Section~\ref{sec:def} does indeed inherently depend on $n$, as we show in Theorem~\ref{thm:vcdim:cont} in Section~\ref{sec:vcdim:cont}. 

In Section~\ref{sec:cont} we instead define an auxiliary set system that preserves solutions up to approximation and---more importantly---which has low VC-dimension in the dual. We show this by using the approach of Driemel et al.~\cite{driemel2019vc}. They derive a set of geometric predicates which specify sufficient information for evaluating whether the Fr\'echet distance is below a certain threshold. Based on this, they define a composite set system that uses the geometric predicates as building blocks. The VC-dimension can then be bounded using standard composition arguments in combination with a theorem  by Anthony and Bartlett \cite{antbart99}.
Our analysis of the VC-dimension is given in Section~\ref{sec:cont:vcdim} and relies on the same set of geometric predicates. We relate these predicates to the distance evaluation of a certain type of partial Fr\'echet distance with specific conditions that occur in our set system with proxy curves. The result is stated in Theorem~\ref{thm:vc:cont} and implies that the VC-dimension is constant, if the complexity of the center curves $\ell$ and the ambient dimension $d$ is constant.

One may ask if a similar bound can be proven in the case where breakpoints are given with the input. Trivially, the size of the set system already gives a bound of $O(\log m)$, however this depends on the number of breakpoints $m$ and can be large even if $\ell$ is small.
We study this problem in  Section~\ref{sec:vcdim:discrete}. For the set system defined in Section~\ref{sec:setup:setsystem} we show a lower bound of $\Omega(\log m)$ even in the case that $d=1$ and $\ell=2$ (see Theorem~\ref{thm:vcdim:discrete}). Technically, this does not rule out the existence of an auxiliary set system with low VC-dimension in the dual. However, it is not clear what such a set system would look like as Theorem~\ref{thm:vcdim:discrete} makes only few assumptions on the set system. Thus, perhaps surprisingly, the discretization with breakpoints which was supposed to simplify the problem, actually makes it more difficult. Therefore, our approximation guarantee in the continuous case is better  to what we can currently achieve in the discrete case, when breakpoints are given with the input.

\section{Warm-up --- Clustering with line segments}
\label{sec:alg3}\label{sec:main}

In this section, we  show how to apply Theorem~\ref{thm:framework} to the discrete problem where we are given a curve $P$ with breakpoints. We assume in this section that cluster centers are restricted to be line segments (the case $\ell=2$). The general case ($\ell \geq 2$) is discussed in Section~\ref{sec:main:general}. In contrast to the solution described in Section~\ref{sec:setup:approx}, our algorithm finds an approximate set cover without computing the set system explicitly leading to better running times.

\subsection{The set system}\label{subsec:setsysline}

We start by defining the set system $\ASpace_2$ with ground set $Z=\{ 1, \dots, m-1 \}$.
Denote $\tau_{i,j} = \overline{{P(t_i) P(t_j)} }$.
For a subsequence $S=s_1,\dots,s_r$ of $1,\dots,m$, denote 
\[\pi(S) = \tau_{s_1,s_2} \oplus \tau_{s_2,s_3} \oplus \dots \oplus \tau_{s_{r-1},s_{r}}.\]

A tuple $(i,j)$ with $1\leq i \leq j \leq m$ defines a set $r_{i,j} \in \ASpace_2$ as follows
\[ r_{i,j} = \{ z \in Z \mid  \exists x \in [x_z, z], y \in [z+1,y_z] \text{ with } d_F(\pi({x,z,z+1,y}), \tau_{i,j}) )  \leq 2\Delta \},  \]
where $x_z \leq z < y_z $ are indices which we obtain as follows.
We scan breakpoints starting from $z$ in the backwards order along the curve and to test for each breakpoint $x$, whether 
\begin{eqnarray}\label{eq:scan}
d_F(\tau_{x,z},  P[t_x,t_z]) \leq 4\Delta.
\end{eqnarray} 
If $x$ satisfies~(\ref{eq:scan}), then we decrement $x$ and continue the scan. If $x=0$ or if $x$ does not satisfy~(\ref{eq:scan}), then we set $x_z = x+1$ and stop the scan. 
To set $y_z$ we use a similar approach: We scan forwards from $z+1$ along the curve and test for each breakpoint $y$ the same property with $\tau_{z+1,y}$ and $P[t_{z+1},t_{y}]$. If $y$ satisfies the property, we increment $y$ and continue the scan. If $y=m+1$ or if $y$ does not satisfy the property we set $y_z=y-1$ and stop the scan. Figure~\ref{fig:hittingwithlines} shows an example of $z,x_z$ and $y_z$.

\begin{figure}[t]
    \centering
    \includegraphics[width=0.6\textwidth]{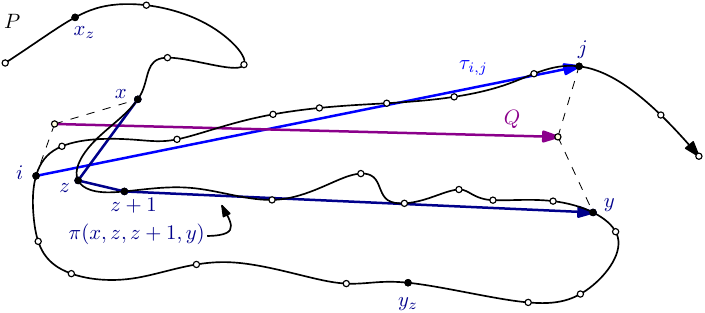}
    \caption{Example of a curve $P$ and index $z$, such that $z\in r_{i,j}$ for some $r_{i,j} \in \ASpace_2$. Also shown is a line segment $Q$, such that $z \in r_Q$ of the initial set system $\RSpace$. After preprocessing, we can test $z \in r_{i,j}$ in constant time. }
    \label{fig:hittingwithlines}
\end{figure}

\subsection{Analysis of the approximation error}
\label{sec:approxerror:line}
In this section we show how we use a set cover of the set system $\ASpace_2$ to construct an approximate solution for our clustering problem and analyse the resulting approximation error. In particular, we prove Lemma~\ref{lem:hittinglines:1} and Lemma~\ref{lem:hittinglines:2}. 

\begin{lemma}\label{lem:hittinglines:1} 
Assume there exists a set cover for $\RSpace$ with parameter $\Delta$.
Let $S$ be a set cover of size $k$ for $\ASpace_2$. We can derive from $S$ a set of $k$ cluster centers $C \subseteq \XX^d_{2}$ and such that $\phi(P,C) \leq 6\Delta$.
\end{lemma}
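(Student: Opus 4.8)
\textbf{Proof plan for Lemma~\ref{lem:hittinglines:1}.}

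The plan is to take a set cover $S$ of $\ASpace_2$, turn each set $r_{i,j} \in S$ into a cluster center by choosing a suitable line segment, and then verify two things: (a) the resulting set $C$ has size $k$, and (b) with the associated decomposition of $[0,1]$ induced by the covering of the ground set $Z$, every subcurve $P[t_{i'},t_{j'}]$ in the decomposition has Fréchet distance at most $6\Delta$ to its assigned center. The natural candidate center for $r_{i,j}$ is simply the segment $\tau_{i,j} = \overline{P(t_i)P(t_j)}$ itself; then $|C| \le |S| = k$, which settles (a). The heart of the argument is (b), and the key point is to track how the various slack factors ($2\Delta$ in the definition of $r_{i,j}$ via the detour curve $\pi(x,z,z+1,y)$, and $4\Delta$ in the scan condition~(\ref{eq:scan})) combine through the triangle inequality.

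First I would unpack what it means for $z \in r_{i,j}$: there exist $x \in [x_z,z]$ and $y \in [z+1,y_z]$ with $d_F(\pi(x,z,z+1,y), \tau_{i,j}) \le 2\Delta$. I would then use the scan guarantee: since $x \ge x_z$ and $x \le z$, the condition~(\ref{eq:scan}) held for every breakpoint scanned down to $x_z$, so $d_F(\tau_{x,z}, P[t_x,t_z]) \le 4\Delta$, and similarly $d_F(\tau_{z+1,y}, P[t_{z+1},t_y]) \le 4\Delta$. Concatenating, the four-segment curve $\pi(x,z,z+1,y)$ — whose middle portion $\tau_{z,z+1}$ is a single segment — is Fréchet-close to the subcurve $P[t_x,t_y]$; more precisely, gluing the two $4\Delta$-bounds at the shared breakpoints and accounting for the fact that $\pi$ contains the extra segment $\tau_{z,z+1}$ passing through the actual breakpoints $P(t_z),P(t_{z+1})$, one gets $d_F(P[t_x,t_y], \pi(x,z,z+1,y)) \le 4\Delta$ (the concatenation of two Fréchet matchings through a common endpoint does not increase the bound, and the middle segment matches the portion $P[t_z,t_{z+1}]$ trivially once we align the breakpoints $t_z,t_{z+1}$ where all curves pass through the same points). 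Then by the triangle inequality $d_F(P[t_x,t_y], \tau_{i,j}) \le 4\Delta + 2\Delta = 6\Delta$.

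Next I would build the cluster center set and the decomposition. For a given set cover $S$ of $\ASpace_2$, each $z \in Z$ is covered by some $r_{i_z,j_z} \in S$, which (by the above) certifies an interval $[x_z,y_z] \ni z,z+1$ and a witness sub-interval $[x,y]$ with $d_F(P[t_x,t_y],\tau_{i_z,j_z}) \le 6\Delta$. I would select a subfamily of these witness intervals that still covers $\{1,\dots,m-1\}$ (greedily, left to right: this is where one uses that the sets $r_{i,j}$ form a set cover, so the union of the covering intervals contains all of $Z$, hence the union of the corresponding breakpoint-intervals $[t_i,t_j]$ contains $[0,1]$), and let $I$ be the resulting index set of pairs $(x,y)$ with each pair assigned the center $\tau_{i_z,j_z} \in C$. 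Then $[0,1] = \bigcup_{(x,y)\in I}[t_x,t_y]$ and every term $\min_{q\in C} d_F(P[t_x,t_y],q) \le d_F(P[t_x,t_y],\tau_{i_z,j_z}) \le 6\Delta$, so $\phi(P,C) \le 6\Delta$. The hypothesis that a set cover for $\RSpace$ with parameter $\Delta$ exists is used only to guarantee that a set cover of $\ASpace_2$ (and hence a clustering) exists at all — i.e., that $\phi(P,C)$ is well-defined for the $C$ we produce.

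\textbf{Main obstacle.} The delicate step is the concatenation argument showing $d_F(P[t_x,t_y], \pi(x,z,z+1,y)) \le 4\Delta$ from the two $4\Delta$-scan bounds: one must be careful that the Fréchet matchings on $P[t_x,t_z]$ vs.\ $\tau_{x,z}$ and on $P[t_{z+1},t_y]$ vs.\ $\tau_{z+1,y}$ can be stitched together through the segment $\tau_{z,z+1}$ without blowing up the constant, and that the portion $P[t_z,t_{z+1}]$ of the curve is itself matched (it is matched to the segment $\tau_{z,z+1}$, and to make the stitching valid one aligns all three curves at the breakpoints $t_z$ and $t_{z+1}$, where $\pi$ and $P$ pass through the identical points $P(t_z)$ and $P(t_{z+1})$). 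I expect this to require a short but careful lemma on Fréchet distance under concatenation at shared points; everything else is bookkeeping with the triangle inequality and a routine interval-covering argument.
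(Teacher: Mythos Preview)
Your overall architecture---take $C=\{\tau_{i,j}\mid r_{i,j}\in S\}$, for each $z\in r_{i,j}$ extract witnesses $x,y$, and bound $d_F(P[t_x,t_y],\tau_{i,j})$ by $d_F(P[t_x,t_y],\pi(x,z,z+1,y))+d_F(\pi(x,z,z+1,y),\tau_{i,j})\le 4\Delta+2\Delta$---matches the paper's proof exactly. But the step you call ``trivial'' is where the argument actually breaks.

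You claim that the middle segment $\tau_{z,z+1}$ ``matches the portion $P[t_z,t_{z+1}]$ trivially once we align the breakpoints,'' and that therefore concatenating the two $4\Delta$ scan bounds gives $d_F(P[t_x,t_y],\pi(x,z,z+1,y))\le 4\Delta$. This is false in general: the breakpoints $t_z,t_{z+1}$ are \emph{not} consecutive vertices of $P$; between them $P$ can have arbitrarily many vertices and wander arbitrarily far from the chord $\tau_{z,z+1}=\overline{P(t_z)P(t_{z+1})}$. Sharing endpoints does not make the Fr\'echet distance zero. Without a bound on $d_F(\tau_{z,z+1},P[t_z,t_{z+1}])$, the concatenation gives you nothing for the middle piece, and the $4\Delta$ bound on $d_F(P[t_x,t_y],\pi(x,z,z+1,y))$ does not follow.

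This is precisely where the hypothesis ``there exists a set cover for $\RSpace$ with parameter $\Delta$'' is used---not, as you suggest, merely to ensure something exists. The paper argues: since some $Q\in\XX^d_2$ covers $z$, there are $i'\le z<z+1\le j'$ with $d_F(Q,P[t_{i'},t_{j'}])\le\Delta$, hence a sub-segment $Q[a,b]$ with $d_F(Q[a,b],P[t_z,t_{z+1}])\le\Delta$; shortcutting $P[t_z,t_{z+1}]$ to its chord cannot increase the Fr\'echet distance to the line segment $Q[a,b]$, so $d_F(Q[a,b],\tau_{z,z+1})\le\Delta$; the triangle inequality then gives $d_F(\tau_{z,z+1},P[t_z,t_{z+1}])\le 2\Delta\le 4\Delta$, and now your concatenation goes through. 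You need to insert this step; the rest of your plan is fine.
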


\begin{proof}
We set $C=\{\tau_{i,j}\;|\;r_{i,j}\in S\}$. Let $r_{i,j}\in S$ and let $z \in r_{i,j}$. By the definition of $r_{i,j}$ there are $x \in [x_z, z]$ and  $y \in [z+1,y_z]$ such that $d_F(\pi({x,z,z+1,y}), \tau_{i,j}) )  \leq 2\Delta$. In the following we show that $d_F(\tau_{i,j}, P[t_{x},t_{y}]) \leq 6 \Delta$. With the triangle inequality we get that $d_F(\tau_{i,j}, P[t_{x},t_{y}])$ is at most the sum of
\[ d_F( \tau_{i,j},\pi(x,z,z+1,y) ) \] and
\[\max( d_F(\tau_{x,z}, P[t_{x},t_z]), d_F(\tau_{z,z+1}, P[t_{z}, t_{z+1}]), d_F(\tau_{z+1,y}, P[t_{z+1}, t_{y}]) ).\] 
 By the choice of $x$ and $y$ we have that
\[ \max( d_F(\tau_{x,z}, P[t_{x},t_z]),  d_F(\tau_{z+1,y}, P[t_{z+1}, t_{y}]) )\leq 4\Delta. \]

It remains to show that $ d_F(\tau_{z,z+1}, P[t_{z}, t_{z+1}])  \leq 4\Delta$. Since there exists a set cover of $\RSpace$ with parameter $\Delta$, there exists a curve $Q \in \XX_2^d$ and $1 \leq i' \leq z \leq z+1 \leq j' \leq m$ such that $d_F(Q,P[t_{i'},t_{j'}]) \leq \Delta$. Therefore there exists $[a,b] \subseteq [0,1]$ such that $d_F(Q[a,b],P[t_{z},t_{z+1}]) \leq \Delta$. Because shortcutting cannot increase the Fréchet distance to a line segment, we also have $d_F(Q[a,b],\tau_{z,z+1}) \leq \Delta$. By triangle inequality it now follows
\[ d_F(\tau_{z,z+1},P[t_{z},t_{z+1}]) \leq d_F(\tau_{z,z+1},Q[a,b]) +  d_F(Q[a,b],P[t_{z},t_{z+1}]) \leq 2\Delta.\]

Since $S$ is a set cover, it holds for the ground set $Z$, that $Z = \bigcup_{(i,j) \in S} r_{i,j}$. Therefore, if we choose $C=\{\tau_{i,j}\;|\;r_{i,j}\in S\}$, then $\phi(P,C) \leq 6\Delta$.
\end{proof}

\begin{lemma}\label{lem:hittinglines:2}
If there exists a set cover $S$ of $\RSpace$, then there exists a set cover of the same size for $\ASpace_2$.
\end{lemma}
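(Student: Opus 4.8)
The plan is to prove the stronger statement that for every $r_Q\in\RSpace$ there is a \emph{single} set $r_{i_0,j_0}\in\ASpace_2$ with $r_Q\subseteq r_{i_0,j_0}$. Replacing each member of the given cover $S$ by such a superset then yields a set cover of $\ASpace_2$ of size at most $|S|$, which one pads with arbitrary sets $r_{i,j}$ to size exactly $|S|$. So fix $r_Q\in S$; here $Q$ is a line segment, since $\ell=2$. Let $Y_Q=\{(i,j): 1\le i<j\le m,\ d_F(Q,P[t_i,t_j])\le\Delta\}$, so that $r_Q=\bigcup_{(i,j)\in Y_Q}\{i,\dots,j-1\}$ (cf.\ the rewriting in Lemma~\ref{lem:hitting:simplified}). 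If $r_Q=\emptyset$, map it to any set of $\ASpace_2$; otherwise $Y_Q\neq\emptyset$, and we fix an arbitrary $(i_0,j_0)\in Y_Q$ and map $r_Q$ to $r_{i_0,j_0}$.

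The only geometric input is that \emph{shortcutting cannot increase the Fréchet distance to a line segment} -- the principle already invoked in the proof of Lemma~\ref{lem:hittinglines:1}: if $Q$ is a segment, $P'$ a polygonal curve with $d_F(Q,P')\le\delta$, and $P''$ is obtained from $P'$ by retaining some points that lie on $P'$, in their order along $P'$, and straightening the pieces in between, then $d_F(Q,P'')\le\delta$. (Restrict an optimal matching of $Q$ and $P'$ to the retained points; they match, in order, to points of $Q$, and convexity of the norm together with the linear parametrization of $Q$ handles the straight pieces, as well as the degenerate case where two consecutive retained points match the same point of $Q$.) Two consequences: first, $d_F(Q,\tau_{i_0,j_0})\le\Delta$, by straightening $P[t_{i_0},t_{j_0}]$ to its two endpoints $P(t_{i_0}),P(t_{j_0})$; second, for any integers $i\le z<z+1\le j$, $d_F(\pi(i,z,z+1,j),Q)\le\Delta$, by straightening $P[t_i,t_j]$ at the four points $P(t_i),P(t_z),P(t_{z+1}),P(t_j)$, which occur in this order along $P[t_i,t_j]$.

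Now take any $z\in r_Q$ and choose $(i,j)\in Y_Q$ with $i\le z<j$. I first argue that $i\in[x_z,z]$ and $j\in[z+1,y_z]$. For any integer $x'$ with $i\le x'\le z$, the subcurve $P[t_{x'},t_z]$ lies within $P[t_i,t_j]$, so restricting an optimal matching of $Q$ and $P[t_i,t_j]$ yields a subsegment $Q[a,b]$ with $d_F(Q[a,b],P[t_{x'},t_z])\le\Delta$; straightening $P[t_{x'},t_z]$ to its endpoints gives $d_F(Q[a,b],\tau_{x',z})\le\Delta$, so $d_F(\tau_{x',z},P[t_{x'},t_z])\le 2\Delta\le 4\Delta$. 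Hence the backward scan defining $x_z$ does not fail on $\{i,\dots,z\}$, so $x_z\le i$; symmetrically $y_z\ge j$. Combining the two consequences above with the triangle inequality,
\[ d_F\bigl(\pi(i,z,z+1,j),\tau_{i_0,j_0}\bigr)\ \le\ d_F\bigl(\pi(i,z,z+1,j),Q\bigr)+d_F\bigl(Q,\tau_{i_0,j_0}\bigr)\ \le\ 2\Delta, \]
and since $i\in[x_z,z]$, $j\in[z+1,y_z]$, the pair $x:=i$, $y:=j$ witnesses $z\in r_{i_0,j_0}$. As $z\in r_Q$ was arbitrary, $r_Q\subseteq r_{i_0,j_0}$, and the construction is complete.

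I expect the step to watch to be the shortcutting fact: this is exactly where $\ell=2$ is used -- for larger $\ell$, straightening a curve down to few vertices can increase its Fréchet distance to a center, which is why Section~\ref{sec:main:general} has to work with centers of complexity $2\ell$ -- and one should be careful that the four shortcut vertices genuinely occur in monotone order along $P[t_i,t_j]$ and that the endpoint distance bounds obtained by restricting a Fréchet matching remain valid when a restricted subsegment of $Q$ degenerates to a single point. The bookkeeping around the $x_z,y_z$ scans is then routine, hinging only on $2\Delta\le 4\Delta$.
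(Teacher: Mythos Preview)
Your proof is correct and follows essentially the same approach as the paper: fix an arbitrary pair $(i_0,j_0)\in Y_Q$, and for each $z\in r_Q$ use the witnessing pair $(i,j)$ together with shortcutting and the triangle inequality to verify the two conditions defining $z\in r_{i_0,j_0}$. The only notable difference is that where the paper first bounds $d_F(\tau_{x,z},P[t_x,t_z])\le 2\Delta$ for the specific witness $x$ and then invokes the monotonicity Lemma~\ref{lem:monotonicity} to pass to all intermediate $x'\in[x,z]$ (at the cost of a factor~$2$, landing at $4\Delta$), you apply the restriction-plus-shortcutting argument directly to every $x'\in[i,z]$, obtaining the stronger bound $2\Delta$ uniformly and bypassing Lemma~\ref{lem:monotonicity} altogether; this is a mild streamlining, not a different route.
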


To prove this lemma, we first prove the following simple lemma.
\begin{lemma}\label{lem:monotonicity}
Let $1\leq i' \leq i \leq j \leq j'\leq m$ be indices. If $d_F(\tau_{i',j'}, P[t_{i'}, t_{j'}]) \leq \alpha$, then we have
$d_F(\tau_{i,j}, P[t_{i}, t_{j}]) \leq 2\alpha$.
\end{lemma}
\begin{proof}
There exists a line segment $\tau' \subseteq \tau_{i',j'}$, such that $d_F(\tau', P[t_{i},t_{j}])\leq \alpha$. Since shortcutting cannot increase the Fréchet distance to a line segment, we also have $d_F(\tau', \tau_{i,j})\leq \alpha$.
By triangle inequality it now follows that 
\[ d_F(\tau_{i,j}, P[t_{i}, t_{j}]) \leq d_F(\tau_{i,j}, \tau') + d_F(\tau', P[t_{i},t_{j}]) \leq 2\alpha. \]
\end{proof}

\noindent\textit{Proof of Lemma \ref{lem:hittinglines:2}.}
 We claim that for any set $r_{Q} \in \RSpace$  there exists a set $r \in \ASpace_2$,  such that $r_Q \subseteq r$. This claim implies the lemma statement. It remains to prove the claim.

We can rewrite the definition of $r_Q$. Let $Y$ be the set of tuples $(i,j) \in \NN^2$ with $1 \leq i < j \leq m$ and  $d_F(Q,\curveP[t_i,t_j]) \leq \Delta$. We have that
$r_{Q} =  \bigcup_{(i,j) \in Y} [i,j) \cap \NN $.

Let $(i,j) \in Y$. We show that $r_Q \subseteq r_{i,j} \in \ASpace_2$. Let $z \in r_Q$. By the definition of $r_Q$ we have
\[\exists~ x \leq z < y \text{ s.t. }  d_F(Q,\curveP[t_x,t_y]) \leq \Delta. \]

To show that $z \in r_{i,j}$, we prove that the following two conditions hold:
\begin{enumerate}[(i)]
    \item $x \in [x_z,z]$ and $y \in [z+1,y_z]$,
    \item $d_F(\pi({x,z,z+1, y}), \tau_{i,j}) \leq 2\Delta$. 
\end{enumerate}

Since $d_F(Q,P[t_{x},t_{y}]) \leq \Delta$ and shortcutting cannot increase the 
Fréchet-distance to a line segment, we also have \[d_F(Q,\pi(x,z,z+1,y)) \leq \Delta.\]

Similarly, we can conclude $d_F(Q,\tau_{i,j}) \leq \Delta$.
It now follows from the triangle inequality, that
\[ d_F(\pi(x,z,z+1,y), \tau_{i,j}) \leq d_F(\pi(x,z,z+1,y),Q) + d_F(Q, \tau_{i,j} )\leq 2 \Delta. \]
This implies condition (ii). 

The first condition (i) follows in a similar way. Since $r_Q\in S$, there exists a line segment $q_x \subseteq Q$, such that 
$d_F(q_x, P[t_{x},t_{z}])\leq \Delta$. Applying again that shortcutting cannot increase the 
Fréche-distance to a line segment, we also get $d_F(q_x,\tau_{x,z})\leq \Delta$. By the triangle inequality, we have
\[ d_F(\tau_{x,z}, P[t_{x},t_{z}]) \leq d_F(\tau_{x,z},q_x) + d_F(q_x, P[t_{x},t_{z}] )\leq 2 \Delta. \]
Therefore, by Lemma~\ref{lem:monotonicity}, for all $x' \in [x,z]$ 
$d_F(\tau_{x',z}, P[t_{x'}, t_z]) \leq 4\Delta$. As such, $x$ is encountered in the scan and ends up being contained in the interval $[x_z, z]$.

We can make a symmetric argument to show that $d_F(\tau_{z+1,y}, P[t_{z+1},t_{y}])$ and conclude using Lemma~\ref{lem:monotonicity} that $y \in [z+1,y_z]$.
This proves condition (i). 

Together, the above implies that $z \in r_{i,j}$ for $r_{i,j} \in \ASpace_2$. Therefore $r_Q \subseteq r_{i,j}$ for some $r_{i,j} \in \ASpace_2$.
\qed

\subsection{The algorithm}\label{sec:algline}
We intend to use the algorithm of Theorem~\ref{thm:framework} to find a set cover of the set system $\ASpace_2$, since such a set cover gives a $6$-approximation for our clustering problem; see Section~\ref{sec:apendix:framework} for details on the algorithm. The algorithm requires a set system oracle for $\ASpace_2$.
In this section, we describe such a set system oracle. In particular, we show how to build a data structure that answers a query, given indices $i,j$ and $z$, for the predicate $z \in r_{i,j}$ in $O(1)$ time.

\paragraph{The data structure.}
To build the data structure for the oracle, we first compute the indices $x_z$ and $y_z$ for each $1\leq z\leq m-1$, as specified in the definition of the set system in Section~\ref{subsec:setsysline}. 
Next, we construct a data structure that can answer for a pair of breakpoints $i$ and $z$ if there is a breakpoint $x$ with $x_z\leq x\leq z$ such that   $\|P(t_{i})-P(t_{x})\|\leq 2 \Delta$ in $O(1)$ time. For this we build an $m\times m$ matrix $M$ in the following way. For each breakpoint $i$ we go through the sorted list of breakpoints and check if $\|P(t_{i})-P(t_{j})\|\leq 2 \Delta$ for each $1\leq j \leq m$. While doing that, we determine for each $j$  which is the first breakpoint $z_{i,j}\geq j$ with $\|P(t_i)-P(t_{z_{i,j}})\|\leq 2 \Delta$. The entries $z_{i,j}$ are then stored in the matrix $M$ at position $M(i,j)$. Given the Matrix $M$ the oracle can answer if there is a breakpoint $x$ with $x_z\leq x\leq z$ such that   $\|P(t_{i})-P(t_{x})\|\leq 2 \Delta$ by checking if  $M(i,x_z)\leq z$. The data structure can also answer if there is a breakpoint $y$ with $z+1\leq y\leq y_z$ such that   $\|P(t_{j})-P(t_{y})\|\leq 2 \Delta$ by checking if  $M(j,z+1)\leq y_z$. The final data structure stores the matrix $M$ only.

\paragraph{The query.}
We answer queries as follows. Given $z, i$ and $j$, we want to determine if $z \in r_{i,j}$. We return ``yes'', if the following three conditions are satisfied:
\begin{inparaenum}[(i)]
    \item $M(i,x_z)\leq z$
    \item $M(j,z+1)\leq y_z$
    \item $\|s-P(t_z)\| \leq 2\Delta$, where 
     $s$ is the intersection of the bisector between the points $P(t_z)$ and $P(t_{z+1})$ and the line segment $\tau_{i,j}$.  
\end{inparaenum}
Otherwise, the algorithm returns ``no''.

\paragraph{Correctness.}
The above described set system oracle returns the correct answer. Correctness is implied by the following observation, which follows from the analysis of Alt and Godau~\cite{AltG95}. See also Figure~\ref{fig:constant_time}.

\begin{observation}\label{obs:querycases}
$d_F(\pi(x,z,z+1,y), \tau_{i,j}) \leq 2\Delta$ if and only if the following three conditions are satisfied:
\begin{compactenum}[(i)]
    \item $\|P(t_{x})-u\|\leq 2\Delta$
    \item $\|P(t_{y})-v\|\leq 2\Delta$

    \item $ \min_{\genfrac{}{}{0pt}{}{\lambda, \lambda' \in [0,1]}{\lambda \leq \lambda'}} 
    (\| a - (\lambda v + (1-\lambda)u ) \|,
    \| b - (\lambda' v + (1-\lambda') u) \|) \leq 2\Delta $
\end{compactenum}
 where  $a=P(t_z)$, $b=P(t_{z+1})$, $u=P(t_i)$, and $v=P(t_j)$.
\end{observation}  

\begin{figure}
    \centering
    \includegraphics[width=0.39\textwidth]{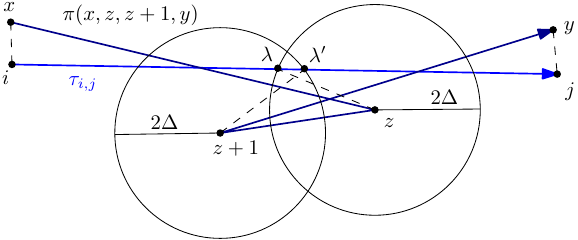}
    \hspace{1cm}
    \includegraphics[width=0.39\textwidth]{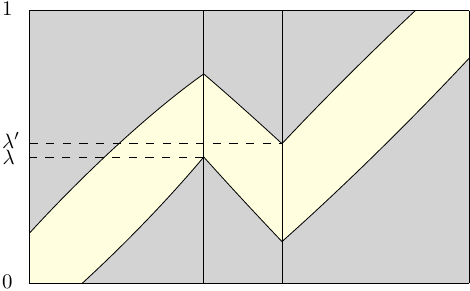}
    \caption{Illustration of Observation~\ref{obs:querycases}. The figure on the right shows the $2\Delta$-free-space diagram of the two curves on the left. A monotone path from the bottom left to the upper right corner of the diagram is feasible iff the three conditions stated in the observation are satisfied. We slightly abuse notation by referring to the vertex $P(t_z)$ with $z$ in all figures, when context is clear.}
    \label{fig:constant_time}
\end{figure}

\paragraph{Running time.} 
Next, we analyse the running time of constructing an oracle for the case $\ell=2$ and query time $O(1)$. In particular we analyse the running time of the scan for the indices $x_z$ (or $y_z$) with $1\leq z<m$ and the running time for building the matrix $M$.

As described above 
the index-scan for $x_z$, given $z$, can be done by checking for breakpoints $x \in \{z-1,\dots,1\}$ in backwards order from $z$ if $d_F(\tau_{x,z},  P[t_{x},t_z]) \leq 4\Delta$.  
Since $\tau_{x,z}$ has complexity $2$ and $P[t_{x},t_z]$ has complexity at most $n$, the check $d_F(\tau_{x,z},  P[t_{x},t_z]) \leq 4\Delta$ can be done in $O(n)$ time and  $O(n)$ space for any $x,z \in \{1,\dots,m\}$ using standard methods  \cite{AltG95}. The scan for $y_z$ is analogous, so we need a total time of $O(mn)$ to scan for all indices.

For building the matrix $M$, the algorithm computes the Euclidean distances of all $\binom{m}{2}$ pairs of breakpoints and while doing that records for each breakpoint $t_j$ the smallest index of a breakpoint after $t_j$ that lies within distance $2\Delta$ to this breakpoint.
In total, this it takes $O(m^2)$ time.
Together with the scan for the indices we get the following runtime for building the oracle.

\begin{restatable}{theorem}{oracled}\label{thm:oracle:d}
One can build a data structure of size $O(m^2)$ in time $O(m(m+n))$ and space $O(n+m^2)$ that answers for an element of the ground set $Z$ and a set of $\ASpace_2$, whether this element is contained in the set in $O(1)$ time.
\end{restatable}

\subsection{The result}
For the set system $(Z,\ASpace_2)$, we have $|Z|=m$ and $|\ASpace_2|=O(m^2)$. Thus, the VC-dimension $\delta$ of the dual set system is trivially bounded by $O(\log m)$.
We combine this with the result for constructing the oracle in Theorem~\ref{thm:oracle:d} and apply Theorem~\ref{thm:framework} to get the following lemma on computing set covers of $\ASpace_2$. Note that we must have $k < m$, since there are only $m-1$ elements in the ground set.

\begin{lemma}
Let $k$ be the minimum size of a set cover for $\ASpace_2$. There exists an algorithm that computes a set cover for $\ASpace_2$ of size $O( k\log^2 (m))$ with an expected running time in 
$ \widetilde{O}\left( k m^2 + mn\right) $ 
and using space in $O(n+m^2)$. 
\end{lemma}

As a direct consequence we get the following result for our clustering problem in the case $\ell=2$ with the help of Lemma~\ref{lem:hittinglines:1}  and Lemma~\ref{lem:hittinglines:2}.

\begin{restatable}{theorem}{mainLines}\label{thm:main:lines}
Let $\curveP: [0,1] \rightarrow \RR^d$ be a polygonal curve of complexity $n$ with breakpoints $0 \leq t_1,\dots,t_m\leq 1$ and let $\Delta > 0$ be a parameter. Assume there exists a set $C^{*} \subset \XX^d_{2}$ of size $k \leq  m$, such that $\phi(P,C^{*})\leq \Delta$. There exists an algorithm that computes a set $C \subset \XX^d_{2}$ of size 
$O( k \log^2 (m))$ such that $\phi(P,C)\leq 6\Delta$. The algorithm has  expected running time in
$ \widetilde{O}\left( k m^2 + mn\right) $
and uses space in $O(n+m^2)$.
\end{restatable}

\section{The main algorithm}\label{sec:main:general}

In this section we extend the scheme described in Section~\ref{sec:alg3} to the case $\ell > 2$. As in the previous section, we only consider the discrete problem, where the input is a polygonal curve with breakpoints. Again, the crucial step is a careful definition of a set system for approximation which allows for an efficient implementation of a set system oracle. The main idea is to replace the edges of the proxy curve $\pi$ from Section~\ref{sec:alg3} by simplifications of the corresponding subcurves. We show that we can do this in a way that ensures that these simplifications are nested in a certain way. This in turn will allow us to build efficient oracle data structures for this set system. We will later show how to use the main elements of this algorithm for the continuous case in Section~\ref{sec:cont}.

\subsection{Simplifications}\label{sec:generatesimplifications}

We begin by introducing the following slightly different notion of simplification.
A curve $\curveQ \in \XX^d_{\ell}$ is an \emph{$(\epsilon,\ell)$-simplification} of a curve $\curveP$ if $Q$ has at most $\ell$ vertices and its Fréchet distance to $\curveP$ is at most $\epsilon$. We call the simplification \emph{vertex-restricted} if $V(\curveQ) \subseteq V(\curveP)$ and the vertices of $Q$ have the same order as in $P$. In this context, we say that a point $p$ of $P$ \emph{corresponds} to an edge $e$ of a vertex-restricted simplification of $P$ if it lies in between the two endpoints of $e$ in $P$.
The main purpose of this section is to  define simplifications $\approxcurvesimpl^+(i,j)$,  $\approxcurvesimpl^-(i,j)$ and $\approxcurvesimpl^\circ(i,i+1)$ for $i,j \in \{1,\dots,m\}$ that we will use in the definition of the set system in the next section. Concretely, the simplifications will be defined as the output of the algorithm by Agarwal et. al.~\cite{Agar05}.
In a nutshell, their algorithm works the following way: Let $P$ be a curve with vertices $p_1,\dots,p_n$. Let $f(\frac{\epsilon}{2})$ denote the minimum number of vertices in a vertex-restricted $(\frac{\epsilon}{2},n)$-simplification of $P$. 
To compute a vertex-restricted $(\epsilon,f(\frac{\epsilon}{2}))$-simplification $P'$ of the curve $P$, the algorithm iteratively adds new vertices to the simplification starting with the first vertex $p_1$ of the curve. In each step it takes the last vertex $p_i$ of the simplification and determines with an exponential search the last integer $j\geq0$ such that $d_F(\overline{p_i p_{i+2^j}},P[p_i,p_{i+2^j}])\leq\epsilon$.  After determining $j$ it finds with a binary search the last integer $r\in [2^{j},2^{j+1}]$ such that $d_F(\overline{p_i p_{i+r}},P[p_i,p_{i+r}])\leq\epsilon$. The algorithm terminates when it reaches $p_n$. 

\begin{figure}
    \centering
    \includegraphics[width=0.7\textwidth]{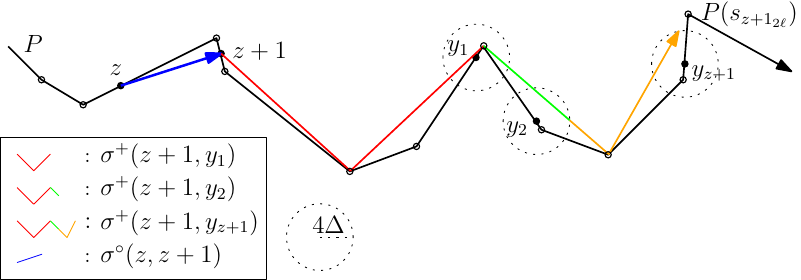}
    \caption{Example of the generated $(4\Delta,2\ell)$-simplifications for a curve $P$ with breakpoints $z, z+1, y_1, y_2$ and $y_{z+1}$ in the case $\ell=2$.}
    \label{fig:generation}
\end{figure}

\paragraph{Generating simplifications.} We now describe how to generate a set of simplifications that will be used in the definition of our set system in Section~\ref{sec:setsystemgeneral}. We apply the above described algorithm on subcurves of $P$ in the following way: For the parameterization $P\colon [0,1] \rightarrow \RR^d$ of $P$ where $P(t_i)$ gives the $i$-th breakpoint of $P$ let $0= s_1\leq \dots \leq s_n= 1$ be the values such that $P(s_j)=p_j$.
For each $z\in\{1,\dots,m\}$ we apply the algorithm with $\epsilon=4\Delta$ on $P[t_z,1]$ to get a simplification $P^+_z$. We stop the algorithm early if the complexity of the simplification reaches $2\ell$. 
If $|P^+_z|=2\ell$ let $P(s_{z_{2\ell}})$ be the $2\ell$-th vertex of $P^+_z$. Otherwise set $P(s_{z_{2\ell}})=p_n$. Let $y_z$ be the last breakpoint of $P$ before $P(s_{z_{2\ell}})$. Let $z\leq y\leq y_z$. Since $P^+_z$ is a  $(4\Delta,2\ell)$-simplification of $P[t_z,1]$, there exists a subcurve $\approxcurvesimpl^+(z,y)$ of $P^+_z$ such that $d_F(\approxcurvesimpl^+(z,y),P[t_z,t_y])\leq4\Delta$. From each possible subcurve with the above property let more specifically $\approxcurvesimpl^+(z,y)$ be the longest subcurve that does not contain any vertex $P(s_i)$ with $s_i\geq t_y$. This subcurve $\approxcurvesimpl^+(z,y)$ is therefore a uniquely defined $(4\Delta,2\ell)$-simplification of $P[t_z,t_y]$ that ends in a point of the edge of $P_z^+$ corresponding to $P(t_y)$. 
Analogously we generate the curve $\approxcurvesimpl^\circ(z,z+1)$ by running the algorithm for the curve $P[t_z,t_{z+1}]$ and the $\approxcurvesimpl^-(x,z)$ by running the algorithm for the direction-inverted curve $P[t_z,0]$. 
We define $P[t_z,0]$ to be the curve $Q:[0,1]\rightarrow \RR^d$ with $Q(t)=P((1-t)t_z)$. Note that it is possible that the algorithm does not find a simplification at all for a specific subcurve. In this case we say the simplification is empty (and we denote this with $\bot$). See also Figure~\ref{fig:generation} for an example of the generated simplifications.

We summarize crucial properties of the generated simplifications in the following two lemmata.  These properties will help to construct an efficient oracle for our set system later. 

\begin{lemma}\label{lem:sigmaplus}
Let $i,j \in \{1,\dots,m\}$ with $i<j$. The curve $\approxcurvesimpl^+(i,j)$ is either a uniquely defined $(4\Delta,2\ell)$-simplification of $P[t_z,t_y]$, or it is $\approxcurvesimpl^+(i,j)=\bot$. In the latter case there exists no $Q\in \XX_\ell^d$ 
such that $d_F(Q,P[t_i,t_j])\leq \Delta$.
Moreover, for any non-empty simplification 
$\approxcurvesimpl^+(i,j)$ and for any $i < j' < j$, the simplification  $\approxcurvesimpl^+(i,j')$ is non-empty and is a subcurve of $\approxcurvesimpl^+(i,j)$.
\end{lemma}

We get symmetric lemmas for the other simplifications. 
We will see in the next section why it is convenient to have these properties in both directions, forwards and backwards along the curve.

\begin{lemma}\label{lem:sigmaminus}
Let $i,j \in \{1,\dots,m\}$ with $i<j$. The curve $\approxcurvesimpl^-(i,j)$ is either a uniquely defined $(4\Delta,2\ell)$-simplification of $P[t_z,t_y]$, or it is  $\approxcurvesimpl^-(i,j)=\bot$. In the latter case there exists no $Q\in \XX_\ell^d$ such that $d_F(Q,P[t_i,t_j])\leq \Delta$. Moreover, for any non-empty simplification 
$\approxcurvesimpl^-(i,j)$ and for any $i < i' < j$ it holds that the simplification $\approxcurvesimpl^-(i',j)$ is non-empty and is a subcurve of $\approxcurvesimpl^-(i,j)$.
\end{lemma}

\begin{lemma}\label{lem:sigma:o}
Let $z \in \{1,\dots,m-1\}$. The curve $\approxcurvesimpl^\circ(z,z+1)$ is either a uniquely defined $(4\Delta,2\ell)$-simplification of $P[t_z,t_{z+1}]$, or it is  $\approxcurvesimpl^\circ(z,z+1)=\bot$. In the latter case there exists no $Q\in \XX_\ell^d$ such that $d_F(Q,P[t_i,t_j])\leq \Delta$. 
\end{lemma}
Lemma~\ref{lem:sigmaplus} follows directly from the following lemma. Lemma~\ref{lem:sigmaminus} and Lemma~\ref{lem:sigma:o} follow by using  symmetric arguments.

\begin{lemma}
Consider the generating process described in Section~\ref{sec:generatesimplifications}.
Let $y$ be a breakpoint of $P$ with $t_y> s_{z_{2\ell}}$. There exists no $Q\in \XX_\ell^d$ such that $d_F(Q,P[t_z,t_y])\leq \Delta$.
\end{lemma}

\begin{proof}
Let $1\leq v\leq n$ such that $s_{v-1}\leq t_y\leq s_v$. So $P(s_v)$ is the first vertex of $P$ after the breakpoint $y$.   Assume there exists a $Q\in \XX_l^d$ such that $d_F(Q,P[t_z,t_y])\leq \Delta$. 

To get a contradiction we will show that, with this assumption, we can construct a vertex-restricted $(2\Delta,2\ell-1)$-simplification of $P[t_z,s_v]$. Let $f(2\Delta)$ denote the minimum number of vertices in a vertex-restricted $(2\Delta,n)$-simplification of $P[t_z,s_v]$. Note that $v>z_{2\ell}$.  So the vertex-restricted $(4\Delta,f(2\Delta))$-simplification $P'$ of the subcurve $P[t_z,s_v]$ computed with the algorithm of Agarwal et. al. has a complexity of at least $2\ell+1$. This follows by the definition of $P(s_{z_{2\ell}})$. Therefore we have $f(2\Delta)\geq 2\ell+1$. But our constructed vertex-restricted $(2\Delta,2\ell-1)$-simplification then would directly contradictict $f(2\Delta)\geq 2\ell+1$.

For the construction of the $(2\Delta,2\ell-1)$-simplification let $\subcurveP= P[t_z,s_{v-1}]$. Since $Q$ is a $(\Delta,\ell)$-simplification of $P[t_z,t_y]$, there exists a subcurve $\tilde{Q}$ of $Q$ with $d_F(\tilde{Q},\subcurveP)\leq \Delta$.
Let $e_1,\dots, e_k$ be the edges of $\tilde{Q}$ and $\tilde{p}_1,\dots,\tilde{p}_j$ be the vertices of $\subcurveP$. It is $k\leq l-1$ and $j\leq n$. Let $\gamma$ be a strictly monotone increasing function such that
\[ d_F(\subcurveP,\tilde{Q}) =  \sup_{t \in [0,1]}
\| \subcurveP(t) - \tilde{Q}(\gamma(t)) \|\leq \Delta. \]
Let further 
\[t_{i_1}= \min\{t\in [0,1]\;|\; \tilde{Q}(\gamma(t))\in e_i, \subcurveP(t)\in \{\tilde{p}_1,\dots,\tilde{p}_j\}\}\]
be the first vertex of $\subcurveP$ that gets mapped to $e_i$ and 
\[t_{i_2}= \max\{t\in [0,1]\;|\; \tilde{Q}(\gamma(t))\in e_i, \subcurveP(t)\in \{\tilde{p}_1,\dots,\tilde{p}_j\}\}\]
be the last vertex of $\subcurveP$ that gets mapped to $e_i$. By construction we have 
\[d_F(\overline{\subcurveP(t_{i_1})\subcurveP(t_{i_2})},\overline{\tilde{Q}(\gamma(t_{i_1}))\tilde{Q}(\gamma(t_{i_2})})\leq \Delta\]
and therefore with the use of triangle inequality
\begin{eqnarray*}
&& d_F(\overline{\subcurveP(t_{i_1})\subcurveP(t_{i_2})},\subcurveP[t_{i_1},t_{i_2}])\\
&\leq& d_F(\overline{\subcurveP(t_{i_1})\subcurveP(t_{i_2})},\overline{\tilde{Q}(\gamma(t_{i_1}))\tilde{Q}(\gamma(t_{i_2})})+d_F(\overline{\tilde{Q}(\gamma(t_{i_1}))\tilde{Q}(\gamma(t_{i_2})},\subcurveP[t_{i_1},t_{i_2}])\\
&\leq& \Delta+\Delta \\
&=& 2\Delta
\end{eqnarray*}
Since $\subcurveP(t_{i_2})$ and $\subcurveP(t_{(i+1)_1})$ are consecutive vertices of $\subcurveP$, we also have
\[d_F(\overline{\subcurveP(t_{i_2})\subcurveP(t_{(i+1)_1})},\subcurveP[t_{i_2},t_{(i+1)_1}])=0.\]
So we can construct a  $(2\Delta,2\ell-1)$-simplification of $P[t_z,s_v]$ by concatenating the vertices 
\[\subcurveP(t_{1_1}),\subcurveP(t_{1_2}),\subcurveP(t_{2_1}),\subcurveP(t_{2_2}),\dots,\subcurveP(t_{k_1}),\subcurveP(t_{k_2}),P(s_{v}).\]
To see that the resulting curve is indeed a vertex-restricted simplification,  we observe that $\subcurveP(t_{1_1})=\subcurveP(0)=P(t_z)$ and that the edge from $\subcurveP(t_{k_2})=P(s_{v-1})$ to $P(s_{v})$ is entirely included in $P$. 
\end{proof}

\subsection{The set system}\label{sec:setsystemgeneral}

We are now ready to define the new set system $\ASpace_3$ with ground set $Z=\{ 1, \dots, m-1 \}$. The set system depends on the simplifications of subcurves of $P$ defined in the previous section. 
Let $(i,j)$ be a tuple with $1\leq i \leq j \leq m$. We say $r_{i,j} = \emptyset$ if there is no  $Q\in \XX_\ell^d$ such that $d_F(Q,P[t_i,t_j])\leq \Delta$.
Otherwise, we define a set $r_{i,j} \in \ASpace_3$ as follows
\[ r_{i,j} = \{ z \in Z \mid  \exists x \in [x_z,z], y \in [z+1,y_{z+1}] \text{ with } d_F(
\kappa_z(x,y), \approxcurvesimpl^+(i,j) )  \leq 10\Delta \},  \]
where
\[\kappa_z(x,y) = \approxcurvesimpl^-(x,z) \oplus \approxcurvesimpl^{\circ}(z,z+1) \oplus \approxcurvesimpl^+(z+1,y)\]
and $x_z \leq z$ is the smallest index such that $\approxcurvesimpl^-(x,z)\neq\bot$ for all $x_z\leq x\leq z$ and $y_{z+1} \geq z+1$ is the highest index such that $\approxcurvesimpl^+(z+1,y)\neq\bot$ for all $z+1\leq y\leq y_{z+1}$. For an example of a curve $P$ with breakpoints $z,i,j$ such that $z\in r_{i,j}$ see  Figure~\ref{fig:newsetsys}.
Note that, by Lemma~\ref{lem:sigma:o} the curve $\approxcurvesimpl^\circ(z,z+1)$ is non-empty for all $z\in\{1,\dots,m-1\}$ if there exists a set of cluster centers $C\subset \XX_\ell^d$ such that $\Phi(P,C)\leq\Delta$. So in this case the set system is well-defined as implied by the Lemmas~\ref{lem:sigmaplus}, \ref{lem:sigmaminus} and \ref{lem:sigma:o}.

\begin{figure}
    \centering
    \includegraphics[width=0.8\textwidth]{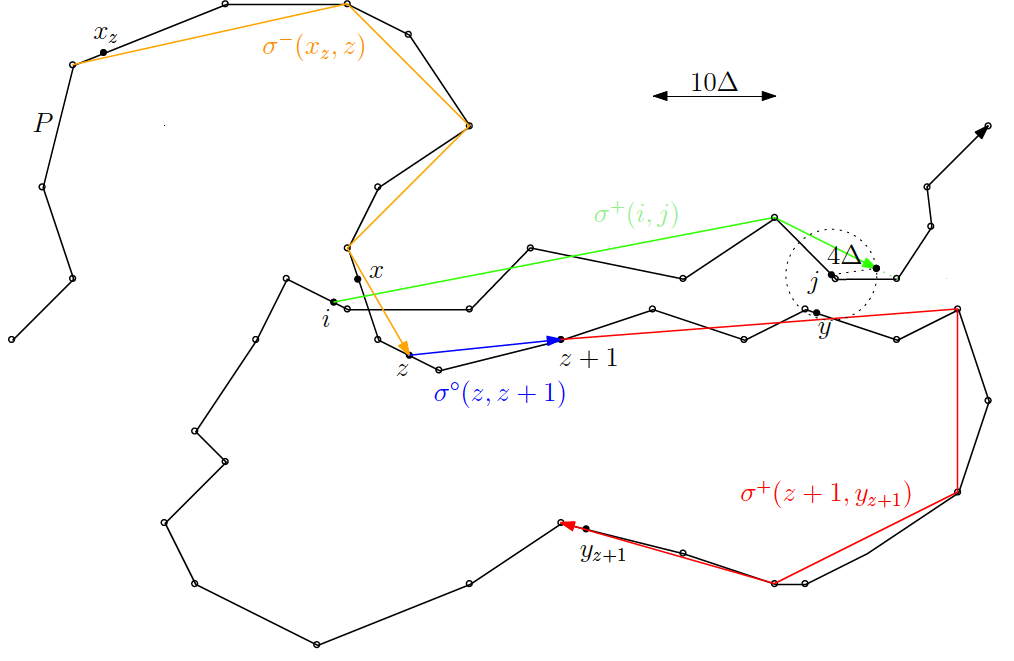}
    \includegraphics[width=0.8\textwidth]{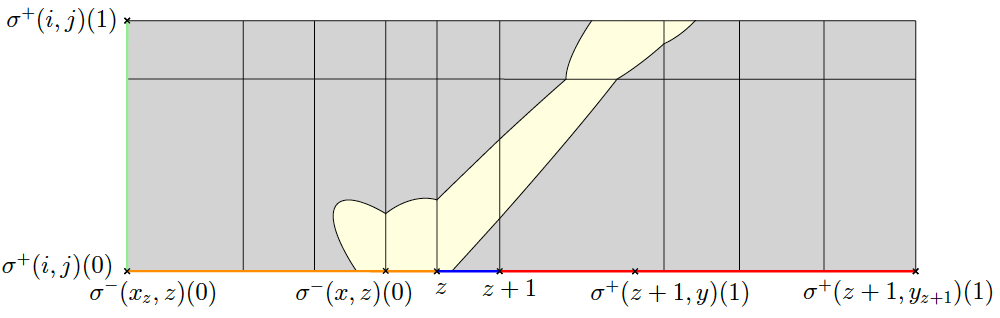}
    \caption{Example of a curve $P$ such that $z\in r_{i,j}$ for some $r_{i,j}\in \ASpace_3$. Also shown is the $10\Delta$-free space diagram of $\kappa_z(x,y)$ and $\approxcurvesimpl^+(i,j)$. Simplification $\approxcurvesimpl^+(i,j)$ demonstrates that the simplifications do not have to be vertex-restricted.}
    \label{fig:newsetsys}
\end{figure}

\subsection{Analysis of the approximation error}

We show correctness in the same schema as in Section \ref{sec:approxerror:line}. In particular, we prove Lemma~\ref{lem:hittingpoly:1} and Lemma~\ref{lem:hittingpoly:2}.

\begin{lemma}\label{lem:hittingpoly:1} 
Let $S$ be a set cover of size $k$ for $\ASpace_3$. We can derive from $S$ a set of $3k$ cluster centers $C \subseteq \XX^d_{\ell}$ and such that $\phi(P,C) \leq 14\Delta$.
\end{lemma}

\begin{proof}
To construct $C$ from $S$ we take  for each tuple $r_{i,j} \in S$ the center curve $\approxcurvesimpl^+(i,j)$. Let $z \in r_{i,j}$. By the definition of $r_{i,j}$ there are $x \in [x_z, z]$ and  $y \in [z+1,y_z]$ such that $d_F(\kappa_z(x,y), \approxcurvesimpl^+(i,j) ) \leq 10\Delta$. In the following we show that $d_F(\approxcurvesimpl^+(i,j) , P[t_{x},t_{y}]) \leq 14 \Delta$. With the triangle inequality we get
\begin{align*}
   d_F(\approxcurvesimpl^+(i,j), P[t_{x},t_{y}]) &\leq d_F(\approxcurvesimpl^+(i,j),\kappa_z(x,y))+ d_F(\kappa_z(x,y),P[t_{x},t_{y}])\\&\leq 10 \Delta +d_F(\kappa_z(x,y),P[t_{x},t_{y}])
\end{align*}
It remains to show that \[d_F(\kappa_z(x,y),P[t_{x},t_{y}])\leq 4 \Delta.\] This follows directly because the distance $d_F(\kappa_z(x,y),P[t_{x},t_{y}])$ is at most the maximum of the distances
$d_F(\approxcurvesimpl^-(x,z),P[t_{x},t_{z}])$, $d_F(\approxcurvesimpl^\circ(z,z+1),P[t_{z},t_{z+1}])$ and $d_F(\approxcurvesimpl^+(z+1,y),P[t_{z+1},t_{y}]))$. We use here that $\approxcurvesimpl^-(x,z)$, $\approxcurvesimpl^\circ(z,z+1)$ and $\approxcurvesimpl^+(z+1,y)$ are $(4\Delta,2\ell)$-simplifications of the corresponding subcurves. Since $S$ is a set cover, it holds for the ground set $Z=\{1,\dots,m-1\}$, that $Z = \bigcup_{(i,j) \in S} r_{i,j}$. Therefore, if we choose $C'=\{\approxcurvesimpl^+(i,j)\;|\;r_{i,j}\in S\}$, we get $\phi(P,C') \leq 14\Delta$. Note that $C'\subseteq \XX^d_{2\ell}$. Let $c\in C'$ with vertices $c_1,\dots, c_N$ where $N\leq 2\ell$. We can split $c$ into 3 curves $c^{(1)},c^{(2)},c^{(3)}$ of complexity at most $\ell$, where $c^{(1)}$ is defined by the vertices $c_1,\dots,c_{\min(N,\ell)}$, the curve $c^{(2)}$ is defined by the vertices $c_{\min(N,\ell)},\dots,c_{\min(N,2\ell-1)}$ and  the curve $c^{(3)}$ is defined by the vertices $c_{\min(N,2\ell-1)},\dots,c_{\min(N,2\ell)}$.   If we split each curve $c\in C'$ as described above, we obtain a set $C\subseteq \XX^d_{\ell}$ with $|C|=3|C'|$ and $\phi(P,C) \leq 14\Delta$.  
\end{proof}

\begin{lemma}\label{lem:hittingpoly:2}
If there exists a set cover $S$ of $\RSpace$, then there exists a set cover of the same size for $\ASpace_3$.
\end{lemma}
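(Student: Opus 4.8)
The plan is to mirror the structure of the proof of Lemma~\ref{lem:hittinglines:2}, replacing the line-segment shortcutting argument with the nested-simplification properties established in Lemmas~\ref{lem:sigmaplus}, \ref{lem:sigmaminus} and \ref{lem:sigma:o}. It suffices to show that for every $r_Q \in \RSpace$ there is some $r_{i,j} \in \ASpace_3$ with $r_Q \subseteq r_{i,j}$; since distinct sets of $\RSpace$ may then be covered using the corresponding (possibly fewer) sets of $\ASpace_3$, this gives a set cover of $\ASpace_3$ of size at most $|S|$. Writing $Y$ for the set of pairs $(i,j)$ with $d_F(Q,P[t_i,t_j]) \le \Delta$, we have $r_Q = \bigcup_{(i,j)\in Y} [i,j)\cap\NN$, so it is enough to fix one $(i,j)\in Y$ and prove $r_Q \subseteq r_{i,j}$ (here $r_{i,j}\ne\emptyset$ since $Q$ witnesses the contrary, and $\approxcurvesimpl^+(i,j)\ne\bot$ by Lemma~\ref{lem:sigmaplus}).

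So fix $z \in r_Q$ and pick $x \le z < y$ with $d_F(Q,P[t_x,t_y])\le\Delta$. I need to verify the two membership conditions in the definition of $r_{i,j}$. First, the index conditions $x\in[x_z,z]$ and $y\in[z+1,y_{z+1}]$: since $d_F(Q,P[t_x,t_y])\le\Delta$ there is a subcurve $\tilde Q$ of $Q$ with $d_F(\tilde Q,P[t_x,t_z])\le\Delta$, so $\tilde Q$ is a $(\Delta,\ell)$-simplification of $P[t_x,t_z]$; by the contrapositive of the ``latter case'' statement in Lemma~\ref{lem:sigmaminus}, $\approxcurvesimpl^-(x,z)\ne\bot$, and then by the nesting part of Lemma~\ref{lem:sigmaminus} $\approxcurvesimpl^-(x',z)\ne\bot$ for all $x\le x'\le z$, which is exactly the condition defining $x_z$, hence $x\ge x_z$. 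The argument for $y\le y_{z+1}$ is symmetric using Lemma~\ref{lem:sigmaplus} applied to $P[t_{z+1},t_y]$ (and Lemma~\ref{lem:sigma:o} guarantees $\approxcurvesimpl^\circ(z,z+1)\ne\bot$). For the distance condition $d_F(\kappa_z(x,y),\approxcurvesimpl^+(i,j))\le 10\Delta$, I bound by triangle inequality through $Q$: $d_F(\kappa_z(x,y),\approxcurvesimpl^+(i,j)) \le d_F(\kappa_z(x,y),Q) + d_F(Q,\approxcurvesimpl^+(i,j))$. The second term is $\le d_F(Q,P[t_i,t_j]) + d_F(P[t_i,t_j],\approxcurvesimpl^+(i,j)) \le \Delta + 4\Delta = 5\Delta$ using that $\approxcurvesimpl^+(i,j)$ is a $(4\Delta,2\ell)$-simplification of $P[t_i,t_j]$. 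For the first term, $d_F(\kappa_z(x,y),Q) \le d_F(\kappa_z(x,y),P[t_x,t_y]) + d_F(P[t_x,t_y],Q) \le 4\Delta + \Delta = 5\Delta$, where the $4\Delta$ bound on $d_F(\kappa_z(x,y),P[t_x,t_y])$ comes from the concatenation argument already used in Lemma~\ref{lem:hittingpoly:1} (each of $\approxcurvesimpl^-(x,z)$, $\approxcurvesimpl^\circ(z,z+1)$, $\approxcurvesimpl^+(z+1,y)$ is within $4\Delta$ of its subcurve, and Fréchet distance of a concatenation is the max of the pieces' distances when the split points are respected). Summing gives $\le 10\Delta$, so $z\in r_{i,j}$.

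The main obstacle I anticipate is the concatenation/monotonicity bookkeeping: to conclude $d_F(\kappa_z(x,y),P[t_x,t_y])\le 4\Delta$ from the three piecewise bounds one must be careful that the three simplifications $\approxcurvesimpl^-(x,z)$, $\approxcurvesimpl^\circ(z,z+1)$, $\approxcurvesimpl^+(z+1,y)$ share endpoints at $P(t_z)$ and $P(t_{z+1})$ so that the alignments glue into a single monotone reparametrization — this is where the ``ends in a point of the edge corresponding to $P(t_y)$'' refinement in the definition of $\approxcurvesimpl^\pm$ matters, and it needs a sentence of justification rather than being fully routine. Everything else (the rewriting of $r_Q$, the symmetric forward/backward index arguments, the triangle-inequality chains) follows the template of Section~\ref{sec:approxerror:line} essentially verbatim. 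Finally I note that, as in Lemma~\ref{lem:hittinglines:2}, once $r_Q\subseteq r_{i,j}$ is established for each $r_Q$, a set cover $S$ of $\RSpace$ of size $k$ yields a set cover $\{r_{i,j} : r_Q\in S\}$ of $\ASpace_3$ of size at most $k$, completing the proof.
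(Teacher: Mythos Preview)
Your proposal is correct and follows essentially the same approach as the paper: the same reduction to showing $r_Q \subseteq r_{i,j}$ for any fixed $(i,j)\in Y$, the same two conditions to verify, and the identical triangle-inequality chain yielding $5\Delta + 5\Delta = 10\Delta$. The one concern you flag about endpoint-matching in the concatenation is not a real obstacle here: by construction the Agarwal~et~al.\ simplification always starts at the first vertex of its input, so $\approxcurvesimpl^-(x,z)$ ends at $P(t_z)$, $\approxcurvesimpl^\circ(z,z+1)$ runs from $P(t_z)$ to $P(t_{z+1})$, and $\approxcurvesimpl^+(z+1,y)$ starts at $P(t_{z+1})$, which is exactly what the glue-the-alignments argument needs; the paper simply takes this for granted.
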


\begin{proof}
 We claim that for any set $r_{Q} \in \RSpace$  there exists a set $r \in \ASpace_3$,  such that $r_Q \subseteq r$. This claim implies the lemma statement. It remains to prove the claim.

We can rewrite the definition of $r_Q$. Let $Y$ be the set of tuples $(i,j) \in \NN^2$ with $1 \leq i < j \leq m$ and  $d_F(Q,\curveP[t_i,t_j]) \leq \Delta$. We have that
$r_{Q} =  \bigcup_{(i,j) \in Y} [i,j) \cap \NN $.

Let $(i,j) \in Y$. We show that $r_Q \subseteq r_{i,j} \in \ASpace_3$. Let $z \in r_Q$. By the definition of $r_Q$ we have
\[\exists~ x \leq z < y \text{ s.t. }  d_F(Q,\curveP[t_x,t_y]) \leq \Delta \]

To show that $z \in r_{i,j}$, we prove that the following two conditions hold:
\begin{enumerate}[(i)]
    \item $x \in [x_z,z]$ and $y \in [z+1,y_z]$,
    \item $d_F(\kappa_z(x,y), \approxcurvesimpl^+(i,j) ) \leq 10\Delta$. 
\end{enumerate}
As stated above, we have $d_F(Q,P[t_{x},t_{y}]) \leq \Delta$. Therefore we can subdivide $Q$ into 3 subcurves $Q_x,Q_z,Q_y$ such that \[\max(d_F(Q_x,P[t_{x},t_{z}]),d_F(Q_z,P[t_{z},t_{z+1}]),d_F(Q_y,P[t_{z+1},t_{y}])) \leq \Delta\] 
Each of the subcurves has complexity at most $\ell$ since $Q$ has complexity at most $\ell$. By the Lemmas~\ref{lem:sigmaminus} and \ref{lem:sigmaplus},  we have  $\approxcurvesimpl^-(x',z)\neq\bot$ for all $x\leq x'\leq z$ and  $\approxcurvesimpl^+(z+1,y')\neq\bot$ for all $z+1\leq y'\leq y_{z+1}$.
We can conclude that $x \in [x_z,z]$ and $y \in [z+1,y_z]$ and therefore condition (i) is fulfilled.

To prove condition (ii) we can use the triangle inequality to get
\[   d_F(\kappa_z(x,y), \approxcurvesimpl^+(i,j) ) \leq d_F(\kappa_z(x,y), Q)+ d_F(Q, \approxcurvesimpl^+(i,j))  \]
Since we have
\begin{align*}
    d_F(\kappa_z(x,y), Q) &\leq d_F(\kappa_z(x,y), P[t_{x},t_{y}])+ d_F(P[t_{x},t_{y}], Q)\\
    &\leq 4\Delta +\Delta\\
    &= 5\Delta
\end{align*}
and
\begin{align*}
    d_F(Q,\approxcurvesimpl^+(i,j)) &\leq d_F(Q, P[t_{i},t_{j}])+ d_F(P[t_{i},t_{j}], \approxcurvesimpl^+(i,j))\\
    &\leq \Delta +4\Delta\\
    &= 5\Delta
\end{align*}
we get in total
\[  d_F(\kappa_z(x,y), \approxcurvesimpl^+(i,j) )\leq 10\Delta\]
Together, the above implies that $z \in r_{i,j}$ and therefore $r_Q \subseteq r_{i,j}$.
\end{proof}

\subsection{The approximation oracle}\label{sec:approxorac}
To find a set cover of the set system $\ASpace_3$ we want to use the framework described in Section~\ref{sec:genfram}. But to apply Theorem~\ref{thm:framework} directly we would need to implement an oracle that answers for an element of the ground set $Z = \{1,\dots,m-1\}$ and a set of $\ASpace_3$, whether this element is contained in the set. In this section we describe how to answer such queries approximately. In the next section (Section~\ref{sec:application}) we then show how to apply Theorem~\ref{thm:framework}.

The approximation oracle will have the following properties. Given a set $r_{i,j} \in \ASpace_3$
and an element $z \in Z$ this approximation oracle returns either one of the following answers:
\begin{compactenum}[(i)]
\item "Yes", in this case there exists $x \in [x_z,z]$ and $ y \in [z+1,y_{z+1}]$ with $d_F(
\kappa_z(x,y), \approxcurvesimpl^+(i,j) )  \leq 46\Delta$
\item  "No", in this case $(i,j)\notin r_z$.
\end{compactenum}
In both cases the answer is correct.

To construct the approximation oracle we build a data structure that answers a query, given indices $i$,$j$ and $z$, for the predicate $z\in r_{i,j}$ in $O(\ell^2)$ time. In particular we need a data structure that can build a free space diagram of the curves $\kappa_z(x_z,y_{z+1})$ and $\approxcurvesimpl^+(i,j)$ to bound the distance $d_F(\kappa_z(x,y), \approxcurvesimpl^+(i,j) )$ for every $x\in[x_z,z]$ and $y\in[z+1,y_{z+1}]$. In this context we define active edges of the simplifications $\approxcurvesimpl^-(x_z,z)$ and $\approxcurvesimpl^+(z+1,y_{z+1})$ with respect to $r_{i,j}$ since the data structure needs to be able to find these efficiently to answer the query. Recall that a point of $P$ is said to correspond to an edge $e$ of a vertex-restricted simplification of $P$ if it lies in between the two endpoints of $e$ in $P$.
\begin{definition}Let $z,i,j$ be breakpoints of $P$.
An edge $e$ of the simplification $\approxcurvesimpl^-(x_z,z)$ is \emph{active} with respect to $r_{i,j}$ if
there is a breakpoint $x\in[x_z,z]$ corresponding to $e$ with $d(P(t_x),P(t_i))\leq 18 \Delta$. 
An edge $e$ of the simplification $\approxcurvesimpl^+(z+1,y_{z+1})$ is \emph{active} with respect to $r_{i,j}$ if there is a breakpoint $y\in[z+1,y_{z+1}]$  corresponding to $e$ with $d(P(t_y),P(t_j))\leq 18 \Delta$. 
\end{definition}

\begin{figure}
    \centering
    \includegraphics[width=0.8\textwidth]{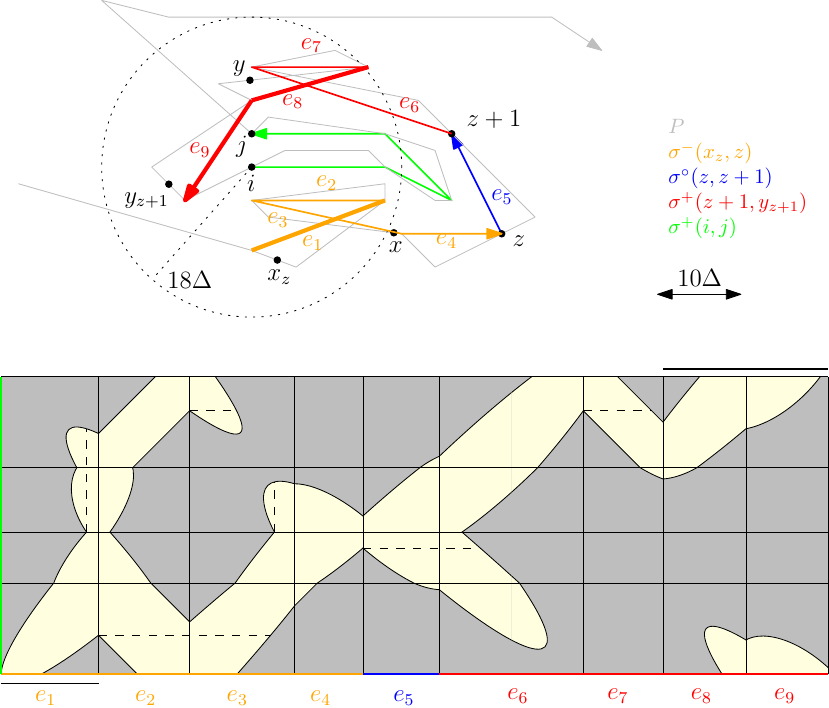}
    \caption{Example of a curve $P$ and breakpoints $x_z$, $x$, $z$, $z+1$, $y$, $y_{z+1}$, $i$ and $j$. The active edges are $e_1$, $e_8$ and $e_9$ since there are breakpoints corresponding to these edges within distance $18\Delta$ to $i$ or $j$ respectively. There is, however, no strictly monotone path from $e_1$ on the bottom to $e_8$ or $e_9$ on the top in the $10\Delta$-free space of $\approxcurvesimpl^+(i,j)$ and $\kappa_z(x_z,y_{z+1})$. So we have $z\notin r_{i,j}$.} 
    \label{fig:activeedge}
\end{figure}

So an active edge is an edge of the simplification that contains the image of a breakpoint that is close to $i$ or $j$ respectively. The active edges will become relevant for answering a query since in the case that $z\in r_{i,j}$ there exist breakpoints $x$ and $y$ on active edges such that $d_F(\kappa_z(x,y), \approxcurvesimpl^+(i,j) )\leq 10\Delta$. For an approximate solution it will suffice to check the existence of a strictly monotone path in the free space diagram that start on an active edge of $\approxcurvesimpl^-(x_z,z)$ and end in an active edge of $\approxcurvesimpl^+(z+1,y_{z+1})$. The advantage is that this can be done faster than checking if $d_F(\kappa_z(x,y), \approxcurvesimpl^+(i,j) )\leq 10\Delta$ for each $x\in[x_z,z]$ and $y\in[z+1,y_{z+1}]$. See Figure~\ref{fig:activeedge} for an example.

\paragraph{The query.}
Given $z,i,j\in\{1,\dots,m\}$ the oracle is therefore checking if $z\in r_{i,j}$ the following way:

First it builds a free space diagram of $\approxcurvesimpl^+(i,j)$ and $\kappa_z(x_z,y_{z+1})$ for the distance $10 \Delta$. Then it checks for each edge on  $\approxcurvesimpl^-(x_z,z)$ and on $\approxcurvesimpl^+(z+1,y_{z+1})$ if it is active. In the end, the oracle checks if there is a monotone increasing path in the $10\Delta$-free space that starts on an active edge of $\approxcurvesimpl^-(x_z,z)$ in one coordinate and  $\approxcurvesimpl^+(i,j)(0)$ in the other coordinate and ends on  an active edge of $\approxcurvesimpl^+(z+1,y_{z+1})$ in one coordinate and $\approxcurvesimpl^+(i,j)(1)$ in the other coordinate. The oracle returns "Yes" if such a path exists. See Figure \ref{fig:pathfrs} for an example of a "Yes" answer.

To do the above steps efficiently an underlying data structure for the oracle has to be built in the preprocessing. We  will first show how the data structure is built and then prove the correctness of the oracle and analyse its running time.

\paragraph{The data structure.}
The data structure is built in two steps. The first step is to compute the simplifications. The second step consists of constructing a data structure for the breakpoints that can be used to determine active edges.

We compute the simplifications $\approxcurvesimpl^-(x_z,z)$, $\approxcurvesimpl^\circ(z,z+1)$ and $\approxcurvesimpl^+(z,y_z)$ for every breakpoint $z \in \{1,\dots,m\}$ by running the algorithm of Agarwal et. al. \cite{Agar05} up to complexity $2\ell$. For each edge $e$ of  $\approxcurvesimpl^-(x_z,z)$ and $\approxcurvesimpl^+(z,y_z)$, we save the first breakpoint $x_e$ and the last breakpoint $y_e$ that corresponds to $e$.
    
In addition to these simplifications, the oracle also needs the simplification $\approxcurvesimpl^+(i,j)$ to build the free space diagram. Note that $\approxcurvesimpl^+(i,j)$ does not need to be stored in the data structure since for all $i,j\in \{1,\dots,m\}$, the  simplification $\approxcurvesimpl^+(i,j)$ can be constructed using $\approxcurvesimpl^+(i,j_i)$. To do so, the oracle does binary search to find the edge $e$ of $\approxcurvesimpl^+(i,j_i)$ such that $j$ corresponds to $e$. Then, the oracle computes the last point of $e$ that intersects the ball $B(t_j,4\Delta)$. The subcurve of $\approxcurvesimpl^+(i,j_i)$ up to this point is $\approxcurvesimpl^+(i,j)$.

The oracle needs to determine which edges are active. For this we construct a data structure in the same way as described for the case $\ell=2$ in Section~\ref{sec:algline}. We build an $m\times m$ matrix $M$ which  stores the following information. For each breakpoint $i$ we go through the sorted list of breakpoints and check if $d(P(t_i),P(t_j))\leq 18 \Delta$ for each $1\leq j \leq m$. While doing that, we determine for each $j$  which is the first breakpoint $z_{i,j}\geq j$ with $d(P(t_i),P(t_j))\leq 18 \Delta$. The entries $z_{i,j}$ are then stored in the matrix $M$.
    
Let $x_e (y_e)$ be the first (last) breakpoint corresponding to the edge $e$. To check if there is one breakpoint $z$ on an edge $e$ of a simplification such that   $d(P(t_i),P(t_z))\leq 18 \Delta$ for some other breakpoint $i$, we only have to check if $z_{i,x_e}\geq y_e$. This is exactly what we need to check to decide if an edge is active and can be done in constant time given the matrix $M$.
    
Overall, the data structure therefore consists of $O(m)$ simplifications with pointers to the first (last) element of each edge and the matrix $M$ of size $O(m^2)$ containing the $z_{i,j}$-entries. This data structure is then used for each query to build a free space diagram and to find the active edges. The existence of a monotone increasing path is then tested by computing the reachability of active edges from active edges in the free space diagram. This can be done using the standard methods described by Alt and Godau~\cite{AltG95} in the following way.

The free space diagram of the $10\Delta$-free space $F$ can be divided into cells that each correspond to a pair of edges, one from each curve  $\kappa_z(x_z,y_{z+1})$ and $\approxcurvesimpl^+(i,j)$. Let us denote with $C_{s,t}$ the cell of the free space diagram corresponding to the $s$-th edge of $\approxcurvesimpl^+(i,j)$ and the $t$-th edge $e_t$ of $\kappa_z(x_z,y_{z+1})$. We further denote with $L_{s,t}$ and $B_{s,t}$ the left and bottom line segment bounding the cell $C_{s,t}$. We also define $L_{s,t}^F=L_{s,t}\cap F$ and $B_{s,t}^F=B_{s,t}\cap F$. 

We need to calculate the reachable space $R\subseteq F$ where a point $p\in F$ is in $R$ if and only if there exists an active edge $e_t$ of $\approxcurvesimpl^-(x_z,z)$ such that there exists a monotone increasing path within $F$ from $B_{1,t}^F$ to $p$. We further define $L_{s,t}^R=L_{s,t}\cap R$ and $B_{s,t}^R=B_{s,t}\cap R$.

Note that given $L_{s,t}^R$, $B_{s,t}^R$,$L_{s+1,t}^F$ and $B_{s,t+1}^F$, we can construct $L_{s+1,t}^R$ and $L_{s,t+1}^R$ in constant time. So, given that we know for each edge $e_t$ of $\approxcurvesimpl^-(x_z,z)$, whether it is active or not, we can compute $L_{1,t}^R$ and $B_{1,t}^R$ for all edges $e_t$. With these we can iteratively construct all $L_{s,t}^R$ and $B_{s,t}^R$, proceeding row by row in the free space diagram. 

Let $s^*\leq 2\ell$ be the number of edges of $\approxcurvesimpl^+(i,j)$. We get the following directly from the definition of $R$. There exists an active edge $e_t$ of $\approxcurvesimpl^+(z+1,y_{z+1})$ such that $B_{s^*+1,t}^R\neq \emptyset$ if and only if there is a monotone increasing path starting and ending in an active edge. So we only have to check for all active edges $e_t$ of $\approxcurvesimpl^+(z+1,y_{z+1})$ if $B_{s^*+1,t}^R\neq \emptyset$.

\paragraph{Correctness.}To show the correctness of the oracle we show the following lemma.
\begin{lemma}
Let $z,i,j\in\{1,\dots,m\}$. Consider the query $z\in r_{i,j}$. If the approximation oracle returns the answer 
\begin{compactenum}[(i)]
\item "Yes", then there exists $x \in [x_z,z]$ and $ y \in [z+1,y_{z+1}]$ with $d_F(
\kappa_z(x,y), \approxcurvesimpl^+(i,j) )  \leq 46\Delta$
\item  "No", then we have  $z \notin r_{i,j}$.
\end{compactenum}
\label{lem:correctapror}
\end{lemma}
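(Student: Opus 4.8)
The plan is to prove the two implications separately, analyzing what the oracle actually tests. Recall that on a query $(z,i,j)$ the oracle builds the $10\Delta$-free space diagram of $\approxcurvesimpl^+(i,j)$ against the concatenated curve $\kappa_z(x_z,y_{z+1})=\approxcurvesimpl^-(x_z,z)\oplus\approxcurvesimpl^\circ(z,z+1)\oplus\approxcurvesimpl^+(z+1,y_{z+1})$, marks the active edges of $\approxcurvesimpl^-(x_z,z)$ and $\approxcurvesimpl^+(z+1,y_{z+1})$, and returns ``Yes'' iff there is a strictly monotone path in that free space from (the bottom of) an active edge of $\approxcurvesimpl^-(x_z,z)$ at parameter $\approxcurvesimpl^+(i,j)(0)$ to (the top of) an active edge of $\approxcurvesimpl^+(z+1,y_{z+1})$ at parameter $\approxcurvesimpl^+(i,j)(1)$.

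First I would handle the ``Yes'' case (soundness). Suppose the oracle finds such a monotone path. The endpoints of the path lie on an active edge $e$ of $\approxcurvesimpl^-(x_z,z)$ and an active edge $e'$ of $\approxcurvesimpl^+(z+1,y_{z+1})$. By definition of active, there is a breakpoint $x\in[x_z,z]$ corresponding to $e$ with $\|P(t_x)-P(t_i)\|\leq 18\Delta$, and a breakpoint $y\in[z+1,y_{z+1}]$ corresponding to $e'$ with $\|P(t_y)-P(t_j)\|\leq 18\Delta$. I want to conclude $d_F(\kappa_z(x,y),\approxcurvesimpl^+(i,j))\leq 46\Delta$. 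The monotone path witnesses $d_F$ at most $10\Delta$ between $\approxcurvesimpl^+(i,j)$ and the \emph{portion} of $\kappa_z(x_z,y_{z+1})$ between the chosen point on $e$ and the chosen point on $e'$; call this portion $\rho$. The curve $\kappa_z(x,y)$ differs from $\rho$ only in two ``tails'': near the start, $\kappa_z(x,y)$ begins at $\approxcurvesimpl^-(x,z)(0)=P(t_x)$ whereas $\rho$ begins somewhere on edge $e$, and similarly at the end. Since $P(t_x)$ lies on edge $e$ of the simplification (it corresponds to $e$), the vertex $\approxcurvesimpl^-(x,z)(0)=P(t_x)$ is a point on the segment $e$; I bound the Fréchet distance between $\kappa_z(x,y)$ and $\rho$ by the diameter of the relevant portions of the simplifications near their endpoints, which I control by combining: the endpoint of $\rho$ on $e$ is within $10\Delta$ of $\approxcurvesimpl^+(i,j)(0)$, which is within $4\Delta$ of $P(t_i)$ (since $\approxcurvesimpl^+(i,j)$ is a $(4\Delta,2\ell)$-simplification of $P[t_i,t_j]$ starting at $P(t_i)$), which is within $18\Delta$ of $P(t_x)$; hence both the path-endpoint on $e$ and the true start $P(t_x)$ of $\kappa_z(x,y)$ are within $10\Delta+4\Delta+18\Delta=32\Delta$ of each other \emph{and} everything on the simplification edge between them stays within that ball up to the simplification error — here I use that $e$ is a single segment so the ``detour'' on it is bounded by its endpoint distances plus the $4\Delta$ simplification slack. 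A symmetric bound holds at the $y$-end. Reparametrizing $\kappa_z(x,y)$ so that the tail $[P(t_x),\text{path-endpoint on }e]$ maps to the single point $\approxcurvesimpl^+(i,j)(0)$ and symmetrically at the other end, I get $d_F(\kappa_z(x,y),\approxcurvesimpl^+(i,j))\leq 10\Delta + (\text{tail bound})$. I would organize the arithmetic so the tail bound is at most $36\Delta$, giving the claimed $46\Delta$; if the constants do not quite close, the fix is to absorb the missing slack — the statement only needs $\leq 46\Delta$, so I have room.

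Next the ``No'' case (completeness). Suppose $z\in r_{i,j}$. By definition there are $x\in[x_z,z]$, $y\in[z+1,y_{z+1}]$ with $d_F(\kappa_z(x,y),\approxcurvesimpl^+(i,j))\leq 10\Delta$. I must show the oracle returns ``Yes''. Fix an optimal monotone matching $\gamma$ realizing this $10\Delta$ bound; it is a monotone path in the $10\Delta$-free space of $\kappa_z(x,y)$ versus $\approxcurvesimpl^+(i,j)$. Since $\approxcurvesimpl^-(x,z)$ is a subcurve of $\approxcurvesimpl^-(x_z,z)$ by Lemma~\ref{lem:sigmaminus} and $\approxcurvesimpl^+(z+1,y)$ is a subcurve of $\approxcurvesimpl^+(z+1,y_{z+1})$ by Lemma~\ref{lem:sigmaplus}, the curve $\kappa_z(x,y)$ is a sub curve of $\kappa_z(x_z,y_{z+1})$, so this monotone path sits inside the larger $10\Delta$-free space the oracle actually builds, starting at some point of the edge of $\approxcurvesimpl^-(x_z,z)$ that contains $\approxcurvesimpl^-(x,z)(0)=P(t_x)$ and at horizontal parameter $\approxcurvesimpl^+(i,j)(0)$, and ending at a point of the edge of $\approxcurvesimpl^+(z+1,y_{z+1})$ containing $P(t_y)$ and at horizontal parameter $\approxcurvesimpl^+(i,j)(1)$. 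It remains to check that these two boundary edges are \emph{active}: the start edge contains the breakpoint $x$, and I need $\|P(t_x)-P(t_i)\|\leq 18\Delta$; this follows because the $10\Delta$-matching pairs $P(t_x)=\kappa_z(x,y)(0)$ with $\approxcurvesimpl^+(i,j)(0)$ (the starts must match, Fréchet matchings fix endpoints), so $\|P(t_x)-\approxcurvesimpl^+(i,j)(0)\|\leq 10\Delta$, and $\|\approxcurvesimpl^+(i,j)(0)-P(t_i)\|\leq 4\Delta$ since $\approxcurvesimpl^+(i,j)$ starts on the edge of $P^+_i$ at $P(t_i)$ within the $4\Delta$ simplification ball; total $14\Delta\leq 18\Delta$. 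Symmetrically the end edge contains $y$ with $\|P(t_y)-P(t_j)\|\leq 14\Delta\leq 18\Delta$, so it is active. Thus a qualifying monotone path exists and the oracle, which searches over all active-to-active monotone paths, returns ``Yes''.

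The main obstacle is the ``Yes'' direction: the oracle is allowed to latch onto \emph{some} active edge and \emph{some} monotone path that need not correspond to the specific $(x,y)$ used in the definition of $r_{i,j}$, so I cannot directly invoke the $10\Delta$ bound — I must bound, from the mere facts ``edge $e$ is active'' and ``a monotone path reaches it'', the true Fréchet distance of a reconstructed $\kappa_z(x,y)$ to $\approxcurvesimpl^+(i,j)$, and the looseness there (the $10\Delta$ path bound composed with the $18\Delta$ activeness slack composed with the $4\Delta$ simplification slack, applied at both ends) is exactly what forces the blow-up to $46\Delta$. Keeping the reparametrization argument honest — in particular that collapsing a tail of $\kappa_z(x,y)$ lying on a single simplification edge onto one point costs only the length of that tail, which is itself controlled by the triangle inequalities above plus the $4\Delta$ edge-wise simplification guarantee — is the delicate point, and I would present it as an explicit reparametrization of $\kappa_z(x,y)$ built by concatenating (constant map on tail) $\oplus$ ($\gamma$ restricted to $\rho$) $\oplus$ (constant map on tail).
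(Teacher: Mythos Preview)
Your argument for part~(ii) is essentially the paper's and is correct in outline, but you repeatedly write $\approxcurvesimpl^-(x,z)(0)=P(t_x)$ and assert that $P(t_x)$ lies on the simplification edge $e$. Neither is true in general: $P(t_x)$ is a point on $P$, while $\approxcurvesimpl^-(x,z)(0)$ is the point on the edge of $\approxcurvesimpl^-(x_z,z)$ corresponding to $P(t_x)$; they are only guaranteed to be within $4\Delta$ of each other. The paper absorbs this extra $4\Delta$ in its triangle inequality $\|P(t_x)-P(t_i)\|\le 4\Delta+10\Delta+4\Delta=18\Delta$, which is exactly the threshold defining ``active''. Your $14\Delta$ happens to land below the threshold, but the derivation as written is not correct.

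For part~(i) there is a real gap. Your reparametrization ``(constant map on tail) $\oplus$ ($\gamma$ restricted to $\rho$) $\oplus$ (constant map on tail)'' tacitly assumes that the oracle's path-endpoint $p_e$ on the active edge $e$ lies \emph{after} $\approxcurvesimpl^-(x,z)(0)$ along $e$, so that $\kappa_z(x,y)$ has an initial tail on $e$ which you collapse onto $\approxcurvesimpl^+(i,j)(0)$. But nothing forces this order: $p_e$ may equally well lie \emph{before} $\approxcurvesimpl^-(x,z)(0)$ on $e$, in which case $\kappa_z(x,y)$ is a proper subcurve of $\rho$ and there is no tail of $\kappa_z(x,y)$ to collapse. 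In this situation the oracle's monotone $10\Delta$-path has already climbed to some height $h>0$ on $\approxcurvesimpl^+(i,j)$ by the time it crosses the vertical line through $\approxcurvesimpl^-(x,z)(0)$; to obtain a matching for $\kappa_z(x,y)$ one must instead collapse the prefix $\approxcurvesimpl^+(i,j)[0,h]$ onto the single point $\approxcurvesimpl^-(x,z)(0)$ and bound $\|\approxcurvesimpl^-(x,z)(0)-q\|$ for every $q$ on that prefix. The paper makes exactly this case split: it shows $\|p_e-\approxcurvesimpl^-(x,z)(0)\|\le 10\Delta+26\Delta=36\Delta$, observes that each such $q$ is within $10\Delta$ of some point on the segment of $e$ between $p_e$ and $\approxcurvesimpl^-(x,z)(0)$, and concludes $\|\approxcurvesimpl^-(x,z)(0)-q\|\le 36\Delta+10\Delta=46\Delta$. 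This second case is precisely where the constant $46\Delta$ is attained, so without handling it your argument does not establish the stated bound.
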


 \begin{figure}
    \centering
    \includegraphics[width=0.8\textwidth]{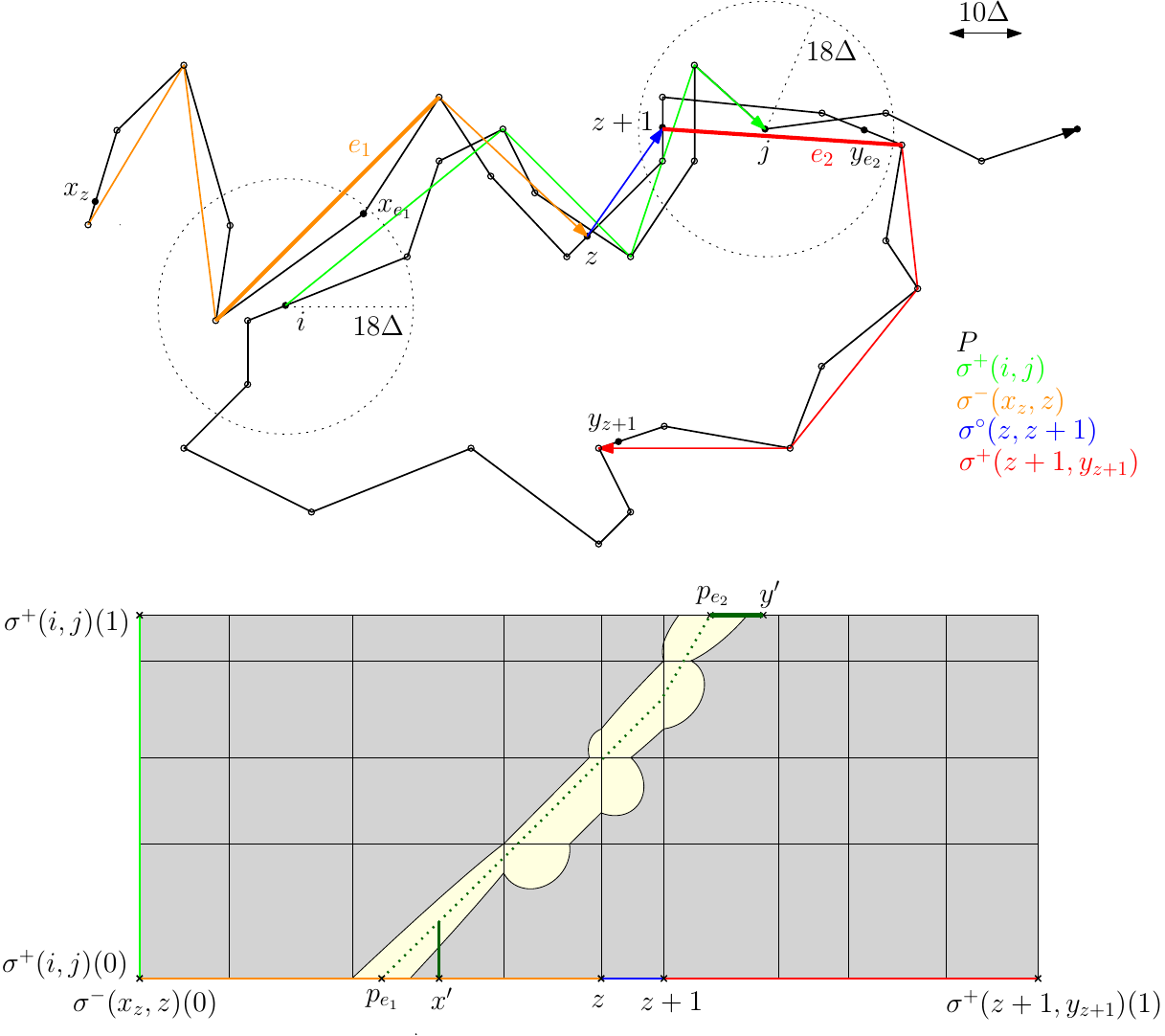}
    \caption{ Example for a curve $P$ and breakpoints $z,i,j$ such that the approximation oracle returns "Yes" for the query $z\in r_{i,j}$. The path from $p_{e_1}$ to $p_{e_2}$ in the $10\Delta$-free space diagram is a monotone increasing path from the active edge $e_1$ to the active edge $e_2$. The edges are active since $d(P(t_i),P(t_{x_{e_1}}))\leq 18\Delta$ and $d(P(t_j),P(t_{y_{e_2}}))\leq 18\Delta$. The path from $x'=\approxcurvesimpl^-(x_{e_1},z)(0)$ to $y'=\approxcurvesimpl^+(z+1,y_{e_2})(1)$ in the free space diagram  gives a parametrization of $\kappa_z(x_{e_1},y_{e_2})$ and $\approxcurvesimpl^+(i,j) $ yielding $d_F(\kappa_z(x_{e_1},y_{e_2}), \approxcurvesimpl^+(i,j)) \leq 46\Delta$ as proven in Lemma \ref{lem:correctapror}.}
    \label{fig:pathfrs}
\end{figure}

\begin{proof}
i) Consider the $10\Delta$-free space diagram of $\kappa_z(x,y)$ and $\approxcurvesimpl^+(i,j) $. If the oracle returns the answer "Yes" then there is a monotone increasing path in the $10\Delta$-free space that starts on an active edge $e$ and ends on an active edge $e'$. 

We show that this path implicitly gives two breakpoints $x_e \in [x_z,z]$ and $y_{e'} \in [z+1,y_{z+1}]$ as well as a monotone increasing path from $\approxcurvesimpl^-(x_e,z)(0)$ to $\approxcurvesimpl^+(z+1,y_{e'})(1)$ in the $46\Delta$-free space of $\kappa_z(x,y)$ and $\approxcurvesimpl^+(i,j) $.

 Let $x_e$ be the first breakpoint corresponding to $e$ such that $d(P(t_{x_e}),P(t_i))\leq 18 \Delta$. Since $e$ is active, $x_e$ has to exist. We distinguish between the cases that the path starts in a point $p_e$ before or after $\approxcurvesimpl^-(x_e,z)(0)$ on $e$:
\begin{itemize}[$\bullet$]
    \item[(I)] The path starts in a point $p_e$ after $\approxcurvesimpl^-(x_e,z)(0)$ on $e$: 

We have
\begin{eqnarray*}
&& d(\approxcurvesimpl^+(i,j)(0),\approxcurvesimpl^-(x_e,z)(0))\\
&\leq& d(\approxcurvesimpl^+(i,j)(0),P(t_i))+d(P(t_i),P(t_{x_e}))+d(P(t_{x_e}),\approxcurvesimpl^-(x_e,z)(0))\\
&\leq& 4\Delta+18\Delta+4\Delta\\
&\leq& 26\Delta
\end{eqnarray*}
The second inequality above follows by the choice of $x_e$ and the fact that $\approxcurvesimpl^+(i,j)$ and $\approxcurvesimpl^-(x_e,z)$ are $(4\Delta,2\ell)$-simplifications of $P[t_i,t_j]$ and $P[t_{x_e},t_z]$.
Since the path starts in a reachable area of the free space diagram we have \[d(\approxcurvesimpl^+(i,j)(0),p_e)\leq 10\Delta\]
Since $p_e$ and $\approxcurvesimpl^-(x_e,z)(0)$ lie on the same edge of $\approxcurvesimpl^-(x_z,z)$ the segment $\overline{p_e,\approxcurvesimpl^-(x_e,z)(0)}$ is a subcurve of $\approxcurvesimpl^-(x_z,z)$. The Fr\'echet distance 
\[d_F(\overline{p_e,\approxcurvesimpl^-(x_e,z)(0)},\approxcurvesimpl^+(i,j)(0))\]
is at most
\[
\max(d(\approxcurvesimpl^+(i,j)(0),\approxcurvesimpl^-(x_e,z)(0)), d(\approxcurvesimpl^+(i,j)(0),p_e))\leq 26\Delta\]
 since the Fr\'echet distance of a line segment and a point is attained at the start or end point of the line segment. The horizontal line segment from the point $(p_e,\approxcurvesimpl^+(i,j)(0))$ to the point $(\approxcurvesimpl^-(x_e,z)(0),\approxcurvesimpl^+(i,j)(0))$ is therefore contained in the  $46\Delta$-free space of $\kappa_z(x,y)$ and $\approxcurvesimpl^+(i,j) $.

\item[(II)] The path starts in a point $p_e$ before $\approxcurvesimpl^-(x_e,z)(0)$ on $e$:

We again have \[d(\approxcurvesimpl^+(i,j)(0),\approxcurvesimpl^-(x_e,z)(0))\leq 26\Delta\]
and  \[d(\approxcurvesimpl^+(i,j)(0),p_e)\leq 10\Delta\]
Therefore we have
\begin{align*}
d(p_e,\approxcurvesimpl^-(x_e,z)(0))&\leq d(p_e,\approxcurvesimpl^+(i,j)(0))+d(\approxcurvesimpl^+(i,j)(0),\approxcurvesimpl^-(x_e,z)(0))\\
&\leq 10 \Delta + 26\Delta\\
&\leq 36\Delta
\end{align*}
The path has to  pass the vertical line  in the free space diagram through $\approxcurvesimpl^-(x_e,z)(0)$ at some height $h$. Note that the path is totally included in the $10\Delta$-free space. So  for each point $p$ on $\approxcurvesimpl^+(i,j)[0,h]$ there is a point $q$ on $\approxcurvesimpl^-(x_z,z)$ between $p_e$ and $\approxcurvesimpl^-(x_e,z)(0)$ such that \[d(p,q)\leq 10 \Delta.\] Because $q$ lies on the same edge of $\approxcurvesimpl^-(x_z,z)$ as $\approxcurvesimpl^-(x_e,z)(0)$ and $p_e$ we have
\[d(\approxcurvesimpl^-(x_e,z)(0),q)\leq d(\approxcurvesimpl^-(x_e,z)(0),p_e)\leq 36\Delta\]
and therefore
\begin{align*}
d(\approxcurvesimpl^-(x_e,z)(0),p)&\leq d(\approxcurvesimpl^-(x_e,z)(0),q)+d(q,p)\\
&\leq 36\Delta+10\Delta\\
&\leq 46\Delta
\end{align*}
So we can replace the path in the $10\Delta$-free space starting at $p_e$ up to height $h$ with a vertical line segment from $(\approxcurvesimpl^-(x_e,z)(0),\approxcurvesimpl^+(i,j)(0))$ up to height $h$. This line segment is then fully contained in the $46\Delta$-free space.
\end{itemize}
By symmetry, we can apply the same arguments for changing the path in the free space diagram, so that the path ends in $\approxcurvesimpl^-(z+1,y)(1)$ for some breakpoint $y$. Therefore we can always find a monotone increasing path from $\approxcurvesimpl^-(x_e,z)(0)$ to $\approxcurvesimpl^+(z+1,y_{e'})(1)$ in the $46\Delta$-free space of $\kappa_z(x,y)$ and $\approxcurvesimpl^+(i,j) $. For an example of such a path see Figure \ref{fig:pathfrs}. The vertical path starting in $x'$ is an example for Case II and the horizontal path from $p_{e_2}$ to $y'$ is an example for Case I (by symmetry for the end of the path).

ii) We prove that the oracle returns the answer "Yes" if $z \in r_{i,j}$:

So let $z\in r_{i,j}$ Then we have $d_F(
\kappa_z(x,y), \approxcurvesimpl^+(i,j) )  \leq 10\Delta$ for some $x_z\leq x\leq z$ and $z+1\leq y\leq y_{z+1}$. Therefore there is a path in the free space diagram from $(\approxcurvesimpl^+(i,j)(0),\approxcurvesimpl^-(x,z)(0)))$ to $((\approxcurvesimpl^+(i,j)(1),\approxcurvesimpl^+(z+1,y)(1)))$. It remains to show that the edges corresponding to $x$ and $y$ are active. This follows by triangle inequality. In particular we have that $d(P(t_i),P(t_{x}))$ is at most
\begin{align*}
d(P(t_i),\approxcurvesimpl^+(i,j)(0))+d(\approxcurvesimpl^+(i,j)(0),\approxcurvesimpl^-(x,z)(0))+d(\approxcurvesimpl^-(x,z)(0),P(t_{x})) 
\end{align*}
and by the above this is at most $18\Delta$, and analogously $d(P(t_j),P(t_{y}))\leq 18\Delta$.
\end{proof}

\paragraph{Running time.} First we analyse the preprocessing time needed to build the data structure for the oracle then we analyse the query time of the oracle.

Since one application of the algorithm of Agarwal et. al. \cite{Agar05} needs $O(n\log(n))$ time and $O(n)$ space, we need $O(mn\log(n))$ time and $O(n+m\ell)$ space to construct the simplifications $\approxcurvesimpl^-(x_z,z)$, $\approxcurvesimpl^\circ(z,z+1)$ and $\approxcurvesimpl^+(z,y_z)$ for every $z \in \{1,\dots,m\}$. To construct the pointers from each edge to the first and last breakpoint on the edge we need additional $O(m+\ell)$ time for each simplification. In total this needs at most $O(mn\log(n)+m^2)$ time and $O(n+m\ell)$ space.

To construct the matrix $M$ with the $O(m^2)$ entries of $z_{i,j}$ we need for each breakpoint $i$ a time of $O(m)$ and a space of $O(m)$ to go through the list of all $m$ breakpoints and save the entries of $z_{i,j}$. So in total we need $O(m^2)$ time and $O(m^2)$ space for all entries. Combined with the time and space requirement for the simplifications  we need $O(m(n\log(n)+m+\ell))$ time and  $O(m\ell+m^2)$ space for the whole preprocessing.

To answer a query the oracle builds a free space diagram of $\approxcurvesimpl^+(i,j)$ and $\kappa_z(x_z,y_{z+1})$. To do that, it needs the simplifications $\approxcurvesimpl^+(i,j)$, $\approxcurvesimpl^-(x_z,z)$, $\approxcurvesimpl^\circ(z,z+1)$ and $\approxcurvesimpl^+(z+1,y_z)$. The simplifications $\approxcurvesimpl^-(x_z,z)$, $\approxcurvesimpl^\circ(z,z+1)$ and $\approxcurvesimpl^+(z+1,y_z)$ were already computed during preprocessing. The simplification $\approxcurvesimpl^+(i,j)$ can be computed in $O(\log(l))$ time with binary searches on $\approxcurvesimpl^+(i,y_i)$ and $\approxcurvesimpl^+(z,y_z)$. With the matrix $M$, it can be checked if an edge of $\approxcurvesimpl^-(x_z,z)$ or $\approxcurvesimpl^+(z+1,y_z)$ is active in $O(1)$ time. Therefore all active edges can be found in $O(\ell)$ time. The construction of the free space diagram of two curves with complexity $O(\ell)$ can then be done with standard methods as described earlier in $O(\ell^2)$ time. Testing the existence of a monotone increasing path from any of the active edges is then done as described above in the paragraph about the data structure.  Note that given $L_{s,t}^R$, $B_{s,t}^R$,$L_{s+1,t}^F$ and $B_{s,t+1}^F$, we can  construct $L_{s+1,t}^R$ and $L_{s,t+1}^R$ in $O(1)$ time. Therefore, given that we know for each edge $e_t$ of $\approxcurvesimpl^-(x_z,z)$ if it is active, we can compute $L_{1,t}^R$ and $B_{1,t}^R$ for all edges $e_t$ in $O(\ell)$ time. So we can compute all $L_{s,t}^R$ and $B_{s,t}^R$ in $O(\ell^2)$ time. Since $\approxcurvesimpl^+(z+1,y_{z+1})$ has at most $2\ell$ edges, the check for each of the active edges $e_t$ of $\approxcurvesimpl^+(z+1,y_{z+1})$ if $B_{s^*+1,t}^R\neq \emptyset$ can then be done in $O(\ell)$ time. This implies that testing if there exists a monotone increasing path with the described properties can be done in $O(\ell^2)$ time. Therefore the total query time is $O(\ell^2)$, as well. These results for the running time imply the following theorem.

\begin{theorem}\label{thm:oracle:l1}
One can build a data structure for the approximation oracle of size $O(m\ell+ m^2)$ in time $O\left(m^2+m n\log(n)\right)$ and  space $O(n+m\ell+m^2)$ that has a query time of $O(\ell^2)$. 
\end{theorem}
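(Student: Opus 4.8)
The plan is to establish the theorem as a bookkeeping corollary of the construction and running-time analysis described above, so the work splits into accounting for three ingredients separately: the family of simplifications $\approxcurvesimpl^-(x_z,z)$, $\approxcurvesimpl^\circ(z,z+1)$, $\approxcurvesimpl^+(z,y_z)$ together with the per-edge pointers to the first and last corresponding breakpoint, the matrix $M$ storing the entries $z_{i,j}$, and the per-query computation that reconstructs $\approxcurvesimpl^+(i,j)$, identifies the active edges, builds the $10\Delta$-free space diagram of $\kappa_z(x_z,y_{z+1})$ and $\approxcurvesimpl^+(i,j)$, and searches for a monotone path between active edges. Correctness of the ``Yes''/``No'' answers is not part of this statement---it is already supplied by Lemma~\ref{lem:correctapror}---so only the time and space accounting remains, and the honest assessment is that the conceptual difficulty sits in that lemma while Theorem~\ref{thm:oracle:l1} is a summation.

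First I would bound the preprocessing. The key point is that $\approxcurvesimpl^-(x_z,z)$, $\approxcurvesimpl^\circ(z,z+1)$ and $\approxcurvesimpl^+(z,y_z)$ are produced by the algorithm of Agarwal et al.~\cite{Agar05} truncated at complexity $2\ell$: each run costs $O(n\log n)$ time and $O(n)$ working space and outputs a curve of complexity $O(\ell)$, there are $O(m)$ of them, and tagging each of the $O(\ell)$ edges of a simplification with its first and last corresponding breakpoint adds $O(m+\ell)$ per simplification, for $O(mn\log n + m^2)$ time and $O(n+m\ell)$ space. For $M$, one pass over the sorted breakpoint list per index $i$, recording the smallest $z_{i,j}\ge j$ with $d(P(t_i),P(t_{z_{i,j}}))\le 18\Delta$, builds all $O(m^2)$ entries in $O(m^2)$ time and space. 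Summing gives preprocessing time $O(m^2+mn\log n)$, working space $O(n+m\ell+m^2)$, and stored size $O(m\ell+m^2)$, since only the $O(m)$ simplifications of complexity $O(\ell)$ and the matrix $M$ are retained.

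Next I would bound the query time for an input $(i,j,z)$. The subtle point I expect to need the most care is confirming that every per-query object has complexity $O(\ell)$ so that the free-space machinery stays within $O(\ell^2)$: $\approxcurvesimpl^+(i,j)$ is recovered from the stored $\approxcurvesimpl^+(i,y_i)$ by a binary search for the edge to which $j$ corresponds followed by a single clip against the ball of radius $4\Delta$ around $P(t_j)$, in $O(\log\ell)$ time; the remaining pieces $\approxcurvesimpl^-(x_z,z)$, $\approxcurvesimpl^\circ(z,z+1)$, $\approxcurvesimpl^+(z+1,y_{z+1})$ of $\kappa_z(x_z,y_{z+1})$ are already stored and each has complexity $O(\ell)$, so both curves entering the $10\Delta$-free space diagram have complexity $O(\ell)$ and the Alt--Godau construction~\cite{AltG95} builds it in $O(\ell^2)$. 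Deciding activeness of each of the $O(\ell)$ candidate edges is an $O(1)$ lookup in $M$ (the test $z_{i,x_e}\ge y_e$), so all active edges are found in $O(\ell)$; initializing the reachable pieces $L^R_{1,t},B^R_{1,t}$ from the active edges of $\approxcurvesimpl^-(x_z,z)$, propagating $L^R_{s,t},B^R_{s,t}$ cell by cell in $O(1)$ each, and finally testing $B^R_{s^*+1,t}\neq\emptyset$ over the active edges of $\approxcurvesimpl^+(z+1,y_{z+1})$ all stay within $O(\ell^2)$. Hence the query time is $O(\ell^2)$, and combining with the preprocessing bounds yields the theorem.
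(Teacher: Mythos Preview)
Your proposal is correct and follows essentially the same approach as the paper: the theorem is indeed a bookkeeping summary of the preceding running-time analysis, and you account for the same three ingredients (the $O(m)$ truncated Agarwal-et-al.\ simplifications with per-edge breakpoint pointers, the matrix $M$, and the per-query free-space computation with active-edge identification and monotone-path propagation) with the same cost breakdown. One small slip: the active-edge test should read $z_{i,x_e}\le y_e$ rather than $\ge$, but this does not affect the time or space bounds you are establishing.
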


\subsection{Applying the framework for computing a set cover}\label{sec:application}
In order to apply Theorem~\ref{thm:framework} directly, we technically need to define a set system based on our data structure. Concretely, we define a new set system that is implicitly given by the approximation oracle.
Let $I(z,(i,j))$ be the output of the approximation oracle for $z \in Z $ and $(i,j)\in T$ with
\begin{align*}
    I(z,(i,j))&=1 \;\;\;\;\text{ if the oracle answers "Yes"}\\
    I(z,(i,j))&=0 \;\;\;\;\text{ if the oracle answers "No" } 
\end{align*}
Let ${\ASpace}_{4}$ be the set system consisting of sets of the form
\[\tilde{r}_{i,j}=\{z \in Z \;|\;I(z,(i,j))=1\}\]
With Theorem~\ref{thm:oracle:l1} we immediately get
\begin{theorem}\label{thm:oracle:l}
One can build a data structure of size $O(m\ell+ m^2)$ in time $O\left(m^2+m n\log(n)\right)$ and  $O(n+m\ell+m^2)$ space that answers for an element of the ground set $Z$ and a set of ${\ASpace}_4$, whether this element is contained in the set in $O(\ell^2)$ time.
\end{theorem}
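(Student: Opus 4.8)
The plan is to observe that Theorem~\ref{thm:oracle:l} is essentially a restatement of Theorem~\ref{thm:oracle:l1} with the set system made explicit, so the proof is a short bookkeeping argument rather than a new construction. First I would note that the approximation oracle of Theorem~\ref{thm:oracle:l1}, given $z \in Z$ and a tuple $(i,j) \in T$, returns a deterministic bit-valued answer ``Yes''/``No'' in $O(\ell^2)$ time using a data structure built in $O(m^2 + mn\log n)$ time, $O(m\ell + m^2)$ space, and of size $O(m\ell + m^2)$. Since this answer depends only on the pair $(z,(i,j))$ and the (fixed) preprocessed curve $P$, the function $I(z,(i,j))$ is well-defined, and hence the collection $\{\tilde r_{i,j} : (i,j) \in T\}$ is a well-defined set system $\ASpace_4$ with ground set $Z = \{1,\dots,m-1\}$. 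By construction $z \in \tilde r_{i,j}$ if and only if $I(z,(i,j)) = 1$, i.e.\ if and only if the oracle answers ``Yes'' on input $(z,(i,j))$.

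The main step is then simply to check that the data structure of Theorem~\ref{thm:oracle:l1} already answers the membership query ``$z \in \tilde r_{i,j}$'' within the claimed bounds: running the oracle on $(z,(i,j))$ and returning its bit is exactly a membership test in $\ASpace_4$, and this takes $O(\ell^2)$ time as stated. The preprocessing time $O(m^2 + mn\log n)$, the space $O(n + m\ell + m^2)$, and the data-structure size $O(m\ell + m^2)$ are inherited verbatim from Theorem~\ref{thm:oracle:l1}; nothing new needs to be built. I would phrase the proof as: ``Immediate from Theorem~\ref{thm:oracle:l1}. The oracle answering $I(z,(i,j))$ is, by definition of $\ASpace_4$, precisely a membership oracle for $\tilde r_{i,j}$, so all stated bounds carry over.''

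I do not expect any real obstacle here; the only thing to be careful about is making sure the query to the oracle in Theorem~\ref{thm:oracle:l1} takes as input exactly the indices $i$, $j$, and $z$ (and not, say, some extra auxiliary data that the framework of Theorem~\ref{thm:framework} would not supply), so that the ``membership oracle'' interface demanded by Theorem~\ref{thm:framework} is genuinely met. This is indeed the case: the query paragraph preceding Theorem~\ref{thm:oracle:l1} only uses $i$, $j$, $z$ together with the precomputed matrix $M$ and the stored simplifications, all of which live in the data structure. Hence the restated theorem follows with no additional cost, and it is this explicit set system $\ASpace_4$ to which Theorem~\ref{thm:framework} will be applied in the next section.
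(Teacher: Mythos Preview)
Your proposal is correct and matches the paper's approach exactly: the paper states the theorem with the preface ``With Theorem~\ref{thm:oracle:l1} we immediately get'' and gives no further argument, since by definition of $\ASpace_4$ the approximation oracle of Theorem~\ref{thm:oracle:l1} is precisely a membership oracle for $\tilde r_{i,j}$. Your extra care about checking the query interface is fine but more than the paper itself provides.
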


Since for all $(i,j)$ we have that $r_{i,j}\subseteq \tilde{r}_{i,j}$ it holds that for each set cover of $\ASpace_3$, there is also a set cover of the same size for ${\ASpace}_4$. Together with Lemma~\ref{lem:hittingpoly:2} this directly implies 
\begin{lemma}\label{lem:hittingpoly:2:approxorac}
If there exists a set cover $S$ of $\RSpace$, then there exists a set cover of the same size for ${\ASpace}_4$.
\end{lemma}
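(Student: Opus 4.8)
The plan is to chain together the two containment facts that have already been established in the excerpt, treating the lemma as a short corollary rather than a standalone argument. Lemma~\ref{lem:hittingpoly:2} already tells us that a set cover $S$ of $\RSpace$ yields a set cover of the same size for $\ASpace_3$: concretely, for each $r_Q \in S$ one exhibits a tuple $(i,j)$ with $r_Q \subseteq r_{i,j}$, and collecting these $r_{i,j}$ gives a cover of the ground set $Z$ of size at most $|S|$. So the only new ingredient needed here is that each $r_{i,j} \in \ASpace_3$ is contained in the corresponding $\tilde r_{i,j} \in \ASpace_4$.

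First I would verify that inclusion $r_{i,j} \subseteq \tilde r_{i,j}$ directly from the definitions. Fix a tuple $(i,j)$ and an element $z \in r_{i,j}$. By the definition of $\ASpace_3$ in Section~\ref{sec:setsystemgeneral}, this means there exist $x \in [x_z,z]$ and $y \in [z+1,y_{z+1}]$ with $d_F(\kappa_z(x,y), \approxcurvesimpl^+(i,j)) \leq 10\Delta$. Now invoke the correctness of the approximation oracle, part (ii) of Lemma~\ref{lem:correctapror}: whenever $z \in r_{i,j}$ the oracle answers ``Yes'', i.e.\ $I(z,(i,j)) = 1$, hence $z \in \tilde r_{i,j}$ by the definition of $\ASpace_4$. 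This holds for every $z$, so $r_{i,j} \subseteq \tilde r_{i,j}$. (One should also note the degenerate case: if $r_{i,j} = \emptyset$ because no $Q \in \XX_l^d$ satisfies $d_F(Q,P[t_i,t_j]) \leq \Delta$, the inclusion is trivial; and by Lemma~\ref{lem:sigma:o} together with the standing assumption that a cover of $\RSpace$ with parameter $\Delta$ exists, the relevant simplifications are non-empty, so $\ASpace_3$ and hence $\ASpace_4$ are well defined.)

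Next I would assemble the chain. Given a set cover $S$ of $\RSpace$, Lemma~\ref{lem:hittingpoly:2} produces a family $S' = \{ r_{i,j} \in \ASpace_3 \mid r_Q \in S \text{ with } r_Q \subseteq r_{i,j} \}$ which covers $Z$ and has $|S'| \leq |S|$. Replacing each $r_{i,j} \in S'$ by $\tilde r_{i,j} \in \ASpace_4$ and using $r_{i,j} \subseteq \tilde r_{i,j}$, the resulting family $\tilde S$ still covers $Z = \bigcup_{r_{i,j} \in S'} r_{i,j} \subseteq \bigcup_{\tilde r_{i,j} \in \tilde S} \tilde r_{i,j}$, and $|\tilde S| \leq |S'| \leq |S|$. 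Hence $\ASpace_4$ admits a set cover of size at most $|S|$, which is what the lemma claims.

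There is essentially no obstacle here — the statement is a bookkeeping consequence of Lemma~\ref{lem:correctapror}(ii) and Lemma~\ref{lem:hittingpoly:2}. The only point that requires a moment of care is making sure the oracle's ``Yes''/``No'' behaviour lines up correctly with the inclusion: we need the direction ``$z \in r_{i,j} \Rightarrow$ oracle says Yes'', which is exactly the contrapositive of Lemma~\ref{lem:correctapror}(ii), rather than the approximate ``Yes'' guarantee in part (i). Since part (i) only certifies a weaker $46\Delta$ bound, it would be a mistake to try to route the argument through it; the clean containment $r_{i,j} \subseteq \tilde r_{i,j}$ must come from part (ii), and the $46\Delta$ slack is instead absorbed later, when one bounds the clustering cost of the centers derived from an $\ASpace_4$-cover (the analogue of Lemma~\ref{lem:hittingpoly:1}).
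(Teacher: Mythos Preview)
Your proposal is correct and follows exactly the same approach as the paper: the paper's own argument is the one-liner that $r_{i,j} \subseteq \tilde r_{i,j}$ for all $(i,j)$ (which you justify via the contrapositive of Lemma~\ref{lem:correctapror}(ii)), so any cover of $\ASpace_3$ is a cover of $\ASpace_4$, and then Lemma~\ref{lem:hittingpoly:2} does the rest. Your write-up is simply a more detailed unpacking of that chain, with the correct observation that the inclusion must come from part (ii) rather than part (i) of Lemma~\ref{lem:correctapror}.
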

For the set system $\ASpace_4$ we further can derive a lemma corresponding to Lemma~\ref{lem:hittingpoly:1} using that for $z\in\tilde{r}_{i,j}$ we have $d_F(
\kappa_z(x,y), \approxcurvesimpl^+(i,j) )  \leq 46\Delta$. The proof is in all other parts completely analogous.
\begin{lemma}\label{lem:hittingpoly:1:approxorac} 
Assume there exists a set cover for $\RSpace$ with parameter $\Delta$.
Let $S$ be a set cover of size $k$ for $\ASpace_4$. We can derive from $S$ a set of $k$ cluster centers $C \subseteq \XX^d_{l}$ and such that $\phi(P,C) \leq 50\Delta$.
\end{lemma}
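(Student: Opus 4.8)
The plan is to follow exactly the same structure as the proof of Lemma~\ref{lem:hittinglines:1}, replacing line-segment shortcut arguments with the triangle-inequality bookkeeping already set up for $\ASpace_3$, and substituting the sharper distance bound $46\Delta$ that the approximation oracle guarantees in place of the exact bound $10\Delta$. Concretely, I would start from a set cover $S$ of size $k$ for $\ASpace_4$ and define $C = \{\approxcurvesimpl^+(i,j) \mid \tilde r_{i,j} \in S\}$, so that $|C| = k$ and $C \subseteq \XX^d_\ell$ (each $\approxcurvesimpl^+(i,j)$ has complexity at most $2\ell$, matching the claim of the lemma).

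Next I would fix an arbitrary $z \in Z$ and the set $\tilde r_{i,j} \in S$ containing it. By the definition of $\tilde r_{i,j}$ via the approximation oracle (i.e.\ $I(z,(i,j))=1$), Lemma~\ref{lem:correctapror}(i) yields indices $x \in [x_z,z]$ and $y \in [z+1,y_{z+1}]$ with $d_F(\kappa_z(x,y), \approxcurvesimpl^+(i,j)) \le 46\Delta$. Then I would reuse verbatim the computation from the proof of Lemma~\ref{lem:hittingpoly:1}: by the triangle inequality,
\[
d_F(\approxcurvesimpl^+(i,j), P[t_x,t_y]) \le d_F(\approxcurvesimpl^+(i,j), \kappa_z(x,y)) + d_F(\kappa_z(x,y), P[t_x,t_y]) \le 46\Delta + 4\Delta = 50\Delta,
\]
where the bound $d_F(\kappa_z(x,y), P[t_x,t_y]) \le 4\Delta$ is exactly the bound established in Lemma~\ref{lem:hittingpoly:1}, using that $\approxcurvesimpl^-(x,z)$, $\approxcurvesimpl^\circ(z,z+1)$, and $\approxcurvesimpl^+(z+1,y)$ are $(4\Delta,2\ell)$-simplifications of the three consecutive subcurves whose concatenation is $P[t_x,t_y]$, and that the Fréchet distance of a concatenation is at most the maximum of the piecewise distances.

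Finally, since $S$ is a set cover, $Z = \bigcup_{(i,j)} \tilde r_{i,j}$ over the chosen sets, so every breakpoint index $z$ is witnessed by some interval $[t_x, t_y]$ (with $x \le z < z+1 \le y$) whose subcurve is within Fréchet distance $50\Delta$ of a center in $C$; these intervals cover $[0,1]$. By the definition of $\phi$, picking $I$ to be this collection of intervals gives $\phi(P,C) \le 50\Delta$. I do not expect a real obstacle here — the statement is deliberately set up so that the only change from Lemma~\ref{lem:hittingpoly:1} is swapping $10\Delta \to 46\Delta$ in the first triangle-inequality term, which is why the excerpt remarks that ``the proof is in all other parts completely analogous''. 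The one point to be careful about is citing Lemma~\ref{lem:correctapror}(i) correctly to extract the $46\Delta$ guarantee (rather than the $10\Delta$ of the underlying $\ASpace_3$), and noting that we may safely assume every relevant $r_{i,j}\neq\emptyset$ since a set cover of $\RSpace$ exists, so $\approxcurvesimpl^\circ(z,z+1)\neq\bot$ for all $z$ by Lemma~\ref{lem:sigma:o} and all the simplifications involved are well-defined.
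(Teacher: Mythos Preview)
Your proposal is correct and matches the paper's intended argument exactly: the paper states that the proof is obtained from that of Lemma~\ref{lem:hittingpoly:1} by replacing the bound $10\Delta$ with the $46\Delta$ guarantee from Lemma~\ref{lem:correctapror}(i), and you carry this out precisely, including the same triangle-inequality split and the $4\Delta$ bound on $d_F(\kappa_z(x,y),P[t_x,t_y])$. Your parenthetical remark that the resulting centers have complexity at most $2\ell$ (so $C\subseteq \XX^d_{2\ell}$, consistent with Theorem~\ref{thm:main}) is also the right observation.
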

So if  we apply Theorem~\ref{thm:framework}  to the set system ${\ASpace}_4$ given by the approximation oracle we merely lose a constant approximation factor for our clustering problem in comparison to the direct application on the set system $\ASpace_3$. This leads to the following result.

\subsection{The result}

\begin{lemma}\label{thm:runtimeline:d} 
Let $k$ be the minimum size of a set cover for ${\ASpace}_4$. There exists an algorithm that computes a set cover for ${\ASpace}_4$ of size $O( k\log^2 (m) )$ with expected running time in 
$\widetilde{O}\left( k \ell^2 m^2 +mn \right)$
 and using space in $O(n+m\ell+m^2)$.
\end{lemma}

\begin{proof}
Note that we must have $k<m$ if such a set $C^*$ exists. Indeed, this is the case  since for each $i\in\{1,\dots,m-1\}$ the subcurve $P[t_i,t_{i+1}]$ has to be covered by only one element of $C^*$. So if we had $k>m-1$ then we would have more center curves in $C^*$ than elements to cover. 
We apply Theorem~\ref{thm:framework} to compute a set cover of $(Z,{\ASpace}_4)$. For Theorem~\ref{thm:framework}, we use Theorem~\ref{thm:oracle:l}, $|Z|= m-1$ and $|{\ASpace}_4|=O(m^2)$. Again, the VC-dimension of the dual set system is bounded by $O(\log m)$.
\end{proof}

\begin{restatable}{theorem}{mainTheorem}\label{thm:main}
Let $\curveP: [0,1] \rightarrow \RR^d$ be a polygonal curve of complexity $n$ with breakpoints $0 \leq t_1,\dots,t_m\leq 1$. Assume there exists a set $C^{*}\subset \XX^d_{\ell}$ of size $k \leq m$, such that $\phi(P,C^{*})\leq \Delta$. Then there exists an algorithm that computes a set $C \subset \XX^d_{\ell}$ of size $O( k\log (m) \log^2 (m))$ such that $\phi(P,C)\leq 50\Delta$. The algorithm has  expected running time in
$\tilde{O}\left(
k \ell^2 m^2 + mn \right)$
and uses space in $O(n+m\ell+m^2)$.
\end{restatable}

\begin{proof}
The theorem follows directly by the combination of Lemma~\ref{thm:runtimeline:d}, Lemma~\ref{lem:hittingpoly:2:approxorac}  and Lemma~\ref{lem:hittingpoly:1:approxorac}.
\end{proof}

\section{Improving the algorithm in  the continuous case}\label{sec:cont}

In the previous sections we considered  the discrete variant of the subtrajectory clustering problem, assuming we are given breakpoints that denote the possible start and end points of subcurves that cover $P$. In the continuous case, we do not restrict the subcurves of $P$ to start and end at breakpoints. Recall that a point of $P$ is covered by a center curve $c$ if there is any subcurve $S$ of $P$ that contains $p$ and is in Fréchet distance at most $\Delta$ to $c$. In the continuous case we do not restrict $S$ to start and end at a breakpoint of $P$. The exact problem statement is given in Section~\ref{sec:def}.

In this section, we present an approximation algorithm that applies the algorithmic ideas developed in the previous sections to the discretization described in Section~\ref{sec:setup:setsystem}. A direct application of Theorem~\ref{thm:main} using Lemma~\ref{lem:main:approx}, however, leads to a high dependency on the arclength of the input curve, see also the discussion in Section~\ref{sec:overview}. We will see that some steps of the algorithm can be simplified for this particular choice of breakpoints, ultimately leading to an improvement in the running time. Again, the crucial step is to choose the set system and the set system oracle wisely.

\subsection{The set system}\label{sec:cont:approx}

We will again use the set system $\ASpace_3$ that was defined in Section~\ref{sec:setsystemgeneral}. Here we choose $m=\lceil \frac{L}{\epsilon \Delta}\rceil$ breakpoints to ensure that two consecutive breakpoints have a distance of at most $\epsilon \Delta$. The explicit choice of breakpoints was already described in Section~\ref{sec:setup:setsystem}. For the construction of the approximation oracle we then can  take advantage of the fact that two consecutive breakpoints are close to each other. This will allow us to achieve better running time results based on the simpler structure of the oracle. A key factor here is the low VC-dimension of the set system that is dual to the set system which is implicitly given by the oracle.

\subsection{The approximation oracle}\label{sec:cont:oracle}
The new approximation oracle will have the following properties. Given a set $r_{i,j} \in \ASpace_3$
and an element $z \in Z$ this approximation oracle returns either one of the answers below:
\begin{compactenum}[(i)]
\item "Yes", in this case there exists $x \in [x_z,z]$ and $ y \in [z+1,y_{z+1}]$ with $d_F(
P[t_x,t_y], \approxcurvesimpl^+(i,j) )  \leq (14+\epsilon)\Delta$ 
\item  "No", in this case $(i,j)\notin r_z$.
\end{compactenum}
In both cases the answer is correct. Furthermore, we say that the new approximation oracle  answers the query in the same way as the approximation oracle introduced in section~\ref{sec:approxorac} and therefore also needs the same data structures as before. There is only one exception.
The oracle does not need to check if any edge is active and only needs to check if there is a monotone increasing path in the $10\Delta$-free space of $\approxcurvesimpl^+(i,j)$ and $\kappa_z(x_z,y_{z+1})$ that starts before or at $z$ and ends after or at $z+1$. So it also does not need to build the data structure for determining active edges. Neither does it have to save the first and last breakpoint on the edge of each simplification. As a direct consequence we get the following running time result for the new approximation oracle.

\begin{theorem}\label{thm:oracle:cont}
One can build a data structure for the approximation oracle of size $O(m\ell)$ in time $O\left(m n\log(n)\right)$ and  space $O(n+m\ell)$ that has a query time of $O(\ell^2)$. 
\end{theorem}
\paragraph{Correctness.} We want to show that the oracle is still correct, even though it does not check for active edges. To do so, we proof the following lemma.
\begin{lemma}\label{lem:oracle:cont:correct}
Let $z,i,j\in\{1,\dots,m\}$. Consider the query $z\in r_{i,j}$. If the approximation oracle returns the answer 
\begin{compactenum}[(i)]
\item "Yes", then there exists $x \in [x_z,z]$ and $ y \in [z+1,y_{z+1}]$ with $d_F(
P[t_x,t_y], \approxcurvesimpl^+(i,j) )  \leq (14+\epsilon)\Delta$
\item  "No", then we have  $z \notin r_{i,j}$.
\end{compactenum}
\label{lem:correctaprorcont}
\end{lemma}

\begin{proof}
(i) If the oracle returns "Yes", then there exists a monotone increasing path in the $10\Delta$-free space of $\approxcurvesimpl^+(i,j)$ and $\kappa_z(x_z,y_{z+1})$ that starts before or at $z$ and ends after or at $z+1$. Let $p$ be the start of the path on $\approxcurvesimpl^-(x_z,z)$. Let $q$ be a point of $P$ that gets mapped to $p$ by a strictly monotone increasing function from $P[t_{x_z},t_z]$ to $\approxcurvesimpl^-(x_z,z)$ that realises the Fréchet distance $d_F(P[t_{x_z},t_z],\approxcurvesimpl^-(x_z,z))$. So the last breakpoint $q_{\epsilon}$ before $q$ has distance at most $\epsilon\Delta$ to $q$. Therefore we have by triangle inequality
\[d(p,q_\epsilon)\leq d(p,q)+d(q,q_\epsilon)\leq (4+ \epsilon)\Delta\]
Since $d(p,q_\epsilon)\leq (4+ \epsilon)\Delta$ and $d(p,q)\leq 4\Delta$, we also have for the line segment $\overline{q_\epsilon q} $ that
\[d_F(p,\overline{q_\epsilon q})\leq (4+ \epsilon)\Delta\]
An analogous argument can be made for the end point $v$ of the path. So let $v$ get mapped to a point $u$ on $P$ by a strictly monotone increasing function from $\approxcurvesimpl^-(x_z,z)$ to $P[t_{x_z},t_z]$ that realises the Fréchet distance $d_F(P[t_{x_z},t_z],\approxcurvesimpl^-(x_z,z))$.  For the first breakpoint $u_\eps$ after $u$, we therefore get
\[d_F(v,\overline{u u_\epsilon})\leq (4+ \epsilon)\Delta\]
Let $\widetilde{\kappa}$ be the subcurve of $\kappa_z(x_z,y_{z+1})$ starting at $p$ and ending at $v$ and $\widetilde{P}$ be the subcurve of $P$ starting at $q$ and ending at $u$. By the definition of $\kappa_z(x_z,y_{z+1})$ as a $(4\Delta,2\ell)$-simplification and the choices of $p,q,u$ and $v$, we get
\[d_F(\widetilde{\kappa},\widetilde{P})\leq 4\Delta\]
So by concetation we can get the curve
\[\widetilde{P}_\epsilon=\overline{q_\epsilon q}\oplus\widetilde{P}\oplus\overline{u u_\epsilon}\]
which is a subcurve of $P$ with
\[d_F(\widetilde{\kappa},\widetilde{P}_\epsilon)\leq (4+ \epsilon)\Delta\]
By the use of triangle inequality, we now get 
\[d_F(\approxcurvesimpl^+(i,j),\widetilde{P}_\epsilon)\leq d_F(\approxcurvesimpl^+(i,j),\widetilde{\kappa})+d_F(\widetilde{\kappa},\widetilde{P}_\epsilon)\leq (14+\epsilon)\Delta\]
(ii) We prove that the oracle returns the answer "Yes" if $z \in r_{i,j}$:

So let $z\in r_{i,j}$ Then we have $d_F(
\kappa_z(x,y), \approxcurvesimpl^+(i,j) )  \leq 10\Delta$ for some $x_z\leq x\leq z$ and $z+1\leq y\leq y_{z+1}$. Therefore there is a path in the free space diagram that starts before or at $z$ and ends after or at $z+1$.
\end{proof}

Now that we have shown that the oracle works correctly, we describe how we can use the oracle to approximate our problem.
Analogous to the approach in the discrete case, we define a set system that is implicitly given by the new approximation oracle. 
Let $\widetilde{I}(z,(i,j))$ be the output of the approximation oracle for $z,i,j\in\{1,\dots,m\}$ with
\begin{align*}
    \widetilde{I}(z,(i,j))&=1 \;\;\;\;\text{ if the oracle answers "Yes"}\\
    \widetilde{I}(z,(i,j))&=0 \;\;\;\;\text{ if the oracle answers "No" } 
\end{align*}
Let ${\ASpace}_{5}$ be the set system consisting of sets of the form
\[\tilde{r}_{i,j}=\{z \in Z \;|\;\widetilde{I}(z,(i,j))=1\}\]
With Theorem~\ref{thm:oracle:cont} we immediately get
\begin{theorem}\label{thm:oracle:cont:2}
One can build a data structure of size $O(m\ell)$ in time $O\left(m n\log(n)\right)$ and  $O(n+m\ell)$ space that answers for a breakpoint $z\in\{1,\dots,m\}$ and a set of ${\ASpace}_5$, whether $z$ is contained in the set in $O(\ell^2)$ time.
\end{theorem}
We can also get the following  results for the set system $\ASpace_5$ in the same way as before.
We use that each range in $\ASpace_3$ is contained in a range of $\ASpace_5$. Together with Lemma~\ref{lem:hittingpoly:2} this directly implies 
\begin{lemma}\label{lem:hittingpoly:2:cont}
If there exists a set cover $S$ of $\RSpace$, then there exists a set cover of the same size for ${\ASpace}_5$.
\end{lemma}
To get the next result, we use that for $z\in\tilde{r}_{i,j}$ we have $d_F(
\kappa_z(x,y), \approxcurvesimpl^+(i,j) )  \leq (14+\epsilon)\Delta$. Imitating the proof of Lemma ~\ref{lem:hittingpoly:1} we then get 
\begin{lemma}\label{lem:hittingpoly:1:cont} 
Assume there exists a set cover for $\RSpace$ with parameter $\Delta$.
Let $S$ be a set cover of size $k$ for $\ASpace_4$. We can derive from $S$ a set of $3k$ cluster centers $C \subseteq \XX^d_{l}$ and such that $\phi(P,C) \leq (18+\epsilon)\Delta$.
\end{lemma}

These results imply that a minium set cover of ${\ASpace}_{5}$ can be used to find an approximate solution for our clustering problem. But to apply Theorem~\ref{thm:frameworkhittingset} for finding a good set cover, we first need to bound the VC-dimension of the dual of ${\ASpace}_{5}$.

\subsection{The VC-dimension}\label{sec:cont:vcdim}

To bound the VC-dimension of the set system ${\ASpace}_{5}$ and its dual set system, we use the methods introduced by Driemel et al. \cite{driemel2019vc}. Leading up to that, we first show analogous to the proof of Lemma 9 in \cite{arxAD17}  that the output $\widetilde{I}(z,(i,j))$ of the approximation oracle  can be determined by the truth value of the following predicates $\pazocal{P}_1,\pazocal{P}_2,\pazocal{P}_3,\pazocal{P}_4$ for  $\approxcurvesimpl^+(i,j)$ and $\kappa_z(x_z,y_{z+1})$. Let $s_1,\dots,s_{\ell_1}$ be the vertices of $\approxcurvesimpl^+(i,j)$ and $q_1,\dots,q_{\ell_2}$ be the vertices of $\kappa_z(x_z,y_{z+1})$. Note that we have $\ell_1=O(\ell)$ and $\ell_2=O(\ell)$. We define

\begin{compactenum}[$\pazocal{P}_1$]
\item  (Vertex-edge (vertical)) Given an edge of $\approxcurvesimpl^+(i,j), \overline{s_j s_{j+1}}$ and a vertex $q_i$ of $\kappa_z(x_z,y_{z+1})$, this predicate returns true iff there exists a point $p\in \overline{s_j s_{j+1}}$, such that $\|p-q_i\|\leq 10\Delta$.

\item (Vertex-edge (horizontal)) Given an edge of $\kappa_z(x_z,y_{z+1}), \overline{q_i q_{i+1}}$ and a vertex $s_j$ of $\approxcurvesimpl^+(i,j)$, this predicate returns true iff there exists a point $p\in \overline{q_i q_{i+1}}$, such that $\|p-s_j\|\leq 10\Delta$.

\item (Monotonicity (vertical)) Given two vertices of $\approxcurvesimpl^+(i,j)$, $s_j$ and $s_t$ with $j<t$ and an edge of $\kappa_z(x_z,y_{z+1})$, $\overline{q_i q_{i+1}}$, this predicate returns true if there exists two points $p_1$ and $p_2$ on the line supporting the directed edge, such that $p_1$ appears before $p_2$ on this line, and such that $\|p_1-s_t\|\leq 10\Delta$ and $\|p_2-s_j\|\leq 10\Delta$.

\item (Monotonicity (horizontal)) Given two vetices of $\kappa_z(x_z,y_{z+1})$, $q_i$ and $q_t$ with $i<t$ and an edge of $\approxcurvesimpl^+(i,j)$, $\overline{s_j s_{j+1}}$, this predicate returns true if there exists two points $p_1$ and $p_2$ on the line supporting the directed edge, such that $p_1$ appears before $p_2$ on this line, and such that $\|p_1-q_t\|\leq 10\Delta$ and $\|p_2-q_i\|\leq 10\Delta$.\\
\end{compactenum}

To show our claim we use the following lemma.
\begin{lemma}\label{lem:vcdim:cont:analog}
Let $s$ and $q$ be two polygonal curves with vertices $s_1,\dots,s_{\ell_1}$ and  $q_1,\dots,q_{\ell_2}$. Let further $1\leq a\leq b\leq \ell_2$. Given the truth value of all predicates $\pazocal{P}_1,\dots,\pazocal{P}_4$, one can determine if there exists a monotone increasing path in the $10\Delta$-free space of $s$ and $q$ that starts in $\overline{q_a q_{a+1}}$ at the bottom of the free space diagram and ends in $\overline{q_b q_{b+1}}$ at the top of the free space diagram. 
\end{lemma}
The proof of Lemma~\ref{lem:vcdim:cont:analog} is analogous to the proof of Lemma 9 in \cite{arxAD17} and much of the argumentation can be applied verbatim. We include the proof here for the sake of completeness, since there are some subtle differences.

\begin{proof}
As in the proof of Lemma 9 in \cite{arxAD17}, we first introduce the notion of valid sequence of cells in the free space diagram. We as well denote the cell corresponding to the edges $\overline{q_i q_{i+1}}$ and $\overline{s_j s_{j+1}}$ with $C_{i,j}$. The definition of a valid sequence, however, changes slightly for our application. We call a sequence of cells $\mathcal{C}=((i_1,j_1),(i_2,j_2),\dots,(i_k,j_k))$ valid if $i_1=a,j_1=1,i_k=b,j_k=\ell_1-1$ and if for any two consecutive cells $(i_m,j_m)$ and $(i_{m+1},j_{m+1})$ it holds that either $i_m=i_{m+1}$ and $j_{m+1}=j_m +1$ or $j_m=j_{m+1}$ and $i_{m+1}=i_m +1$. The only difference to the definition in~\cite{arxAD17} is that we require $i_1=a$ and $i_k=b$. 

In our application we say that a monotone increasing path in the $10\Delta$-free space of $s$ and $q$ is feasible if it starts in $\overline{q_a q_{a+1}}$ at the bottom of the free space diagram and ends in $\overline{q_b q_{b+1}}$ at the top of the free space diagram. It is easy to see that for any valid sequence there exists a feasible path which passes the cells in the order of the sequence. On the other hand, it is also true that for each feasible path there exists a valid sequence such that the path passes the cells in the order of the sequence. In the following, we identify with each sequence of cells $\mathcal{C}$ a set of predicates $\mathcal{P}$. The set of predicates is different from the predicates in~\cite{arxAD17} and consist of the following.
\begin{compactenum}[i)]

\item $(\pazocal{P}_1)_{(i,j)} \in \mathcal{P}$ iff $(i,j-1),(i,j)\in \mathcal{C}$.
\item $(\pazocal{P}_2)_{(i,j)} \in \mathcal{P}$ iff $(i-1,j),(i,j)\in \mathcal{C}$.
\item  $(\pazocal{P}_2)_{(a,1)} \in \mathcal{P}$ and $(\pazocal{P}_2)_{(b,\ell_1)} \in \mathcal{P}$
\item $(\pazocal{P}_3)_{(i,j,k)} \in \mathcal{P}$ iff $(i,j-1),(i,k)\in \mathcal{C}$ and $j<k$.
\item $(\pazocal{P}_3)_{(a,1,k)} \in \mathcal{P}$ iff $(a,k)\in \mathcal{C}$ and $1<k$.
\item $(\pazocal{P}_3)_{(b,j,\ell_1-1)} \in \mathcal{P}$ iff $(b,j)\in \mathcal{C}$ and $j<\ell_1-1$.
\item $(\pazocal{P}_4)_{(i,j,k)} \in \mathcal{P}$ iff $(i-1,j),(k,j)\in \mathcal{C}$ and $i<k$.
\end{compactenum}
As in~\cite{arxAD17}, we say that a valid sequence of cells is feasible if the conjunction of its induced predicates is
true. We claim that any feasible path through the free-space induces a feasible sequence of cells and vice versa. To prove the claim we use the following helper lemma from~\cite{arxAD17}. 

\begin{lemma}[\cite{arxAD17}, Lemma 10]\label{lem:helpvccont}
Let $\mathcal{C}$ be a feasible sequence of cells and consider a monotonicity predicate $\pazocal{P}$ of the set of predicates $\mathcal{P}$ induced by $\mathcal{C}$. Let $a_1$ and $a_2$ be the vertices and let $e$ be the directed edge associated with $\pazocal{P}$. There exist two points $p_1$ and $p_2$ on $e$, such that $p_1$ appears before $p_2$ on $e$, and such that $\|p_1-a_1\|\leq 10\Delta$ and $\|p_2-a_2\|\leq 10\Delta$. 
\end{lemma}

Lemma~\ref{lem:helpvccont} holds for our definition of feasible sequences of cells in the same way as in the original work. For the proof, we refer to \cite{arxAD17}. To continue the proof of Lemma~\ref{lem:vcdim:cont:analog}, we claim that any feasible path induces a feasible sequence of cells and vice versa. Assume there exists a feasible path $\pi$ that passes through the sequence of cells $\mathcal{C}$. The truth value of the predicates $(\pazocal{P}_2)_{(a,1)}$ and $(\pazocal{P}_2)_{(b,\ell_1)}$ follows directly by the starting and ending conditions of a feasible path. The truth value of the other predicates can be derived in the following way (which is exactly the same as in \cite{arxAD17}).

Consider a horizontal vertex-edge predicate $(\pazocal{P}_2)_{(i,j)}$ for consecutive pairs of cells $C_{(i,j-1)}$, $C_{(i,j)}$ in the sequence $\mathcal{C}$. The path $\pi$ is a feasible path that passed through the cell boundary between these two cells. This implies that the there exists a point on the edge  $\overline{q_iq_{i+1}}$ which lies within distance $10\Delta$ to the vertex $s_j$. This implies that the predicate is true. A similar argument can be made for each vertex-edge predicate.

Next, we will discuss the monotonicity predicates. Consider a subsequence of cells of $\mathcal{C}$ that lies in a fixed column $i$ and consider the set of predicates  $\mathcal{P}'\subseteq \mathcal{P}$ that consists of vertical monotonicity
predicates $(\pazocal{P}_3)_{(i,j,k)}$ for fixed $i$. Let $p_j,p_{j+1},\dots,p_k$ be the sequence of points along $q$ that correspond to the vertical coordinates where the path $\pi$ passes through the corresponding cell boundaries
corresponding to vertices  $s_j,s_{j+1},\dots,s_k$. The sequence of points lies on the directed line supporting the edge $\overline{q_i q_{i+1}}$ and the points appear in their order along this line in the sequence due to the monotonicity of $\pi$. Since $\pi$ is a feasible path it lies in the free-space and therefore we have $\|p_{k'}-s_{k'}\|\leq 10\Delta$ for every $j\leq k' \leq k$. This implies that all predicates in $\mathcal{P}$ are true. We can make a similar argument for the horizontal monotonicity predicates $(\pazocal{P}_4)_{(i,j,k)}$ for a fixed row $j$. This shows that a feasible path $\pi$ that passes through the cells of $\mathcal{C}$ implies that the conjunction of induced predicates $\mathcal{P}$ is true.

It remains to show the other direction. Since each cell of the free space is convex, it is clear that the vertex edge predicates give us the existence of a continuous (not necessarily monotone) path $\pi$ that stays inside the free space and connects the edges $\overline{q_a q_{a+1}}$ and $\overline{q_b q_{b+1}}$. To show that there always exists such a path that is also $(x,y)$-monotone we again use the argumentation of \cite{arxAD17}.

Assume for the sake of contradiction that the conjunction of predicates in $\mathcal{P}$ is true, but there
exists no feasible path through the sequence of cells $\mathcal{C}$ . In this case, it must be that either a
horizontal passage or a vertical passage is not possible. Concretely, in the first case, there must be
two vertices $s_j$ and $s_k$ and a directed edge $e = \overline{q_i q_{i+1}}$, such that there exist no two points $p_1$ and $p_2$
on $e$, such that $p_1$ appears before $p_2$ on $e$, and such that $\|p_1 -s_j\| \leq 10\Delta $ and $\|p_2 - s_k\| \leq 10\Delta$. However,
$(P_3)_{(i,j,k)}$ is contained in $\mathcal{P}$ and by Lemma~\ref{lem:helpvccont} two such points $p_1$ and $p_2$ must exist. We obtain a
contradiction. In the second case, the argument is similar. Therefore, a feasible sequences of cells
implies a feasible path, as claimed.
\end{proof}
Lemma~\ref{lem:vcdim:cont:analog} now directly implies the following theorem.
\begin{theorem}\label{thm:pred:orac}
  Given the truth values of all predicates $\pazocal{P}_1,\dots, \pazocal{P}_4$ for two fixed curves $\approxcurvesimpl^+(i,j)$ and $\kappa_z(x_z,y_{z+1})$, one can determine the value of $\widetilde{I}(z,(i,j))$.
\end{theorem}

We use Theorem~\ref{thm:pred:orac} to determine the following bound on the VC-dimension of $(Z,\ASpace_5)$.

\begin{theorem}\label{thm:vc:cont} Let $Z=\{1,\dots,m\}$.
The VC-dimension of $(Z,\ASpace_5)$ and its dual set system are both in $O(d^2\ell^2\log(d\ell))$.
\end{theorem}

The proof of Theorem~\ref{thm:vc:cont} is analogous to the proof of Theorem 27 in \cite{driemel2019vc} and included here for the sake of completeness. For the proof we use VC-dimension bounds for the following set systems.
\begin{definition}
For any two points $s,t\in \RR^d$ and $r\in\RR_+$ define the stadium centered at $\overline{st}$ as
\[D_r(\overline{st})=\{x\in \RR^d \,\vert\, \exists p\in \overline{st},\|p-x\|\leq r\}\]
We further define the monotony sets $M_r(\overline{st})\subseteq \XX^d_2$ as the sets where $\{q_1,q_2\}\in M_r(\overline{st})$ if and only if
\begin{itemize}
    \item $\exists p_1, p_2\in\ell$ where $\overline{st}$ supports $\ell$ such that:
    \item $\|q_1-q_2\|\leq r$ and $\|p_2-q_2\|\leq r$; and
    \item $p_1$ is less than $p_2$ along the line as $\langle p_1,t-s\rangle\leq \langle p_2,t-s\rangle$
\end{itemize}
The resulting set systems are then $(\RR^d,\mathcal{D})$ with $\mathcal{D}=\{D_r(\overline{st})\,\vert\, s,t\in \RR^d, r\in\RR_+\}$ and $(\XX^d_2,\mathcal{M})$ with $\mathcal{M}=\{M_r(\overline{st})\,\vert\, s,t\in \RR^d, r\in\RR_+\}$.
\end{definition}

\begin{theorem}[\cite{driemel2019vc}, Corollary 15]\label{thm:vcdim:stad}
The VC-dimension of $(\RR^d,\mathcal{D})$ and its dual set system are both in $O(d^2)$.
\end{theorem}
\begin{theorem}[\cite{driemel2019vc}, Corollary 26]\label{thm:vcdim:mon}
The VC-dimension of $(\XX^d_2,\mathcal{M})$ and its dual set system are both in $O(d^2)$.
\end{theorem}

With the help of these bounds we can now proof Theorem~\ref{thm:vc:cont}.

\noindent\textit{Proof of Theorem~\ref{thm:vc:cont}.}
Let $S\subseteq \{1,\dots,m\}$ be a set of $t$ breakpoints and let $(i,j)\in\{1,\dots,m\}^2$. Due to 
Theorem~\ref{thm:pred:orac} we have that the set $\{z\in S \,\vert\, \widetilde{I}(z,(i,j))=1\}$ is uniquely defined by the sets
\[
    \bigcup_{k=1}^4\bigcup_{z\in S}\pazocal{P}_k^{10\Delta}(\approxcurvesimpl^+(i,j),\kappa_z(x_z,y_{z+1}))
\]
Using Theorem~\ref{thm:vcdim:stad}, we can bound the number of all possible sets $\bigcup_{z\in S}\pazocal{P}_1^{10\Delta}(\approxcurvesimpl^+(i,j),\kappa_z(x_z,y_{z+1}))$ and the number of all possible sets $\bigcup_{z\in S}\pazocal{P}_2^{10\Delta}(\approxcurvesimpl^+(i,j),\kappa_z(x_z,y_{z+1}))$ both by $(t\ell)^{O(d^2\ell)}$.
Furthermore, the number of all possible sets $\bigcup_{z\in S}\pazocal{P}_3^{10\Delta}(\approxcurvesimpl^+(i,j),\kappa_z(x_z,y_{z+1}))$ and the number of all possible sets $\bigcup_{z\in S}\pazocal{P}_4^{10\Delta}(\approxcurvesimpl^+(i,j),\kappa_z(x_z,y_{z+1}))$ are both bounded by $(t\ell)^{O(d^2\ell^2)}$ by Theorem~\ref{thm:vcdim:mon}. The $\ell^2$ term arises beause we consider $\Theta(\ell^2)$ pairs $s_j,s_t$ for Predicate $\pazocal{P}_3$ ($q_i,q_t$ for Predicate $\pazocal{P}_4$). Hence, we get
\[2^t\leq (t\ell)^{O(d^2\ell^2)}\implies t=O(d^2\ell^2 \log(d\ell)).\]
\qed

\subsection{The result}\label{sec:cont:result}

We apply Theorem~\ref{thm:frameworkhittingset} on the dual of  $(\{1,\dots,m\},\ASpace_5)$ to get the following result for computing a set cover. We use here that $|{\ASpace}_5|=O(m^2)$ and that the resut of Theorem~\ref{thm:vc:cont} that the VC-dimension of ${\ASpace}_5^*$ is in $O(d^2\ell^2\log(d\ell))$.

\begin{lemma}\label{thm:runtime:cont} 
Let $k$ be the minimum size of a set cover for ${\ASpace}_5$. Let further $m=\lceil\frac{L}{\epsilon\Delta}\rceil$  and $\delta= O(d^2\ell^2\log(d\ell)))$,   there exists an algorithm that computes a set cover for ${\ASpace}_5$ of size $O( k\delta \log (\delta k))$ with  expected running time in 
$\widetilde{O}(  k \ell^2  \delta m^2 + mn ) $
 and using space in $O(n+m\ell)$.
\end{lemma}

This lemma finally implies our main result for the clustering problem in the continuous case.

\mainTheoremcont*

\begin{proof}
The theorem follows immediately by the combination of Theorem~\ref{thm:runtime:cont}, Lemma~\ref{lem:hittingpoly:2:cont}  and Lemma~\ref{lem:hittingpoly:1:cont}.
\end{proof}

\section{Additional lower bounds for the VC-dimension}\label{sec:vcdim}

In this section we derive bounds on the VC-dimension of the dual set systems in the discrete and continuous case. 
Consider the set system $\RSpace$ from Section~\ref{sec:setup:setsystem}. The dual set system of $\RSpace$ is the set system $\RSpace^*$ with ground set $\XX^d_\ell$ where each set $r_{z} \in
\RSpace^*$ is defined by a breakpoint $z \in \{1,\dots,m-1\}$ as follows
\[ 
r_{z} = \left\{ Q \in \XX^d_\ell \mid \exists i \leq z < j  \text{ with }  d_F(Q,\curveP[t_i,t_j]) \leq \Delta \right\}
\]
In the continuous case, the dual set system is the set system $\RSpace^{*}_0$ with ground set $\XX^d_{\ell}$ where each set $r_{t} \in
\RSpace^{*}_0$ is defined by a parameter $t\in[0,1]$  as follows
\[
r_{t} = \left\{ Q \in \XX^d_\ell \mid \exists  t' \leq t < t''  \text{ with }  d_F(Q,\curveP[t',t'']) \leq \Delta \right\}
\]

Before deriving bounds on the VC-dimension of $\RSpace^*$ and $\RSpace^*_0$ in the general case, we observe that in the special case where cluster centers are points $\ell=1$, there is a simple upper bound to the VC-dimension. In this chapter, we use the notation 
$b(p,\rho) = \{ q \in \RR^d \mid \|p-q\| \leq \rho \}$ for the Euclidean ball of radius $\rho\geq0$ centered at $p\in \RR^d$.
\begin{lemma}
For $\ell=1$, the VC-dimension of $(\XX^d_\ell,\RSpace^*_0)$ and $(\XX^d_\ell,\RSpace^*)$  are both at most $d+1$.
\end{lemma}
\begin{proof}
We prove the bound for $(\XX^d_\ell,\RSpace^*_0)$ here. The proof works verbatim for $(\XX^d_\ell,\RSpace^*)$.
The ground set of the set system is $\XX^d_{\ell}=\RR^d$. Now, consider a fixed $t \in [0,1]$ and radius $\Delta>0$. We claim that \[r_{t} = b(P(t),\Delta). \]
Indeed, for any $0 \leq c \leq t \leq d \leq 1$, we can write for the set 
\[ R_{[c,d]} = \{ p \in \RR^d \mid d(p,P[c,d]) \leq \Delta \} = \bigcap_{t \in [c,d]} \{ p \in \RR^d \mid \|p - P(t)\| \leq \Delta \} \subseteq b(P(t),\Delta). \] 
Thus, by the definition of $\RSpace^*$,
\[ r_{t} = \bigcup_{0 \leq c \leq t \leq d\leq 1} R_{[c,d]} = b(P(t),\Delta). \]
The claim now follows since the VC-dimension of Euclidean balls in $\RR^d$ is equal to $d+1$. 
\end{proof}

\subsection{Continuous case}\label{sec:vcdim:cont}

We derive a lower bound on the VC-dimension of the dual set system $(\XX^d_\ell,\RSpace^*_0)$ in the general case.

\begin{theorem}\label{thm:vcdim:cont}
For $\ell\geq 2$ and $d\geq 2$, the VC-dimension of $(\XX^d_\ell,\RSpace^*_0)$ is in $\Omega(\log(n))$.
\end{theorem}
 \begin{proof}
 
 \begin{figure}[t]
    \centering
    \includegraphics[width=0.85\textwidth]{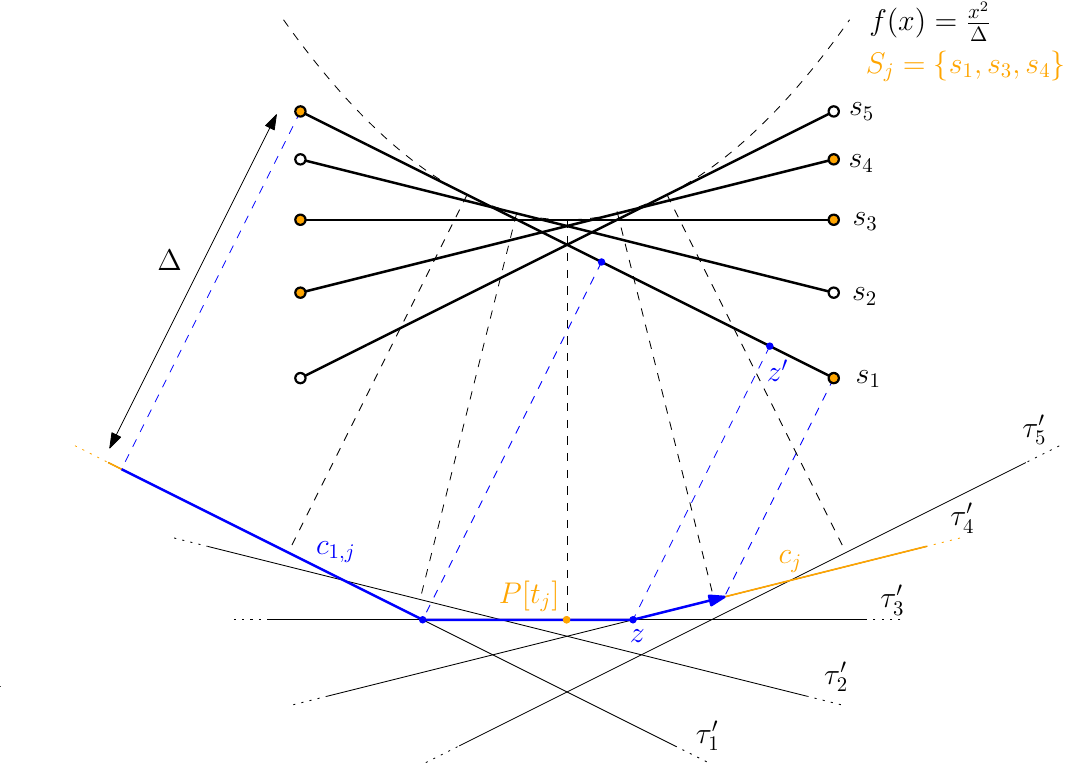}
    \caption{Construction for the case $\ell=2$ such that the VC-dimension of $(\XX^d_\ell,\RSpace^*_0)$ is high. }
    \label{fig:vcdim:lower}
\end{figure}

We show the lower bound for $\ell=2$ and $d=2$; this implies the bound for larger values of $\ell$ and $d$. Let $m\in \NN$. We construct a curve $P$ with at most $O(4^m)$ vertices such that the set system $(\XX^2_{2},\RSpace^*_{0})$ defined on $P$ shatters a set $S\subset \XX^2_2$ of $m$ line segments. 
For the construction of $S=\{s_1,\dots,s_m\}$ we choose line segments that are tangent to the parabola $f(x)=\frac{x^2}{\Delta}$. More specifically, let $\tau_i$ be the tangent that passes through $(x_i,y_i)=(\frac{\Delta(i-1)}{2(m-1)}-\frac{\Delta}{4},\frac{(\frac{\Delta(i-1)}{2(m-1)}-\frac{\Delta}{4})^2}{\Delta})$. Then $s_i$ is the intersection of $\tau_i$ with the rectangle $[-\frac{2\Delta}{3},\frac{2\Delta}{3}]\times [-\frac{\Delta}{2},\frac{\Delta}{2}]$.  The construction is visualized in Figure~\ref{fig:vcdim:lower}. 

 Consider the power set $2^{S}$.  We show that for each subset $S_j \in 2^A$ there exists a curve $c_j \in \XX^d_{m+1}$ such that for each $s_i\in S_j$ there exists a subcurve $c_{i,j}$ of $c_j$ with  $d_F(s_i,c_{i,j})\leq \Delta$ and for each $s_i\in S\setminus S_j$ there exists no subcurve $c_{i,j}$ of $c_j$ with  $d_F(s_i,c_{i,j})\leq \Delta$. The curve $P$ defining the set system instance will later be defined as a concatenation of these curves $c_j$. This will allow us to find a point $t_j$ on $c_j$ such that $r_{t_j}\cap S = S_j$ for each $j$, which then implies that $S$ can be shattered.

The curve $c_j$ can be generated as follows. Let $\tau_i'$ be the line parallel to $\tau_i$ that lies below $\tau_i$ and has distance $\Delta$ to $\tau_i$. For $S_j\in 2^S$ we define with $o_j$ the upper contour set of the lines $\tau_i'$ such that $s_i\in S_j$. We further define $c_j$ to be the intersection of $o_j$ with $[-2\Delta,2\Delta]\times(-\infty,\infty)$.
We observe that for $s_i\in S\setminus S_j$  the intersection of $b((x_i,y_i),\Delta)$ and $c_j$ is empty. Therefore there exists no subcurve $c_{i,j}$ of $c_j$ with $d_F(s_i,c_{i,j})\leq \Delta$. For $s_i=\overline{p_iq_i}\in S_j$ let $l_{p_i}$ (resp. $l_{q_i}$) be the line perpendicular to $s_i$ that contains $p_i$ (resp. $q_i$). We define $c_{i,j}$ to be the subcurve of $c_j$ starting at the intersection of $l_{p_i}$ and $c_j$ and ending at the intersection of $l_{q_i}$ and $c_j$. To show that $d_F(c_{i,j},s_i)\leq \Delta$, we divide $s_i$ into edges by projecting each vertex $z$ of $c_{i,j}$ orthogonal onto $s_i$.
Since the slope of each edge of $c_j$ is between $-\frac{1}{2}$ and $\frac{1}{2}$ and also the slope of $s_i$ is between $-\frac{1}{2}$ and $\frac{1}{2}$, the projected vertices appear in the same order on $s_i$ as the corresponding vertices appear on $c_j$. 

So to conclude that $d_F(c_{i,j},s_i)\leq \Delta$, it remains to show that each vertex $z$ of $c_{i,j}$ has distance at most $\Delta$ to its projection $z'$ on $s_i$. This is enough because the Fr\'echet distance of two edges is attained at the distances of the start points or the end points of the edges. So let $z$ be a vertex of $c_{i,j}$. By construction, $c_{i,j}$ is part of the upper contour set $o_j$. We observe that the rectangle $[-\frac{2\Delta}{3},\frac{2\Delta}{3}]\times [-\frac{\Delta}{2},\frac{\Delta}{2}]$ that contains all line segments $S$ lies in the connected component of $\RR^2\setminus o_j$ that does not contain $\tau_i'$. Therefore the ray starting at $z'$ and containing $\overline{z'z}$ hits $z$ before or at the same time as it hits $\tau_i'$. So we have
    \[d(z',z)\leq d(z',\tau_i')=\Delta\]
Note that the intersection $\bigcap_{i: s_i\in S_j}c_{i,j}$ always contains the intersection of $c_j$ with the vertical axis through $(0,0)$. This is the case because the $x$-coordinate of the start point of each curve $c_{i,j}$ is smaller than $0$ and the $x$-coordinate of the end point of each curve $c_{i,j}$ is greater than $0$.

 Let
\[ \curveP =  \bigoplus_{j=1}^{2^m}  c_j. \]
Since each curve $c_j$ has at most $m+1$ vertices, we get that $P$ has at most $n=m2^m=O(4^m)$ vertices and thus $m$ is in $\Omega(\log_4(n))$.

So, it remains to show that the set $S$ is shattered by $(\XX^2_{2},\RSpace^*_{0})$ defined on $\curveP$. Indeed, for any $S_j \in 2^{S}$, let $t_j\in[0,1]$ be the parameter such that $P[t_j]$ is the intersection of $c_j$ with the vertical axis through $(0,0)$. We claim
\[r_{t_j}\cap S = S_j.\]
Since $P[t_j]\in \bigcap_{i: s_i\in S_j}c_{i,j}$, we get by the analysis above that $S_j\subseteq r_{t_j}\cap S$.

On the other hand, for each $s_i\in S\setminus S_j$ there exists no subcurve $c_{i,j}$ of $c_j$ with  $d_F(s_i,c_{i,j})\leq \Delta$. Note that the start points and end points of $c_j$ are by construction more than $\Delta$ away from any point on $s_i$. Therefore $d_F(s_i,Q)\leq \Delta$ for each subcurve $Q$ of $P$ that contains either the start point or the end point of $c_j$. So in total we get that $s_i\in r_{t_j}\cap S$.
\end{proof}

\subsection{Discrete case}\label{sec:vcdim:discrete}

Now we consider the set system $(\XX^d_\ell,\RSpace^*)$ that is dual to the set system $\RSpace$, which was introduced in Section~\ref{sec:setup:setsystem} to discretize our clustering problem through the addition of breakpoints.

We show that the VC-dimension of $(\XX^d_\ell,\RSpace^*)$  is in $\Theta(\log m)$ in the worst case for any reasonable values of $d$ and $\ell$. Interestingly, our bounds on the VC-dimension are independent of $d$ and $n$. In fact, quite surprisingly, they also hold if $\curveP$ is non-polygonal. The upper bound that the VC-dimension of $\RSpace^*$ is at most $\log(m)$ follows directly from the upper bound on the size of the set system. It remains to show the lower bound.

\begin{theorem}\label{thm:true:vcdim2}\label{thm:vcdim:discrete}
For $d \geq 2$ and $\ell \geq 1$ the VC-dimension of $(\XX^d_\ell,\RSpace^*)$  is in $\Omega(\log m)$ in the worst case.
\end{theorem}

\begin{proof}
We show the lower bound for $\ell=1$ and $d=2$; this implies the bound for larger values of $\ell$ and $d$. To show the lower bound, we need to construct a set $A \subseteq \RR^2$ with $|A| =t$ for $t \in \Omega(\log m)$, and a $\curveP$ with breakpoints $t_1,\dots,t_m$, such that $A$ is shattered by $\RSpace^{*}$ as defined by $\curveP$.

We use the lower bound construction of~\cite{driemel2019vc} for the VC-dimension of the set system of metric balls under the Fréchet distance centered at curves of complexity $t$ on the ground set $\RR^2$. According to this result, we can find a set $A$ of $t$ points in $\RR^2$, such that for every subset $A'\subseteq A$ we can find a curve $\curveP_{A'} \in \XX^{d}_{t}$, such that 
\begin{eqnarray}\label{eq:shattering}
A'= A \cap \{ x \in \RR^2| d_F(x,\curveP_{A'}) \leq \Delta \} 
\end{eqnarray}
We will now construct $\curveP$ as the concatenation of these curves with breakpoints at the start and endpoints of these curves, where to concatenate them we linearly interpolate between the endpoints of consecutive curves. 

In order to show correctness of the resulting construction we observe that the definition of the Fréchet distance can be simplified if one of the curves is a point. Let $x \in \RR^2$ and let $\curveP'=\curveP[t_i,t_j]$, then
\begin{eqnarray}
 d_F(x,\curveP') = \max_{t \in [0,1]} (x,\curveP'(t))
\end{eqnarray}
This implies that for the case $\ell=1$ our set system $\RSpace^{*}$ actually has a simpler structure. In particular, any $r_z \in \RSpace^{*}$ defined by an index $z \in Z$ can be rewritten as follows
\begin{eqnarray}
r_z &=& \left\{ x \in \RR^d \mid \exists i \leq z \leq j \text{ with } d_F(x, \curveP[t_i,t_j]) \leq \Delta \right\} \\
&=& \bigcup_{i \leq z < j } \left\{ x \in \RR^d \mid d_F(x, \curveP[t_i,t_{j}]) \leq \Delta \right\}  \\
&=& \left\{ x \in \RR^d \mid d_F(x, \curveP[t_z,t_{z+1}]) \leq \Delta \right\} 
\end{eqnarray}

Thus, with our choice of $\curveP$ and breakpoints $t_1 < \dots < t_m$, we have that for any $A' \subseteq A$ there exists an index $z$ with $1\leq z < m$, such that $A'=A \cap r_z$ holds as required by (\ref{eq:shattering}).
Finally, the number of breakpoints we used is $m=2^{t+1}$ (two breakpoints for each subset of $A$). Therefore, we have $t \geq \log (m) - 1$.
\end{proof}

\begin{figure}
    \centering
    \includegraphics[width=0.7\textwidth]{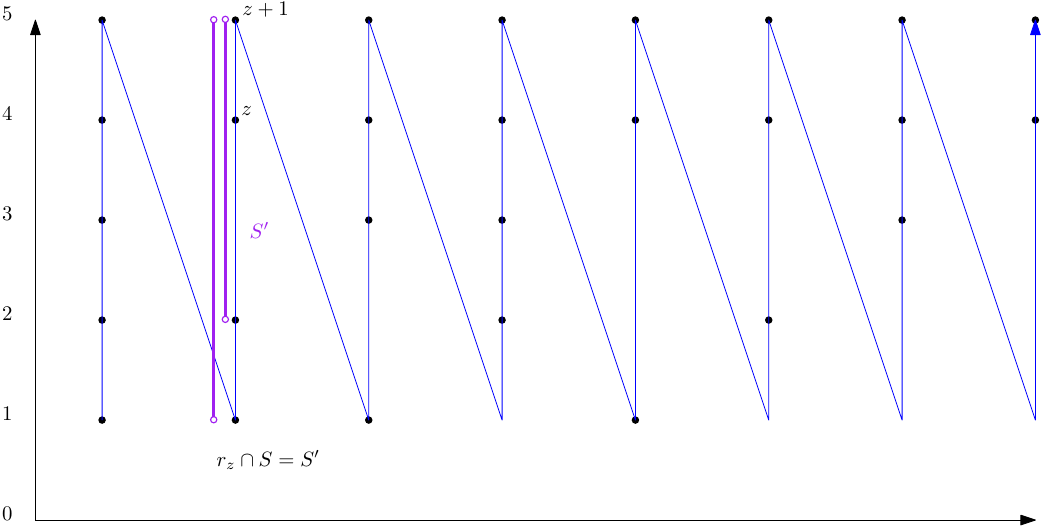}
    \caption{Schematic drawing of $\curveP:[0,1]\rightarrow \RR$ in the construction for the lower bound to the VC-dimension. Parameters of the construction are $\Delta=\frac{1}{3}$, $\ell=2$ and $t=3$. The shattered set of line segments in $\RR$ is $S=\{\overline{1,5}\, \overline{2,5}, \overline{3,5}\}$ with $|S|=t$. The subset encoder segments are shown vertically upwards, the connector segments are shown diagonally downwards. The horizontal axis shows the parametrization of the curve. The figure also shows the subset $S'=\{\overline{1,5}\, \overline{2,5}\}$ and indicates the breakpoint at index $z$, such that $r_z \cap S = S'$.}
    \label{fig:lower_bound_d1}
\end{figure}

\begin{theorem}
For $d\geq 1$ and $\ell \geq 2$ the VC-dimension of $\RSpace^{*}$ is in $\Omega(\log m)$ in the worst case.
\end{theorem}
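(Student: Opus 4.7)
The plan is to handle the case $d=1$ and $\ell=2$; larger $\ell$ follows by repeating vertices, so that any element of $\XX^1_2$ can be realized in $\XX^1_\ell$, and larger $d$ follows by embedding the construction on a coordinate axis of $\RR^d$. Following the schematic in Figure~\ref{fig:lower_bound_d1}, I would fix $\Delta = 1/3$ and take the shattered set $S = \{Q_1,\dots,Q_t\} \subset \XX^1_2$ to be oriented line segments sharing a common right endpoint, $Q_k = \overline{k, N}$ for some large $N > t$. The number of subsets of $S$ is $2^t$, and my aim is to realize each subset $S' \subseteq S$ as $r_{z_{S'}} \cap S = S'$ for a breakpoint $z_{S'}$ of a carefully chosen curve $\curveP$.

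I would build $\curveP: [0,1]\to\RR$ as a concatenation of $2^t$ \emph{blocks}, one per subset $S'$. The block for $S'$ consists of an upward-oriented \emph{encoder segment} starting near value $\min\{k : Q_k \in S'\}$ and terminating at value $N$, with breakpoints placed inside the segment at every integer level $k$ such that $Q_k \in S'$. A downward-oriented \emph{connector segment} then returns $\curveP$ to the starting value of the next block, and I designate $z_{S'}$ to be the breakpoint at the very start of the encoder. For the forward direction, $Q_k \in S' \Rightarrow Q_k \in r_{z_{S'}}$, I would take $i$ to be the breakpoint inside the encoder at value $k$ (or $z_{S'}$ itself when $k = \min S'$) and $j$ the terminal breakpoint of the encoder at value $N$; the subcurve $\curveP[t_i, t_j]$ is then a line segment from $k$ to $N$, of Fréchet distance $0$ to $Q_k$ and straddling $z_{S'}$ by construction.

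The harder reverse direction, $Q_k \notin S' \Rightarrow Q_k \notin r_{z_{S'}}$, is where the main obstacle lies. The argument would rest on two ingredients. First, any subcurve Fréchet-close to the oriented segment $Q_k = \overline{k,N}$ must have its endpoints within $\Delta$ of $k$ and $N$ respectively, restricting the admissible pairs $(i,j)$. Second, any candidate subcurve straddling a connector segment is forced to descend then ascend in value, and because Fréchet distance respects orientation and cannot be decreased by such a detour, this makes its distance to the monotone $Q_k$ exceed $\Delta$. Together these confine every candidate subcurve to a single block containing $z_{S'}$, and inside that block only the values $\{k' : Q_{k'}\in S'\}$ and $N$ occur at breakpoints, so when $k\notin S'$ there is simply no breakpoint available at value close to $k$. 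The main calculation is choosing the connector slope, the baseline to which connectors descend, and the spacing between successive encoders so that this cross-block exclusion is airtight for all $2^t$ blocks simultaneously. Counting, the construction uses $m = \Theta(2^t)$ breakpoints, and so $t \geq \log m - O(1)$, which gives the claimed $\Omega(\log m)$ lower bound on the VC-dimension of $\RSpace^*$.
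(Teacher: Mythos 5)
Your construction follows the paper's blueprint closely: one monotone encoder segment per subset of $\{1,\dots,t\}$, connector segments between blocks, $\Delta=1/3$, shattered set of oriented segments with a common right endpoint, and $m=\Theta(t\,2^t)$ breakpoints. The gap is in where you place $z_{S'}$. The definition of $r_z$ demands a witness subcurve $\curveP[t_i,t_j]$ with $i\leq z<j$. You put $z_{S'}$ at the \emph{bottom} of the encoder, at value $\min S'$. But then for any $k\in S'$ with $k>\min S'$, the breakpoint at value $k$ lies strictly \emph{after} $z_{S'}$ along the monotone encoder, so taking $t_i$ there gives $i>z_{S'}$: the subcurve $\overline{k,N}$ does \emph{not} straddle $z_{S'}$, contrary to what you assert. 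The only subcurves that do straddle $z_{S'}$ either begin at value $\min S'$ (Fréchet distance $\geq|k-\min S'|\geq 1>\Delta$ from $Q_k$, since the Fréchet distance of two aligned segments is the larger of the two endpoint gaps) or begin earlier and cross a connector (excluded by your own cross-block argument). So for any $|S'|\geq 2$ you would get $r_{z_{S'}}\cap S\subsetneq S'$, and the shattering fails.

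The fix is exactly what the paper builds into the construction: every encoder runs the full range from $0$ to $t+2$ and carries two extra breakpoints at values $t+1$ and $t+2$. With $z_{S'}$ taken to be the breakpoint at value $t+1$, the witness for $Q_k$ ($k\in A'$) is $t_i$ at value $k$ and $t_j$ at value $t+2$; since $k\leq t<t+1<t+2$ this satisfies $i<z_{S'}<j$. The reverse direction also becomes mechanical: $j$ is forced to be the terminal breakpoint of that block (any later $j$ sends the subcurve through a connector dipping to $0$, which lies at distance $\geq k-\Delta\geq 2/3>\Delta$ from every point of $Q_k$), and $i$ must then be at a value in $A'$ within $\Delta$ of $k$, forcing $k\in A'$. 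Normalizing all encoders to start at $0$ also dissolves your worry about tuning the connector slope, baseline, and spacing: every connector dips to $0$, so cross-block subcurves are ruled out uniformly with no per-block case analysis. With those two changes — $z_{S'}$ moved to the top of the encoder and encoders normalized to start at $0$ — your argument coincides with the paper's.
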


\begin{proof}
We construct a curve $\curveP$ with breakpoints as follows.
Let $t \in \NN$ be a parameter of the construction. Let $\Delta=\frac{1}{3}$. The curve $\curveP$ is constructed from a series of $2^t$ line segments starting at $0$ and ending at $t+2$ with certain breakpoints along these line segments to be specified later. We call these segments \emph{subset encoder segment}. These line segments are connected by $2^t-1$ line segments starting at $t+2$ and ending at $0$. Those line segments will not contain any breakpoints and we call them \emph{connector segments}. Let $A = \{1,\dots,t\}$ for each subset $A'\subseteq A$ we create one subset encoder segment with breakpoints at the values of $A'$, in addition we put two breakpoints at the values $t+1$ and at $t+2$. 
The curve $\curveP$ is defined by concatenating all $2^t$ subset encoder segments with the connector segments in between. Figure~\ref{fig:lower_bound_d1} shows an example of this construction for $t=3$.
Now, consider the following set of line segments in $\RR$. $S = \{ \overline{s_1 s_2} \mid s_1 \in A, s_2=t+2\}$. We claim that $S$ is shattered by $\RSpace^{*}$ defined on $\curveP$ and $\Delta$. Therefore, the VC-dimension is $t$. The number of breakpoints $m$ we used is upper-bounded by 
$(t+2) 2^{t}$ and therefore $ t \geq \Omega(\log m)$.
\end{proof}

\section{NP-hardness}\label{sec:nphard1}

We note that the problem described in Section~\ref{sec:def} for $\ell=1$ is a specific instance of a $k$-center problem which is NP-complete. 
However, we require that the input set is a connected polygonal curve.
It is tempting to believe that this restriction could make the problem easier.
We show in this section that the problem is still NP-hard.  

Our reduction is from \textsc{Planar-Monotone-3SAT}, with $m_{SAT}$ clauses and $n_{SAT}$ variables.
We show how to construct an instance $B$ of a decision version of our problem given an instance $A$ of \textsc{Planar-Monotone-3SAT}, which is NP-hard~\cite{de2010optimal}. 
We can assume that $A$ is given by a plane rectilinear bipartite graph between variables and clauses where variables are embedded on the x-axis, edges do not cross the x-axis, clauses are adjacent to two or three variables, and are partitioned between  \emph{positive} and \emph{negative} whether they are embedded in the upper or lower half-plane respectively. 
The problem asks whether there exist an assignment from the variables to $\{\texttt{true},\texttt{false}\}$ such that every positive (negative) clause is adjacent to at least one \texttt{true} (\texttt{false}) variable.

\smallskip
\noindent\textbf{Problem definition.}
We define the decision version of our problem as follows.
The instance is defined by a polygonal curve $\curveP$ with breakpoints $0 = t_1 < t_2 < \dots < t_m = 1$ and $\Delta \in \RR$ and $k \in \mathbb{N}$.
The problem asks whether there exist a set $C$ of $k$ points, such that $\phi(P,C) \leq \Delta$. Note that this is equivalent to requiring
\[ \max_{i \in \{1,\dots,m-1\}} \min_{q \in C} \max_{t_i \leq t\leq t_{i+1}} \|\curveP(t)-q\| \leq \Delta \]
A solution $C$ is said to be in \emph{canonical form} if every point in $C$ coincides with a breakpoint, i.e., one of the points $\curveP(t_i)$ for $i \in \{1,\dots,m\}$.

\smallskip
\noindent\textbf{Outline of proof.} We first show how to build an instance $B$ of our problem from $A$. 
We then show that any positive solution $C$ of $B$ can be converted in a positive solution $C'$ in canonical form.
We also show that $C'$ exists if and only if $A$ has a positive solution, which will conclude our proof.
An example of the reduction is shown in Figure~\ref{fig:example}. 
The reduction uses paths formed by unit segments called wires. Figure~\ref{fig:gadgets}~(a) shows circles whose centers represent points in a locally optimal solution.  
Any optimal solution would choose either the red or the blue circles' centers. 
A variable is represented by a cycle as shown in Figure~\ref{fig:gadgets}~(c) formed by $2$ vertical paths and $2$ ``zig-zag'' paths connecting their endpoints. The length of such paths depend on the number of times the variable appears in clauses.
Clauses are represented by a segment whose endpoint is called a clause vertex shown in Figure~\ref{fig:gadgets}~(b) as a star.
It is next to three wires connected to variable gadgets. 
Informally, such segment can be covered by a disk centered at a breakpoint contained in one of the wires if the wire carries a \texttt{true} signal. 

\begin{figure}[h]
    \centering
    \includegraphics[width=.6\textwidth]{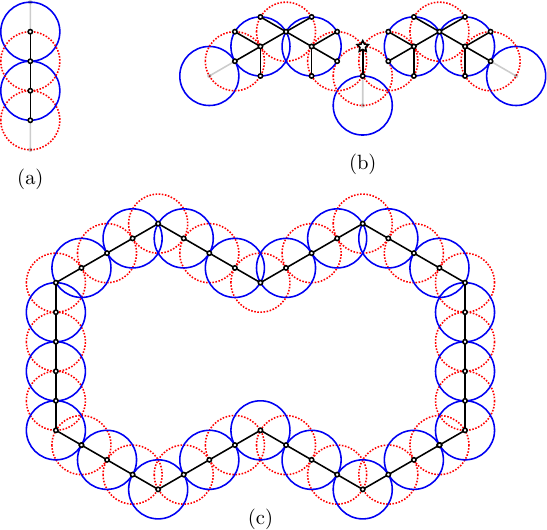}
    \caption{(a) Wire. (b) Clause vertex is shown as a star. (c) Cycle representing a variable with $r=2$.}
    \label{fig:gadgets}
\end{figure}

\smallskip
\noindent\textbf{Construction.}
We modify the embedding in $A$ as follows.
Refer to Figure~\ref{fig:example}.
Replace each variable with a cycle in the hexagonal grid separated by a \emph{separator gadget} shown in Figure~\ref{fig:separator}~(a).
Each cycle contains two vertical edges of length 5 that are all vertically aligned and $4\sqrt{3}r$ apart where $r$ is the maximum number of incidences of the variable in either positive or negative clauses.
In order to close each cycle, connect the upper (resp., lower) endpoints of the vertical edges with a ``zig-zag'' formed by $2r$ edges of length 4 and slopes $1/\sqrt{3}$ and $-1/\sqrt{3}$ (resp., $-1/\sqrt{3}$ and $1/\sqrt{3}$).
We call every even vertex in this upper (resp., lower) ``zig-zag'' path a \emph{positive} (resp., \emph{negative}) \emph{literal vertex}. 
For each clause, the embedding of $A$ allows us to choose two or three literal vertices so that each clause can be connected to literal vertices of their incident variables in a planar way.
For each clause we define three \emph{clause vertices} as follows.
We define positive clauses while negative clauses are defined analogously by reflections.
Let $p_1$, $p_2$, and $p_3$ (if it exists) be the three literal vertices, ordered from left to right, to be connected by the clause, and let $t$ be the smallest distance between them.
The middle clause vertex $c_2$ is above $p_2$ by $t/\sqrt{3}+1$.
Let the left clause vertex $c_1$ (resp., right clause vertex $c_3$)  be $c_2+(-\sqrt{3}/2, -1/2)$ (resp., $c_2+(\sqrt{3}/2, -1/2)$).
Connect $p_2$ to $c_2$ with a vertical edge, and $p_1$ to $c_1$ with a convex with 3 bends as in Figure~\ref{fig:example} so that the length of the vertical edge is 3.
Finally, subdivide each edge into edges of length 1 and at each bend add 6 unit edges as shown in the \emph{turn gadget} in Figure~\ref{fig:separator}~(c).
We obtain an embedding of a graph $G$ containing only unit edges.
We partition the edges of $G$ into two subsets $E_1$ and  $E_2$ as follows. The set $E_1$ contains edges in separator gadgets, the $6$ added edges in each turn gadget and the edge adjacent to $c_2$ for each clause.
The set $E_2$ is the set of remaining edges.
Define $P$ as the path obtained by an Euler tour defined by a DFS of $G$.
Set $\Delta=1$ and place a breakpoint on each vertex of $P$.
Finally, set $k = \frac{|E_2|}{2}+3 (n_{SAT}-1)$.
This finalizes the construction.

\begin{theorem}
\label{thm:hardness}
Let $\curveP: [0,1] \rightarrow \RR^2$ be a polygonal curve of complexity $n$ with breakpoints $0 \leq t_1,\dots,t_m\leq 1$.
It is NP-complete to decide whether there exist a set $C$ of points in $\RR^2$ such that $\phi(P,C) \leq \Delta$ and $|C| \leq k$ for given $\Delta \in \RR$ and $k \in \mathbb{N}$.
\end{theorem}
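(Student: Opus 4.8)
The plan is to establish both membership in NP and NP-hardness; since the construction of the instance $B$ from a \textsc{Planar-Monotone-3SAT} instance $A$ has already been described above, what remains is to prove its correctness.

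For membership in NP, I would first prove a \emph{canonical form lemma}: any feasible solution $C$ (i.e.\ with $\phi(P,C)\le\Delta=1$ and $|C|\le k$) can be transformed into a feasible solution $C'$ of size at most $|C|$ in which every point of $C'$ coincides with a breakpoint of $P$. The crucial geometric observation is that, since every edge of $P$ has length exactly $1=\Delta$, each of its two endpoints is within distance $\Delta$ of every point on the edge; hence an edge is covered by a point of $C$ if and only if that point lies in the lens-shaped intersection of the two unit disks centred at its endpoints, and in particular either endpoint alone covers it. Using this, I would process the points of $C$ one at a time and argue, via the specific geometry of the gadgets (wires, turn gadgets, separator gadgets, and clause vertices embedded in the hexagonal grid), that the set of edges a single point can simultaneously cover is local enough that it can be re-covered by a constant number of breakpoints without increasing the total count — the grid is laid out precisely so that two non-consecutive edges are never both coverable by one point in a way that beats the canonical layout. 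Once the canonical form lemma is in hand, NP membership follows: a solution in canonical form has polynomial size, and checking $\phi(P,C')\le\Delta$ is the polynomial-time evaluation of the nested $\max$-$\min$-$\max$ in the problem definition.

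For NP-hardness I would prove the two directions of ``$A$ is satisfiable $\iff$ $B$ has a feasible solution of size $\le k$''. For the forward direction, given a satisfying assignment, I construct an explicit canonical solution: on each variable cycle I pick the parity of literal vertices corresponding to the truth value of the variable, which covers every edge of the cycle with exactly half as many breakpoints as it has $E_2$-edges; I place the fixed breakpoints dictated by each separator gadget; and for each clause, since at least one incident wire carries the correct signal (\texttt{true} for positive, \texttt{false} for negative clauses), the edge adjacent to $c_2$ and the remaining clause edges in $E_1$ are covered by breakpoints already forced by that wire, at no extra cost. Summing yields size exactly $k=\tfrac{|E_2|}{2}+3(n_{SAT}-1)$. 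For the backward direction, I start from an arbitrary feasible solution, put it in canonical form, and run a tight counting argument: a wire or cycle with $e$ edges needs at least $\lceil e/2\rceil$ breakpoints to be covered (a breakpoint covers at most two consecutive unit edges, an off-vertex point at most one), each separator gadget needs its fixed number of breakpoints, and summed over all gadgets these lower bounds already total $k$. Hence the solution meets every bound with equality, which forces a single consistent parity (a ``signal'') on each variable cycle and leaves no slack to cover a clause edge unless some incident wire carries the satisfying signal. Reading the parities off the cycles yields a satisfying assignment for $A$.

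The main obstacle I expect is the canonical form lemma together with the tight counting in the backward direction: one must rule out ``cheating'' solutions that place points off the breakpoints to cover awkward combinations of edges across gadget boundaries, and one must verify that the turn and separator gadgets genuinely propagate a single consistent signal around each cycle, so that the parity is well-defined and flips correctly between positive and negative literal vertices. These are exactly the places where the precise grid geometry, the unit edge lengths, and the partition into $E_1$ and $E_2$ must be used; the remaining parts (planarity of the clause connections, which is inherited from \textsc{Planar-Monotone-3SAT}, and NP membership) are routine.
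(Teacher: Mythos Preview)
Your hardness argument is essentially the paper's: convert an arbitrary solution to canonical form by a gadget-by-gadget case analysis (separator lunes are disjoint so three points are forced there; turn gadgets are handled by a two-case argument; remaining wire edges are then snapped to breakpoints), and then run the tight count showing each canonical point covers exactly two $E_2$-edges, which forces a consistent parity on every variable cycle and hence a satisfying assignment. The forward direction is also the same explicit construction.

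There is, however, a real gap in your NP-membership argument. Your canonical form lemma relies on properties that hold only for the reduction instance $B$: unit edge lengths equal to $\Delta$, the hexagonal grid layout, and the specific gadget geometry. But NP membership must be established for \emph{arbitrary} instances $(\curveP,\Delta,k)$, where edges need not have length $\Delta$ and there is no reason an optimal $C$ can be moved to breakpoints at all. The paper avoids this by giving a direct verification argument that works in general: given a candidate $C$, compute the $\Delta$-free space diagram of each center against $\curveP$ and check coverage greedily in polynomial time. You should decouple the canonical form lemma (which you still need, but only inside the backward direction of the hardness proof) from NP membership, and argue the latter for general inputs.
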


\begin{figure}[h]
    \centering
    \includegraphics[width=.8\textwidth]{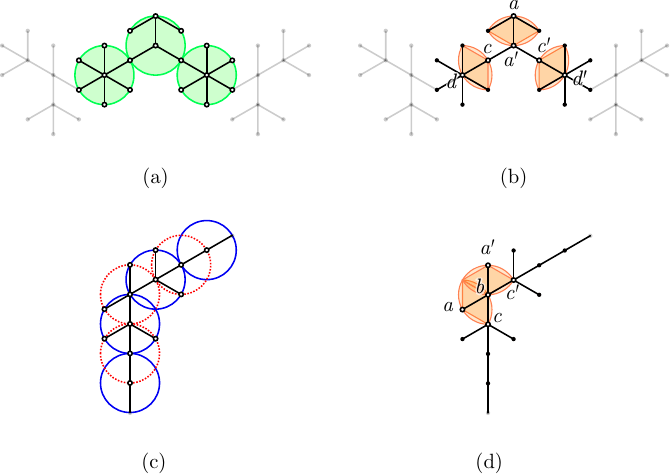}
    \caption{(a) and (b) show the separator gadget in black, and some edges of the adjacent variable gadgets in gray. (c) and (d) show the turn gadget.  Disks indicate potential optimal solutions.}
    \label{fig:separator}
\end{figure}

\begin{figure}[h!]
    \centering
    \includegraphics[width=0.95\textwidth]{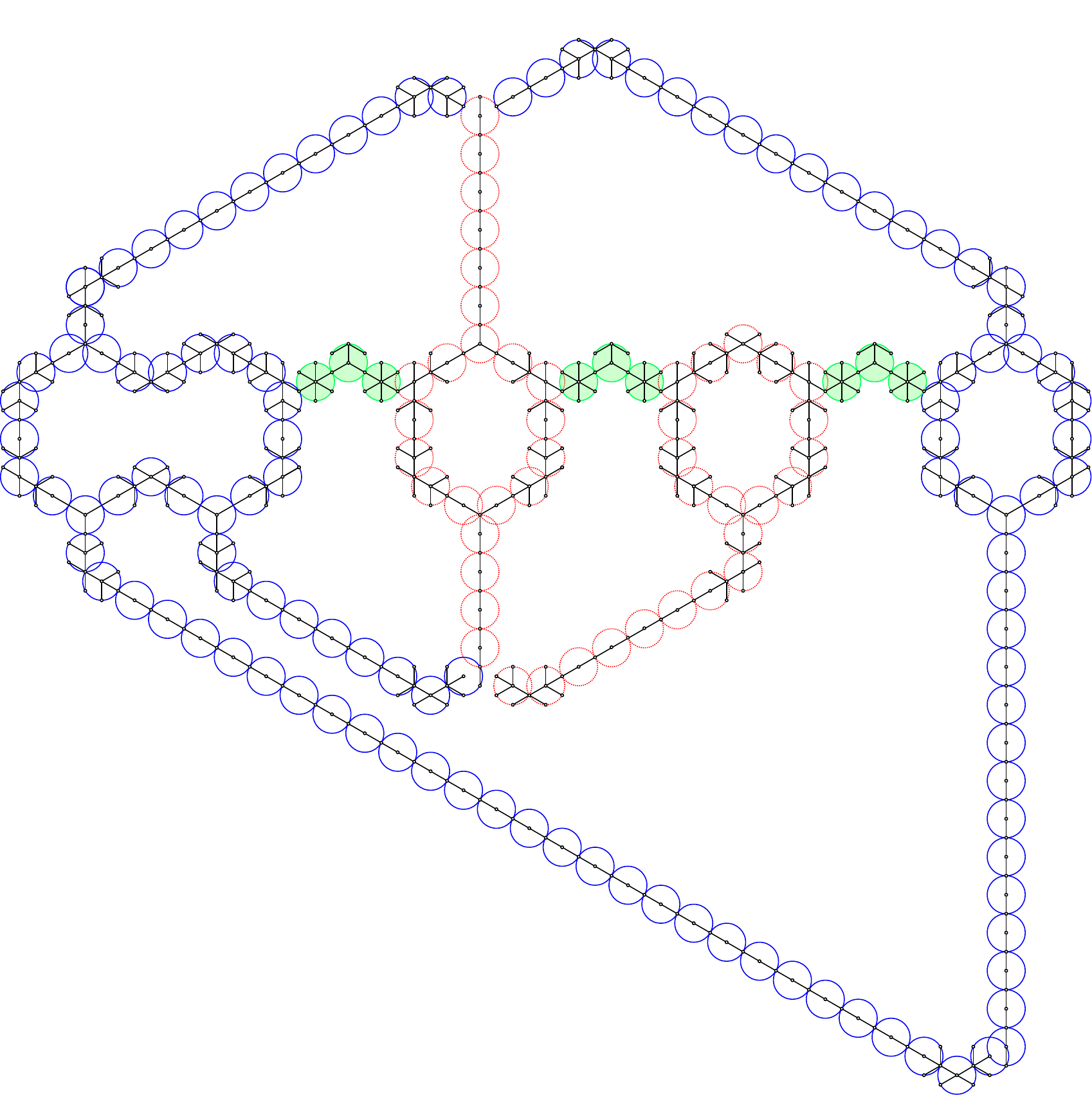}
    \caption{Example of reduction from $(x_1\vee x_2\vee x_4)\wedge(\overline{x_1}\vee \overline{x_2}\vee \overline{x_3})\wedge(\overline{x_1}\vee \overline{x_4})$.
    The centers of the disks are an optimal solution to the instance.}
    \label{fig:example}
\end{figure}

\begin{proof}
\smallskip
\noindent\textbf{{$(\Rightarrow)$}}
We assume that $A$ admits a positive solution, and constructs a positive solution $C$ for $B$ as follows.
For each variable $x_i$, add all the odd (even) points in the spine of the corresponding variable gadget to $C$ if $x_i$ is set to \texttt{true} (\texttt{false}) in $A$'s solution.
Do the same for all wire gadgets and the portion of the clause gadgets corresponding to $x_i$.
For each separator gadget, add the three green points shown in Figure~\ref{fig:separator} to $C$.
This finalizes the construction of $C$.
By construction, $|C|=k$, the variable, wire and separator gadgets are covered by disks centered at $C$. 
Because every positive (negative) clause in $A$ is adjacent to a variable assigned \texttt{true} (\texttt{false}), the clause segment is covered by a disk centered at the spine of a incident wire. 
Then, $C$ is a positive solution for $B$.

\smallskip
\noindent\textbf{$(\Leftarrow)$}
We assume that $B$ admits a positive solution $C$, and constructs a positive solution for $A$.
We first show we can construct a canonical solution $C'$ from $C$ with $|C'|\le|C|$.

Consider the separator gadget in Figure~\ref{fig:separator}~(b). 
The positions of the centers of unit disks that cover segment $aa'$ form a lune defined by the intersection of the unit disks centered at $a$ and $a'$.
The analogous is true for segments $cd$ and $c'd'$.
Such lunes are disjoint, hence $C$ has 3 distinct points, $c_1$, $c_2$ and $c_3$, to cover such segments.
Note that they cannot cover segments outside of the separator gadget.
We can move them to $a$, $d$ and $d'$ so that the set of segments that they cover is either the same or a superset of the previously covered segments.
The following assumes that (i) every separator gadget is covered by three points in $C$ as in Figure~\ref{fig:separator}~(a).

Consider the turn gadget in Figure~\ref{fig:separator}~(d).
Assume that $ab$ and $a'b$ are respectively covered by different points $c_1$ and $c_2$ in $C$.
Then, we can move $c_1$ and $c_2$ to $c$ and $c'$ while covering the same segments and possibly more.
Now, assume that $ab$ and $a'b$ are covered by $c_1\in C$.
Then, $c_1$ is in the intersection of the two lunes shown in Figure~\ref{fig:separator}~(d).
The only segments it can cover are $ab$, $a'b$, $cb$ and $c'b$. Then we can move it to $b$ without decreasing its coverage.
Assuming that turn gadgets are in canonical form, we can apply the moving argument at each remaining segment of $P$ in order to obtain a canonical solution $C'$, moving each point $c\in C$ to a breakpoint.

By (i), $C'$ has $\frac{|E_2|}{2}$ points to cover segments with endpoints at literal vertices. 
Note that each point can cover at most 2 edges in $E_2$.
Then each point must cover exactly 2 edges in $E_2$.
It follows that, for each variable in $A$, $C'$ contains points at either all positive literal vertices and none at negative vertices, or all negative literal vertices and none at positive vertices.
Because the clause segments are covered, the clauses of $A$ are satisfied, and we can obtain a solution for $A$.
That concludes the proof of NP-hardness.

The problem is in NP since verifying whether a given set $C$ is a solution for our problem can be done by computing the $\Delta$-free space diagrams for each curve in $C$ and $P$, and greedily partitioning $P$, verifying whether it is covered. This concludes the proof of the theorem.
\end{proof}

\section{Acknowledgements}
This research was initiated at the Eighth Annual Workshop on Geometry and Graphs, held at the Bellairs Research Institute in Barbados, January 31 – February 7, 2020. The authors are grateful to the organizers and to the participants of this workshop. Anne Driemel acknowledges funding from the DFG (project nr. 313421352).  Erin Chambers acknowledges funding from the National Science Foundation under grants AF-1907612, AF-2106672, and DBI-1759807.

\bibliographystyle{alpha}
\bibliography{bibliography}

\end{document}